\documentclass[12pt]{article}

\usepackage{verbatim,color,amssymb,bbm,bm}
\usepackage{amsmath}					
\usepackage{mathabx}
\usepackage{amsthm}
\usepackage{array}
\usepackage{natbib}
\usepackage{multirow}
\usepackage{breqn}
\usepackage{setspace}
\usepackage[mathscr]{euscript}
\usepackage{fancyhdr}
\usepackage{enumitem}
\usepackage{graphicx,grffile}
\usepackage[letterpaper, left=1.15in,right=1.05in,top=1in, bottom=1in]{geometry}
\usepackage[dvipsnames]{xcolor}
\usepackage{physics,xfrac}
\usepackage[colorlinks,linkcolor=violet,citecolor=blue]{hyperref}
\usepackage{caption,subcaption}
\usepackage[linesnumbered,ruled,vlined]{algorithm2e}
\usepackage{algpseudocode}

\graphicspath{{Figures/}}

\usepackage{multibib}
\newcites{latex}{References}

\usepackage{tabularx, booktabs}
\newcolumntype{Y}{>{\centering\arraybackslash}X}
\usepackage{lscape}

\usepackage{float}
\usepackage{lineno}

\newtheorem{lemma}{Lemma}
\newtheorem{theorem}{Theorem}

\newtheorem{corollary}{Corollary}
\newtheorem*{theorem*}{Theorem}

\def\pP{\mathbb{P}}  
\def\nN{\mathbb{N}}
\def\rR{\mathbb{R}}
\def\eE{\mathbb{E}}

\newcommand{\cspace}[1]{\mathbb{C}\left( #1\right)}
\newcommand{\nullity}[1]{\mathbb{N}\left( #1\right)}

\def\L{{\cal L}}

\renewcommand{\O}{\mathcal{O}}

\def\diag{\mathrm{diag}}

\def\wh{\widehat}
\def\wt{\widetilde}

\def\var{\mathrm{var}}
\def\cov{\mathrm{cov}}

\def\trace{\mathrm{trace}}

\def\vect{\mathrm{vec}}
\def\vmax{\mathrm{VARIMAX}}

\def\DE{\mathrm{DE}}
\def\Dir{\hbox{Dir}}

\def\Exp{\hbox{Exp}}

\def\Ga{\hbox{Ga}}

\newcommand{\mn}{\mathrm{N}}
\def\Poisson{\mathrm{Poisson}}

\def\Unif{\mathrm{Unif}}

\newcommand{\DL}{\mathrm{DL}}

\def\P_25_ICML{{\it Proceedings of the 25th international conference on Machine learning}}

\def\bse{\begin{eqnarray*}}
	\def\ese{\end{eqnarray*}}
\def\be{\begin{eqnarray}}
	\def\ee{\end{eqnarray}}
\def\bq{\begin{equation}}
	\def\eq{\end{equation}}

\def\wh{\widehat}

\def\trans{^{\rm T}}
\def\inv{^{-1}}

\def\sgn{{\mathrm{sign}}}

\def\th{^{th}}

\def\bA{{\mathbf A}}

\def\bB{{\mathbf B}}

\def\bC{{\mathbf C}}

\def\bD{{\mathbf D}}
\def\b1e{{\mathbf e}}
\def\bE{{\mathbf E}}
\def\b1f{{\mathbf f}}
\def\bF{{\mathbf F}}
\def\bG{{\mathbf G}}

\def\bh{{\mathbf h}}
\def\bH{{\mathbf H}}
\def\bI{{\mathbf I}}

\def\bM{{\mathbf M}}

\def\bp{{\mathbf p}}

\def\bQ{{\mathbf Q}}

\def\bR{{\mathbf R}}

\def\bW{{\mathbf W}}
\def\bx{{\mathbf x}}

\def\bY{{\mathbf Y}}
\def\bz{{\mathbf z}}

\def\bzero{{\mathbf 0}}

\def\simind{\stackrel{\mathrm{\scriptsize{ind}}}{\sim}}
\def\simiid{\stackrel{\mathrm{\scriptsize{iid}}}{\sim}}

\newcommand{\boldeta}{ \bm{\eta}} 
\newcommand{\bmu}{\mbox{\boldmath $\mu$}}

\newcommand{\bdelta}{\bm{\delta}}
\newcommand{\bDelta}{\bm{\Delta}}
\newcommand{\bphi}{\bm{\phi}}
\newcommand{\bPhi}{\bm{\Phi}}

\newcommand{\bUpsilon}{\mbox{\boldmath $\Upsilon$}}
\newcommand{\bepsilon}{\mbox{\boldmath $\epsilon$}}
\newcommand{\btheta}{\mbox{\boldmath $\theta$}}
\newcommand{\bTheta}{\bm{\Theta}}

\newcommand{\bzeta}{\mbox{\boldmath $\zeta$}}

\newcommand{\bSigma}{\bm{\Sigma}}

\newcommand{\blambda}{\bm{\lambda}}

\newcommand{\bLambda}{\bm{\Lambda}}
\newcommand{\bLambdas}{\bLambda_{s}}

\newcommand{\bLambdan}{\bLambda_{n}}
\newcommand{\bLambdann}{\bLambda_{0n}}
\newcommand{\bLambdasnn}{\bLambda_{0sn}}
\newcommand{\tLambdas}{\widetilde{\bLambda}_{s}}
\newcommand{\tLambdasnn}{\widetilde{\bLambda}_{0sn}}
\newcommand{\bLambdasn}{\bLambda_{sn}}
\newcommand{\tLambdann}{\widetilde {\bLambda}_{0n}}
\newcommand{\bchisnn}{\mbox{\boldmath \large$\chi$}_{0sn}}
\newcommand{\chisrnn}{{\chi}_{0s,rn}}
\newcommand{\bFsnn}{\bF_{0sn}}
\newcommand{\bFsn}{\bF_{sn}}
\newcommand{\bDeltann}{\bDelta_{0n}}
\newcommand{\bDeletasnn}{\bDelta_{\eta,0sn}}
\newcommand{\bSigetasnn}{\bSigma_{\eta,0sn}}
\newcommand{\bSigetasn}{\bSigma_{\eta,sn}}
\newcommand{\bSigmas}{\bSigma_{s}}
\newcommand{\bSigmasnn}{\bSigma_{0sn}}
\newcommand{\bSigmasn}{\bSigma_{sn}}
\newcommand{\bSigmann}{\bSigma_{0n}}
\newcommand{\bSigman}{\bSigma_{n}}

\newcommand{\Dn}{\mathcal{D}_{n}}
\newcommand{\Pnn}{\mathbb{P}_{0n}}
\newcommand{\Pn}{\mathbb{P}_{n}}
\newcommand{\PPn}{\mathcal{P}_{n}}
\newcommand{\Pnon}{\mathcal{P}_{1,n}}
\newcommand{\Pntw}{\mathcal{P}_{2,n}}

\newcommand{\sigman}{ \sigma_{\delta} }
\newcommand{\btdeltan}{\wt{\boldsymbol{\delta}}_{n} }
\newcommand{\btdeltann}{\wt{\boldsymbol{\delta}}_{0n} }
\newcommand{\bDeltan}{\mbox{\boldmath $\Delta$}_{n}}

\newcommand{\bXisn}{\mbox{\boldmath $\Xi$}_{sn}}

\newcommand{\bpsi}{\mbox{\boldmath $\psi$}}
\newcommand{\bGamma}{\mbox{\boldmath $\Gamma$}}

\newcommand{\half}{\frac{1}{2}}

\newcommand{\de}{\mathrm{d}}

\newcommand{\sn}{s_{n}}
\newcommand{\qn}{q_{n}}
\newcommand{\qnn}{q_{0n}}
\newcommand{\qsn}{q_{sn}}
\newcommand{\qsnn}{q_{0sn}}
\newcommand{\tilqs}{\wt{q}_{s}}
\newcommand{\tilqsnn}{\wt{q}_{0,sn}}
\newcommand{\varphisn}{\varphi_{sn}}
\newcommand{\varphin}{\varphi_{n}}

\newcommand{\dn}{d_{n}}
\newcommand{\cn}{c_{n}}
\newcommand{\en}{e_{n}}
\newcommand{\ten}{{\wt{e}}^{(s)}_{n}}
\newcommand{\an}{a_{n}}

\newcommand{\tn}{t_{n}}
\newcommand{\taun}{\tau_{n}}
\newcommand{\varepsilonn}{\varepsilon_{n}}
\newcommand{\epsilonn}{\epsilon_{n}}
\newcommand{\pin}{\Pi_{n}}
\newcommand{\ns}{n_{s}}
\newcommand{\qs}{q_{s}}

\newcommand{\kl}[2] {\mathbb{KL} \left(#1\parallel #2\right)  }
\newcommand{\lnt}[2] {\ell_{0} \left[#1, #2\right]  }
\newcommand{\klv}[2] {\mathbb{V} \left(#1 \parallel #2\right)  }

\newcommand{\bThetan}{\bTheta_{n}}
\newcommand{\bThetann}{\bTheta_{0n}}

\newcommand{\delmin}{{\delta}_{\min}}

\newcommand{\compstore}{\textbf{Compute} \& \textbf{store}}

\newcommand{\baA}{b_{\bA}}
\newcommand{\bAs}{\bA_{s}}
\newcommand{\bPhis}{\bPhi_{s}}

\newcommand{\bAsn}{\bA_{sn}}
\newcommand{\bAsnn}{\bA_{0sn}}
\newcommand{\tAsnn}{\wt{\bA}_{0sn}}
\newcommand{\bGs}{\bG_{s}}
\newcommand{\bCs}{\bC_{s}}

\newcommand{\fnorm}[1] {  \norm{#1}_F}

\newcommand{\specnorm}[1] {  \norm{#1}_2}

\newcommand{\smin}[1] { s_{\min}\left(#1\right) }

\renewcommand\footnoterule{\kern-3pt \hrule \textwidth 2in \kern 2.6pt}

\captionsetup[subfigure]{labelformat=simple}

\usepackage[normalem]{ulem}

\usepackage{tikz}
\newcommand{\changen}[1]{#1}

\def\colred#1{\textcolor{red}{ #1}}
\def\colredb#1{\textcolor{red}{\bf #1}}

\def\boxit#1{\vbox{\hrule\hbox{\vrule\kern6pt \vbox{\kern6pt \textcolor{blue}{#1}\kern6pt}\kern6pt\vrule}\hrule}}

\def\authorfootnote#1{{\let\thefootnote\relax\footnotetext{#1}}}

\newcommand\undermat[2]{%
	\makebox[0pt][l]{$\smash{\underbrace{\phantom{%
					\begin{matrix}#2\end{matrix}}}_{\text{$#1$}}}$}#2}

\newcommand{\cut}{\colred{cut}}


\usepackage{titlesec}
\titleformat*{\section}{\normalsize\bfseries}
\titleformat*{\subsection}{\normalsize\bfseries}
\titleformat*{\subsubsection}{\normalsize\bfseries}
\titlespacing*{\section} {0pt}{2ex}{2ex}
\allowdisplaybreaks

\pdfminorversion=4

\begin{document}
	\thispagestyle{empty}
	\baselineskip=28pt
	
	\newcommand{\papertitle}{{\Large \bf Inferring Covariance Structure from
			Multiple Data Sources via  
			Subspace Factor Analysis }}
	\newcommand{\authors}{
		Noirrit Kiran Chandra$^{\dagger}$,  \quad 
		David B. Dunson$^\ddagger$, \quad  
		Jason Xu$^{\ddagger}$  \\ 
		\vskip 7mm
		$^{\dagger}$Department of Mathematical Sciences, \\
		The University of Texas at Dallas, 
		Richardson, TX \\ Email: noirrit.chandra@utdallas.edu
		\vskip 9pt
		$^{\ddagger}$Department of Statistical Science, Duke University, Durham, NC
	}

\begin{center}
\papertitle
\vskip10mm
\baselineskip=12pt

\authors
\vskip 20pt 
\end{center}



\vskip 20pt 
\begin{center}
{\Large{\bf Abstract}} 
\end{center}
\baselineskip=12pt

Factor analysis provides a canonical framework for imposing lower-dimensional structure such as sparse covariance in high-dimensional data. High-dimensional data on the same set of variables are often collected under different conditions, for instance in reproducing studies across research groups. In such cases, it is natural to seek to learn the shared versus condition-specific structure. 
Existing hierarchical extensions of factor analysis have been proposed, but face practical issues including identifiability problems. To address these shortcomings, we propose a class of SUbspace Factor Analysis (SUFA) models, which characterize variation across groups at the level of a lower-dimensional subspace. We prove that the proposed class of SUFA models lead to identifiability of the shared versus group-specific components of the covariance, and study their posterior contraction properties. Taking a Bayesian approach, these contributions are developed alongside efficient posterior computation algorithms. 
Our sampler fully integrates out latent variables, is easily parallelizable and has complexity that does not depend on sample size.
We illustrate the methods through application to integration of multiple gene expression datasets relevant to immunology.

\vskip 20pt 
\noindent\emph{ Keywords}: 
Data-augmented Markov chain Monte Carlo, Data Integration, Gradient-Based Sampling, Latent Variable Models, Multi-Study Factor Analysis.

\baselineskip=12pt




\pagenumbering{arabic}
\setcounter{page}{0}
\newlength{\gnat}
\setlength{\gnat}{25.5pt} 
\baselineskip=\gnat

\section{Introduction}
\label{sec:intro}
With increasing calls for reproducibility and transparency in science, it has become standard  to make datasets widely available to the scientific community.  
This motivates interest in aggregating different but related datasets together. 
A prominent example arises in gene network analyses where it is of interest to merge datasets from studies that consider a common set of genes. 
Combining datasets increases sample size in studying covariance structure among the genes. To conduct principled inference, it is important to account for differences across studies, and not simply pool the data. 
This article focuses on developing statistical methods for inferring common versus study-specific covariance structures, with a particular motivation to multi-study applications in the high-dimensional setting.
These methods immediately apply to data from a single study featuring hierarchical, multi-group structure as well.

Consider the analysis of  gene expression levels in immune cells.
Integrating data from multiple studies serves to (1) increase statistical precision in making inferences on covariance structure between genes; 
(2) yield results that are more robust to study-to-study variability and hence more generalizable; and 
(3) obtain insight into shared versus study-specific contributors to the covariance.
We focus on datasets from the Immunological Genome Project  
\citep{immgen_paper},   
comprising microarray assays as well as bulk RNA sequencing data as
shown in Figure \ref{fig:heatmap_originaldata} in the supplementary materials.
Substantial similarity between the datasets can be observed albeit with significant amount of heterogeneity.
Such heterogeneity is typical and arises due to  differences between the subject populations,
and variation in data collection technologies inducing platform-specific effects in the respective datasets.

Given our interest in the covariance structure of high-dimensional data, it is natural to consider Bayesian sparse factor analysis (FA) models, which achieve state-of-the-art performance in the single-study case \citep{FAN_factor,knowles2011}. 
Even when different studies collect the same variables for closely-related populations, applying separate factor analyses to each dataset can lead to very different inferred structures. 
There are recent approaches extending factor analysis to the multi-study setting---notably, 
\citet{devito2019multi,devito2018bayesian} proposed a conceptually simple multi-study FA model.
This appealing model includes shared and study-specific components via an additive expansion but is known to  face identifiability issues, discussed further below.  
 \cite{roy_pfa2020} proposed a multiplicative `perturbed' FA model that focuses on inferring the shared structure while making use of subject-specific perturbations to resolve identifiability issues in post-processing steps.

Motivated by identifiability issues arising in multi-study factor analysis in the high-dimensional setting, we propose  a novel shared \textit{SUbspace Factor Analysis (SUFA)} model.
In particular, SUFA assumes that there is a common lower-dimensional subspace shared across the {multiple} studies under consideration,
providing a mechanism for borrowing of information. 
Exploiting this subspace structure, we propose \textit{almost sure} solutions to identify the study-specific variations from the shared covariance structure across all studies.\label{pg:intro_mod}
In related work, \citep{subspace_hoff} focus on identifying the \textit{best} shared subspace. 
In contrast, we focus on learning  shared versus study-specific contributions to the covariance structure.  
Our factor analytic approach allows one to infer latent factors jointly with the their loadings, providing valuable interpretations in  applications including gene expression studies \citep{factor_analysis_gene}.

We focus on identifiability up to the usual rotational ambiguity encountered in factor modeling. Following standard practice in the Bayesian factor modeling literature, we rotationally align the factors across MCMC iterations in a post-processing stage.
By using a fast algorithm recently introduced by  \citet{poworoznek2021}, we maintain sparsity in the post-processed samples of the loadings by leveraging a key sparsity-inducing property of the varimax rotation 
\citep{kaiser_varimax,rohe2020vintage}, which may improve interpretability. For example, a subset of genes that loads onto a particular factor is commonly interpreted as belonging to the same pathway. 

Moreover, we derive favorable posterior contraction rates for subspace factor models in a high-dimensional setting.
Our analysis shows that the shared study-specific covariance structures can be separately recovered with added precision for each additional study,
even when the marginal distributions of data differ substantially across studies.
While the classical factor analysis literature relies heavily on choosing the number of latent factors correctly,
we show that our covariance matrix estimation is robust  to this choice.
These contributions are developed  alongside computational innovations to enable efficient posterior inference. Latent variable-based Gibbs sampling \citep{bhattacharya2011sparse,sabnis2016divide} or expectation-maximization (EM) algorithms \citep{rovckova2016fast} set the standards for Bayesian factor models.
However, the conditional updating can become a computational bottleneck for massive sample sizes, further exacerbated in multi-study settings. 
\citet{dutta2020fa} proposed a fast matrix-free approach for exploratory factor analysis which does not carry over to Bayesian contexts. 
We develop a scalable  Hamiltonian Monte Carlo \citep[HMC,][]{neal_hmc} sampler that jointly updates parameters while marginalizing out latent factors. 
This leads to significant improvements in mixing relative to Gibbs sampling. 
The proposed sampler depends only on the study-specific sample covariance matrices, which can be computed and cached prior to running MCMC.
The computational complexity is  essentially invariant of the sample size, and amenable to a distributed computing framework parallelizing each MCMC step. 

The remainder of the article is organized as follows: 
Section \ref{sec:msfa} describes the SUFA model, and 
\ref{subsec:identifiability_issues} discusses and resolves the potential identifiability issues common to multi-study factor models.
Section \ref{subsec:specify_q} discusses our choice of prior and other important specifications. 
Section \ref{sec:theory} establishes posterior contraction rates in high-dimensional settings. 
These \changen{methodological developments} are made practical in
Section \ref{subsec:posterior_computation}, where we propose a scalable HMC algorithm for posterior sampling.
Section \ref{sec:simstudies} compares SUFA with existing approaches via a suite of simulation studies, and the methods are applied in an integrative gene network analysis case study in Section \ref{sec:application}.
We close by discussing our contributions and future directions in Section \ref{sec: discussion}. 

\section{Bayesian SUbspace Factor Analysis (SUFA)}
\label{sec:msfa}

In this article, we consider the setting where data consist of $S$ studies each comprising $d$-variate observations $\bY_{s,i}=(Y_{s,i,1},\dots, Y_{s,i,d})\trans$, $i=1,\dots,\ns$, $s=1,\dots,S$ measured on the same set of features with
$\bY_{s,i}\simiid \mn(\bmu_{s},\bSigma_{s})$.
Without loss of generality, we assume $\bmu_{s}=\bzero$ following standard practice to center the data. 
In our motivating application, we seek to jointly learn from  two microarray datasets and a bulkRNASeq dataset: we have $S=3$ studies each analyzing $d=474$ genes. In studying the correlation structure between genes,
we expect a common fundamental association between features across the studies, along with study-specific variations.
Hence, we let $\bSigma_{s}= \bSigma+ \bGamma_{s}$ where 
$\bSigma$ is a positive definite matrix quantifying the shared structure, and $\bGamma_{s}$ accounts for the respective study-specific dependencies. 

Bayesian factor analysis often yields state-of-the-art performance in single-study high-dimensional covariance estimation \citep{bhattacharya2011sparse, pati2014, rovckova2016fast}. Adopting a usual factor-analytic covariance factorization as the sum of a low rank and diagonal matrix, 
we let 
$\bSigma=\bLambda\bLambda\trans+\bDelta$,
where $\bLambda$ is a $d\times q$ factor loading matrix, with $q\ll d$ the number of latent factors,
and $\bDelta=\diag(\delta_{1}^{2},\dots, \delta_{d}^{2})$ is a diagonal matrix of idiosyncratic variances.
Evidence in the literature suggests that  expression-level dependent error variances tend to produce better results \citep{kepler2002normalization}, so we do not assume $\bDelta = \sigma^2 {\bf I}_d$.

To model the study-specific deviations, 
we let 
$\bGamma_s = \bLambda\bAs\bAs\trans\bLambda\trans$, where $\bAs$ is a $q\times \qs$  matrix.
This leads to the following expression for the covariance specific to study $s$:
$\bSigma_{s}=\bLambda\bLambda\trans+ \bLambda\bAs\bAs\trans\bLambda\trans+\bDelta.$
This can equivalently be written 
\vskip-8ex
\begin{equation}
\bY_{s,i}= \bLambda \boldeta_{s,i}+ \bLambda \bA_{s} \bzeta_{s,i}+ \bepsilon_{s,i},
~ \boldeta_{s,i}\simiid \mn_{q}(\bzero,\bI_{q}), 
~ \bzeta_{s,i}\simiid \mn_{\qs}(\bzero,\bI_{\qs}),
~ \bepsilon_{s,i} \simiid \mn_{d}(\bzero,\bDelta),	\label{eq:ourmodel}
\end{equation} 
\vskip-3ex
\noindent where $\boldeta_{s,i}$ is a $q$ dimensional latent factor in the shared subspace, 
$\bzeta_{s,i}$ are study-specific latent factors of dimension $q_s$,
and $\bepsilon_{s,i}$ are mean zero Gaussian error terms. 
Because $\boldeta_{s,i}$ are supported on the same subspace, our hierarchical model allows borrowing of information across studies in
estimating $\bSigma=\bLambda \bLambda\trans +\bDelta$. This is formalized in Section \ref{sec:theory}. 
Model \eqref{eq:ourmodel} implies the following marginal distributions 
\vskip-4ex
\begin{equation}
\bY_{s,i}\simiid \mn_{d}(\bzero, \bLambda\bLambda\trans+ \bLambda\bA_{s}\bA_{s}\trans\bLambda\trans+\bDelta)\text{ for }s=1,\dots,S.\label{eq:ourmodel_marginal}
\end{equation}
\vskip-2ex
Typically we take $\qs <q$, as
integrative analyses of multiple studies 
are meaningful when the signals are mostly shared across studies.
Imposing that $\qs <q$ allows $\bLambda$ to be the dominant term explaining the variation across the studies. 
The following sections make this intuition rigorous, and show how it ensures identifiability.

\vspace*{-3ex}
\paragraph*{Related works:}
\label{subsec:comparison_with_lit}
Under the same data organization,   \citet{devito2019multi, devito2018bayesian} define the multi-study factor analysis (MSFA) model:
\vskip-8ex
\begin{equation}
\bY_{s,i}= \bLambda \boldeta_{s,i}+ \bPhi_{s}\bzeta_{s,i}+ \bepsilon_{s,i},
~~ \boldeta_{s,i}\simiid \mn_{q}(\bzero,\bI_{q}), 
~~ \bzeta_{s,i}\simiid \mn_{\qs}(\bzero,\bI_{\qs}),
~~ \bepsilon_{s,i} \simiid \mn_{d}(\bzero,\bDelta_{s}),	\label{eq:MSFA_model}
\end{equation} 
\vskip-3ex
\noindent \label{pg:MSFA_diffs} where $\bLambda$ is the $d\times q$ shared factor loading matrix and $\bPhi_{s}$'s are the $d\times \qs$ study-specific loading matrices, with the remaining notation matching ours above. 
Their formulation is clearly related to ours---in fact, they are strictly more general in that \eqref{eq:ourmodel} can be viewed as the case where the study-specific loadings take the form $\bPhis = \bLambda \bAs$.
Indeed, our work is motivated by observing that \eqref{eq:MSFA_model} is \textit{too flexible}; in particular, allowing  $\bPhi_{s}$ to be arbitrary produces critical identifiability issues down the road.
To see this, consider its  implied marginal distribution $\bY_{s,i}\simiid \mn_{d}(\bzero, \bLambda\bLambda\trans + \bPhis\bPhis\trans +\bDelta_{s})$ \changen{which is equivalent to $\bY_{s,i}\simiid \mn_{d}(\bzero, \bzero\bzero\trans+ \wt{\bPhi}_{s}\wt{\bPhi}_{s}\trans +\bDelta_{s})$, where $\wt{\bPhi}_{s}=\begin{bmatrix}
		\bLambda & \bPhis
	\end{bmatrix}$.
	That is, the posterior distribution of the overly flexible Bayesian MSFA with multiplicative gamma process priors on $\bLambda$ and $\bPhi_{s}$'s can concentrate on the latter model with positive probability raising 
	critical identifiability issues between  $\bLambda$ and $\bPhis$'s.
}\label{pg:msfa_identifiable}  
In contrast, \eqref{eq:ourmodel_marginal} is more restrictive in not allowing $\bDelta$ to vary per study, and with $\bLambda$ parametrizing both the shared and study-specific terms. 
Our  contributions in Section \ref{subsec:identifiability_issues} make precise how our proposed, more parsimonious approach avoids these pitfalls. 
The combinatorial MSFA model  \citep{grabski2020bayesian} introduces notions of \textit{shared}, \textit{partially shared} and \textit{study-specific} factors by identifying latent factors. 

Perturbed factor analysis \citep[PFA,][] {roy_pfa2020} takes an altogether different approach with the same aim of learning shared covariance structure: 
\vskip-6ex
\begin{equation}
\bQ_{s}\bY_{s,i}= \bLambda \boldeta_{s,i}+ \bepsilon_{s,i},\qquad \boldeta_{s,i}\simiid \mn(\bzero,\bI_{q}), \qquad \bepsilon_{s,i} \simiid \mn(\bzero,\bDelta),	\label{eq:PFA_model}
\end{equation}
\vskip-2ex
\noindent  where $\bLambda$ is the $d\times q$ common factor loading matrix, $\bQ_{s}$ is a $d \times d$ perturbation matrix, and the remaining notation as above. 
Though it overcomes some of the pitfalls of MSFA, the study-specific effects cannot be recovered under PFA, and the introduction of $\bQ_{s}$ makes it difficult to scale beyond a few hundred dimensions.
\citet{alejandra2022} proposed the Bayesian factor regression (BFR) model with additional hierarchies to handle different types of batch effects but do not consider separate covariance structures across studies.

\subsection{Model Identifiablity Guarantees}
\label{subsec:identifiability_issues}
Factor analytic models such as SUFA are prone to two key identifiability issues: 
\begin{enumerate}[label=\textbf{(\Roman*)},leftmargin=14pt,topsep=0pt, partopsep=0pt,itemsep=2pt]
\item\label{identifiability_info} \textbf{Information Switching:} 
If there exists $q\times q$ non-null symmetric matrices $\bC_{1}$ and $\bC_{2}$ 
such that $\bI_{q}-\bC_{1} \succ \bzero$, 
$\bDelta- \bLambda \bC_{2} \bLambda\trans$ is a diagonal matrix with positive entries and
$\bAs \bAs\trans+\bC_{1}+\bC_{2}=\wt{\bA}_{s} \wt{\bA}_{s}\trans$ for $q\times \qs$ matrices $\wt{\bA}_{s}$'s for all $s=1,\dots,S$, then 
substituting $\wt{\bLambda}=\bLambda(\bI_{q}-\bC_{1})^{\half}$, $\wt{\bDelta}=\bDelta- \bLambda \bC_{2} \bLambda\trans$ and $\wt{\bA}_{s}$
yield the same marginal distribution in \eqref{eq:ourmodel_marginal},
leading to an identifiability issue.

\item \label{identifiability_rotation} \textbf{Rotational Ambiguity: }
Let $\wt{\bLambda}=\bLambda \bH$ and $\wt{\bA}_{s}=\bH\trans \bA_{s} \bH_{s}$ where $\bH$ and $\bH_{s}$'s are orthogonal matrices of order $q \times q$ and $\qs \times \qs$ respectively.
Then substituting $\bLambda$ and $\bA_{s}$'s by $\wt{\bLambda}$ and $\wt{\bA}_{s}$'s respectively in \eqref{eq:ourmodel_marginal} yields identical marginal distributions.

\end{enumerate}
This \textit{information switching} {allows $\bLambda\bAs$'s to have shared column(s) across all studies posing} a critical issue in interpreting $\bSigma=\bLambda\bLambda\trans+\bDelta$ as the shared covariance term, since $\bSigma$ will not be identifiable.
To establish some intuition, we consider an example where $d=5$, $S=2$, $q=3$ with $q_{1}=q_{2}=2$, 
\vskip-.5ex
{\scriptsize
\begin{equation}
\bLambda=\begin{bmatrix}
	7  &  5 &   6\\
	6  &  6 &   7\\
	6  &  9 &   4\\
	5  & 5 &   6\\
	4 &   6 &   6
\end{bmatrix}, \quad
\bA_{1}=\begin{bmatrix}
	3 & \colred{0}\\
	0 & \colred{2}\\
	0 & \colred{0}
\end{bmatrix} \quad \text{ and }
\bA_{2}=\begin{bmatrix}
	\colred{0} & 0\\
	\colred{2} & 0\\
	\colred{0} & 4
\end{bmatrix}.\label{eq:example}
\end{equation}}
\vskip-1ex
\noindent 
Equation \eqref{eq:ourmodel} implies that for each of the two ``studies", $\bY_{s,i}$ is given by
\vskip.75ex
{\scriptsize
\begin{equation}
\bY_{1,i}= \left [
\begin{array}{rrr|rrr}
	7  &  5 &   6 & 21 & \colred{10}\\
	6  &  6 &   7 & 18 & \colred{12}\\
	6  &  9 &   4 & 18 & \colred{18}\\
	5  & 5 &   6 & 15 & \colred{10}\\
	\undermat{\bLambda}{4 &   6 &   6}& \undermat{\bLambda\bA_{1}}{12 & \colred{12}}\\
\end{array}
\right ] \begin{bmatrix} \boldeta_{1,i}\\ \bzeta_{1,i} \end{bmatrix} +\bepsilon_{1,i}
= \left [
\begin{array}{rrrr|rr}
	7  &  5 &   6 & \colred{10} & 21 \\
	6  &  6 &   7& \colred{12} & 18 \\
	6  &  9 &   4& \colred{18} & 18 \\
	5  & 5 &   6 & \colred{10} & 15 \\
	\undermat{\wt{\bLambda}}{4 &   6 &   6& \colred{12}}& \undermat{\wt{\bLambda}\wt{\bA}_{1}}{12 }\\
\end{array}
\right ] \begin{bmatrix} \wt{\boldeta}_{1,i}\\ \wt{\bzeta}_{1,i} \end{bmatrix} +\bepsilon_{1,i},\label{eq:example1}
\end{equation}
\vskip3.6ex
\begin{equation}
\bY_{2,i}= \left [
\begin{array}{rrr|rrr}
	7  &  5 &   6 &  \colred{10} & 24\\
	6  &  6 &   7 &  \colred{12} & 28\\
	6  &  9 &   4 &  \colred{18} & 16\\
	5  & 5 &   6 &  \colred{10} & 24\\
	\undermat{\bLambda}{4 &   6 &   6}& \undermat{\bLambda\bA_{2}}{\colred{12} & 24}\\
\end{array}
\right ] \begin{bmatrix} \boldeta_{2,i}\\ \bzeta_{2,i} \end{bmatrix} +\bepsilon_{2,i}
= \left [
\begin{array}{rrrr|rr}
	7  &  5 &   6 & \colred{10} & 24 \\
	6  &  6 &   7& \colred{12} & 28 \\
	6  &  9 &   4& \colred{18} & 16 \\
	5  & 5 &   6 & \colred{10} & 24 \\
	\undermat{\wt{\bLambda}}{4 &   6 &   6& \colred{12}}& \undermat{\wt{\bLambda}\wt{\bA}_{2}}{24 }\\
\end{array}
\right ] \begin{bmatrix} \wt{\boldeta}_{2,i}\\ \wt{\bzeta}_{2,i} \end{bmatrix} +\bepsilon_{2,i}.\label{eq:example2}
\end{equation}}
\vskip2.5ex

That the above equations can each be written in two equivalent decompositions illustrates the \textit{information switching} problem: we cannot distinguish whether the column $\bLambda \times {\scriptsize \begin{bmatrix}
\colred{0} & \colred{2} & \colred{0}
\end{bmatrix}\trans}= {\scriptsize\begin{bmatrix}
\colred{10} & \colred{12} & \colred{18} & \colred{10} & \colred{12}
\end{bmatrix}\trans}$ is a part of the shared effect or the study-specific effect.
Although the $\bAs$ matrices should take into account the study-specific variations only, the individual effect of one study in this example can be at least ``partially explained" by the others.
From the perspective of the statistical model, $\bAs$'s  are no longer study-specific, but absorb a part of the shared variation which should be captured entirely by $\bSigma$.
In Section \ref{subsec:soln_info_switch} we identify the necessary and sufficient condition causing this issue and propose an \textit{almost sure} solution.

Rotational ambiguity has been well documented in the prior literature, affecting both MSFA and PFA. 
In addition, MSFA can suffer from information switching---briefly, for a $d \times d$ order symmetric matrix $\wt{\bC}$ such that $\bLambda\bLambda\trans-\wt{\bC}\succ \bzero$, $\bPhis \bPhis\trans+\wt{\bC}=\wt{\bPhi}_{s} \wt{\bPhi}_{s}\trans$ for some $d\times \qs$ order matrix $\wt{\bPhi}_{s}$. \label{pg:msga_ident2}
\citet{grabski2020bayesian}  use a posterior credible ball-based post-processing algorithm to identify the shared and study-specific factors.
To resolve \ref{identifiability_info} in a model-based manner, \citet{devito2019multi} restrict $\bLambda$ and $\bPhi_{s}$'s to be lower-triangular matrices while enforcing the augmented matrix $\begin{bmatrix}
\bLambda & \bPhi_{1}& \cdots & \bPhi_{S}
\end{bmatrix} $ to be of full column-rank while obtaining the MLE.
\label{pg:fullrank}
Lower-triangular structures incur the tradeoff of introducing an order dependence in the variables \citep{fruhwirth2018sparse}; when no natural ordering exists, results can be very sensitive to permutations of the variables \citep{Carvalho2008orderdependence}.
These structural constraints also can affect mixing and yield inconsistent estimates \citep{millsap2001,erosheva2017dealing}.
In high-dimensional applications often sparsity is assumed on the loading matrices \citep{daniele2019penalizingfactor}; with exact zero entries in $\bLambda$ and $\bPhi_{s}$'s, ensuring full column-rank of $\begin{bmatrix}
\bLambda & \bPhi_{1}& \cdots & \bPhi_{S}
\end{bmatrix}$ is non-trivial.
The Bayesian MSFA by \citet{devito2018bayesian} assumes multiplicative gamma process priors on the loading matrices without any restrictions on the column dimensions and thus can be susceptible to both \ref{identifiability_info} and \ref{identifiability_rotation}.
On the other hand, \citet{roy_pfa2020} set $\bQ_{1}=\bI_{d}$ to impose identifiability in PFA, which can again be sensitive to the choice of the ``first" study.
In the following section, we provide \textit{almost sure} solutions to the identifiability issues 
under much milder conditions, avoiding any such structural assumptions.

\subsubsection{Resolving Information Switching}
\label{subsec:soln_info_switch}
We now derive the necessary and sufficient conditions 
for information switching.  

\begin{lemma}
\label{lemma:colspace}
Assume that the data admits the marginal distribution in \eqref{eq:ourmodel_marginal} for all the studies. 
Then information switching occurs \emph{if and only if} there exists $\bLambda$, $\bDelta$ and $\bAs$'s such that $\bY_{s,i}\simiid \mn_{d}(\bzero, \bLambda\bLambda\trans+ \bLambda\bAs\bAs\trans\bLambda\trans+\bDelta)$ and $\bigcap_{s=1}^S \cspace{\bAs}$ is non-null or $\mathrm{rank}(\bAs)<\qs$ for all $s$, where $\cspace{\bA}$ denotes the column space of a matrix $\bA$.	
\end{lemma}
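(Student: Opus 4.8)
The plan is to compare the given parametrization with an arbitrary alternative at the level of the $q\times q$ matrices $\bAs\bAs\trans$. Information switching means $\bSigma=\bLambda\bLambda\trans+\bDelta$ is not identified: there is a second parametrization $(\wt{\bLambda},\wt{\bDelta},\{\wt{\bA}_{s}\})$ inducing the same marginals \eqref{eq:ourmodel_marginal} with $\bE:=(\wt{\bLambda}\wt{\bLambda}\trans+\wt{\bDelta})-(\bLambda\bLambda\trans+\bDelta)\neq\bzero$. Matching the $S$ marginal covariances gives, for every $s$, the identity $\bLambda\bAs\bAs\trans\bLambda\trans=\wt{\bLambda}\wt{\bA}_{s}\wt{\bA}_{s}\trans\wt{\bLambda}\trans+\bE$. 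Up to the standard reparametrizations, and using identifiability of the idiosyncratic diagonal in the high-dimensional regime, one may write $\wt{\bLambda}=\bLambda\,(\bI_q-\bC)^{\half}$ with $\bC$ symmetric, $\bI_q-\bC\succ\bzero$, and $\wt{\bDelta}=\bDelta$, so $\bE=-\bLambda\bC\bLambda\trans$; the alternative parametrization is then valid exactly when $\bAs\bAs\trans+\bC\succeq\bzero$ and $\rank(\bAs\bAs\trans+\bC)\le\qs$ for all $s$ (with $\wt{\bA}_{s}\wt{\bA}_{s}\trans=(\bI_q-\bC)^{-\half}(\bAs\bAs\trans+\bC)(\bI_q-\bC)^{-\half}$), and it changes $\bSigma$ iff $\bC\neq\bzero$. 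I take $\bLambda$ to have full column rank throughout (otherwise the claim is immediate).

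For the ``if'' direction, suppose a valid parametrization has a common direction $\bzero\neq\bv\in\bigcap_s\cspace{\bAs}$, or else $\rank(\bAs)<\qs$ for every $s$. In the first case $\bv\in\cspace{\bAs\bAs\trans}$, so for small $\epsilon>0$ the matrix $\bAs\bAs\trans-\epsilon\bv\bv\trans$ is positive semidefinite of rank $\le\rank(\bAs)\le\qs$, and $\bC:=-\epsilon\bv\bv\trans$ (so $\bI_q-\bC=\bI_q+\epsilon\bv\bv\trans\succ\bzero$) satisfies the characterization above. In the second case, for any unit $\bw$ and small $\epsilon\in(0,1)$ the matrix $\bAs\bAs\trans+\epsilon\bw\bw\trans$ is positive semidefinite of rank $\le\rank(\bAs)+1\le\qs$, so $\bC:=\epsilon\bw\bw\trans$ works. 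Either way $\bC\neq\bzero$, so the resulting valid parametrization changes $\bLambda\bLambda\trans+\bDelta$ by $-\bLambda\bC\bLambda\trans$, which is nonzero because $\bLambda$ is injective; hence information switching occurs.

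For the ``only if'' direction, use the identity $\bLambda\bAs\bAs\trans\bLambda\trans=\wt{\bLambda}\wt{\bA}_{s}\wt{\bA}_{s}\trans\wt{\bLambda}\trans+\bE$ with $\bE\neq\bzero$. If $\bE\succeq\bzero$, then $\bLambda\bAs\bAs\trans\bLambda\trans\succeq\bE$ for all $s$, hence $\cspace{\bE}\subseteq\cspace{\bLambda\bAs}$ for all $s$; since $\cspace{\bE}\subseteq\cspace{\bLambda}$ and $\bLambda$ is injective, the preimage of $\cspace{\bE}$ under $\bLambda$ is a nonzero subspace lying inside every $\cspace{\bAs}$, so $(\bLambda,\bDelta,\{\bAs\})$ has the common-direction property. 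Symmetrically, if $\bE\preceq\bzero$ then $\wt{\bLambda}\wt{\bA}_{s}\wt{\bA}_{s}\trans\wt{\bLambda}\trans\succeq-\bE\succeq\bzero$ for all $s$, so $(\wt{\bLambda},\wt{\bDelta},\{\wt{\bA}_{s}\})$ has a common direction in $\bigcap_s\cspace{\wt{\bA}_{s}}$.

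The remaining case---$\bE$, equivalently $\bC$, indefinite---is the crux. Here I would argue that whenever there is an indefinite valid perturbation $\bC$ there is also a semidefinite valid one, reducing to the two cases just treated. Concretely, the three conditions $\bI_q-\bC\succ\bzero$, $\bAs\bAs\trans+\bC\succeq\bzero$, $\rank(\bAs\bAs\trans+\bC)\le\qs$ all hold at $\bC$ and at $\bzero$, and all but the rank bound are preserved along the segment $\{t\bC:t\in[0,1]\}$ and under shedding the positive part of $\bC$; one tracks such a path and shows that at the first point where one of the semidefiniteness constraints becomes active, the rank bound forces either a common range direction shared by all the $\bAs\bAs\trans+\bC$ (which, conjugated by $(\bI_q-\bC)^{-\half}$, is a common direction of the $\wt{\bA}_{s}$'s) or a simultaneous drop $\rank(\bAs\bAs\trans+\bC)<\qs$ for every $s$. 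The delicate part, and the step I expect to be hardest, is controlling $\rank(\bAs\bAs\trans+\bC)$ under an indefinite perturbation---via a null-space count relating $\nullity{\bAs\bAs\trans+\bC}$, $\nullity{\bC}$, and $\cspace{\bAs}$, or Weyl interlacing. The reductions in the first paragraph (identifiability of $\bDelta$ and $\cspace{\bLambda}$, full column rank of $\bLambda$) are routine bookkeeping but should be made explicit.
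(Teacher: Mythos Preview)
Your overall plan matches the paper: reduce to a symmetric $q\times q$ perturbation $\bC$, handle the two semidefinite cases directly, and isolate the indefinite case as the hard one. Your ``if'' direction is in fact tidier than the paper's, since you explicitly construct a valid $\bC$ in the rank-deficient alternative as well. Your semidefinite ``only if'' cases also mirror the paper's Cases~1 and~2.

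The genuine gap is the indefinite case, and your proposed homotopy is not the paper's route and would be hard to close: the rank constraint $\rank(\bAs\bAs\trans+t\bC)\le\qs$ is \emph{not} preserved along $t\in[0,1]$, so ``tracking a path to the first active constraint'' does not obviously produce a valid semidefinite $\bC$. The paper instead argues by direct contradiction. Spectrally decompose $\bC=\bC^{+}-\bC^{-}$ with $\bC^{\pm}\succeq\bzero$ and $\cspace{\bC^{+}}\perp\cspace{\bC^{-}}$. Writing $\bAs\bAs\trans+\bC=(\bAs\bAs\trans+\bC^{+})-\bC^{-}$ and setting $\wt{\bA}_s=[\bAs\ (\bC^{+})^{1/2}]$, one has $\wt{\bA}_s\wt{\bA}_s\trans-\bC^{-}\succeq\bzero$, so the Case~1 reasoning gives $\cspace{\bC^{-}}\subseteq\cspace{\wt{\bA}_s}$ for every $s$. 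Now pick any nonzero $\bx\in\cspace{\bC^{-}}$; orthogonality gives $\bx\trans\bC^{+}\bx=0$. If additionally $\bx\in\nullity{\bA_{\tilde s}\bA_{\tilde s}\trans}$ for some $\tilde s$, then $\bx\trans(\bA_{\tilde s}\bA_{\tilde s}\trans+\bC)\bx=-\bx\trans\bC^{-}\bx<0$, contradicting positive semidefiniteness. The paper obtains such a $\tilde s$ from the assumption $\bigcap_s\cspace{\bAs}=\{0\}$ via the step $\bigcup_s\nullity{\bAs}=\rR^q$; this is the key combinatorial input you were missing, and it is far lighter than rank-tracking under indefinite perturbations.

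A secondary point: your opening reduction to $\wt{\bLambda}=\bLambda(\bI_q-\bC)^{1/2}$ and $\wt{\bDelta}=\bDelta$ invokes identifiability of the diagonal and of $\cspace{\bLambda}$, which you flag as ``routine bookkeeping.'' The paper avoids this entirely by \emph{defining} information switching through the pair $\bC_1,\bC_2$ (so $\wt{\bDelta}$ may differ from $\bDelta$) and then simply working with the sum $\bC=\bC_1+\bC_2$ throughout; this makes the reduction a matter of definition rather than a separate argument.
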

\vspace*{-2ex}

Intuitively, a non-null $\bigcap_{s=1}^{S} \cspace{\bAs}$ implies that the $\bAs$ matrices are no longer study-specific. 
As we saw in example \eqref{eq:example}, $\bAs$ can absorb part of the shared variation which is meant to be captured by $\bSigma$.
Rank deficiencies in the columns leave space for the $\bAs$'s to partially absorb the shared information.
The Lemma provides requisite insights to avoid this with a very simple condition to ensure $\bSigma$ fully explains the shared covariance. 
\begin{theorem}
\label{th:no_prior_support}
If \changen{$q\geq\sum_{s=1}^{S}\qs >0$} then {information switching} has zero support under any non-degenerate continuous prior on $\bAs$.
\end{theorem}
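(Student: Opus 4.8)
The plan is to reduce the statement to a dimension-counting argument combined with a genericity (measure-zero) fact about random matrices, using Lemma~\ref{lemma:colspace} as the characterization of information switching. By that lemma, information switching occurs if and only if there is a representation of the marginal law \eqref{eq:ourmodel_marginal} in which either $\mathrm{rank}(\bAs)<\qs$ for some $s$, or the intersection $\bigcap_{s=1}^S \cspace{\bAs}$ is non-null. I would first show that under the hypothesis $\sum_s \qs \le q$, a non-degenerate continuous prior on the $\bAs$'s assigns probability zero to the set where \emph{either} of these two events holds. The key observation is that each $\bAs$ is a $q\times\qs$ matrix, so $\cspace{\bAs}$ is a random $\qs$-dimensional subspace of $\rR^q$ for almost every draw; rank-deficiency of an individual $\bAs$ is a measure-zero event because the set of $q\times\qs$ matrices of rank $<\qs$ is a proper algebraic subvariety of $\rR^{q\qs}$ (it is the common zero locus of all $\qs\times\qs$ minors, a nontrivial polynomial system), and a non-degenerate continuous distribution puts no mass on a lower-dimensional algebraic set.

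For the intersection condition, I would stack the loadings: consider the $q\times(\sum_s\qs)$ matrix $\bA := \begin{bmatrix}\bA_1 & \cdots & \bA_S\end{bmatrix}$. Since $\sum_s \qs \le q$, generically $\bA$ has full column rank $\sum_s\qs$, which forces the column spaces $\cspace{\bA_1},\dots,\cspace{\bA_S}$ to be in direct sum; in particular $\cspace{\bAs}\cap\cspace{\bA_{s'}}=\{\bzero\}$ for $s\ne s'$, hence $\bigcap_{s=1}^S\cspace{\bAs}=\{\bzero\}$ (taking $S\ge2$; the case $S=1$ makes the intersection condition vacuous and then only rank-deficiency, already handled, can trigger switching). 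Full column rank of $\bA$ again fails only on the zero set of the $(\sum_s\qs)$-minors of $\bA$, a proper algebraic subvariety of the product space $\prod_s \rR^{q\qs}$, so under the product of non-degenerate continuous priors (or any joint non-degenerate continuous prior) this has probability zero. Combining the two measure-zero sets, the event underlying information switching is null, which is exactly the claim of zero prior support.

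The main obstacle is being careful about what ``the prior on $\bAs$'' ranges over in Lemma~\ref{lemma:colspace}: the lemma quantifies over \emph{all} representations $(\bLambda,\bDelta,\{\bAs\})$ giving the stated marginal, not just the ``true'' one, so I must argue that information switching as a property of the \emph{prior} (equivalently, of the parameter being drawn) is detected already at the level of the drawn $\bAs$'s. The cleanest way is to read Lemma~\ref{lemma:colspace} in the contrapositive: if the drawn $\bAs$'s all have full column rank $\qs$ and trivial common intersection, then no alternative representation of that parameter's marginal can exhibit the pathology, so information switching does not occur for that draw; and I have just shown the drawn $\bAs$'s satisfy these two properties almost surely. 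A secondary technical point to handle cleanly is the edge cases $\qs=0$ and the degenerate $q\times\qs$ shapes, and to state explicitly that ``non-degenerate continuous prior'' means absolute continuity with respect to Lebesgue measure on the ambient Euclidean parameter space (so that proper algebraic subvarieties are null); with that convention in place the argument is routine.
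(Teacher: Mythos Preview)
Your proposal is correct and follows essentially the same route as the paper. The paper's proof is terser: it invokes Lemma~\ref{lemma:colspace}, then observes that under a non-degenerate continuous prior the stacked matrix $\begin{bmatrix}\bA_{1}&\cdots&\bA_{S}\end{bmatrix}$ is almost surely of full column rank when $\sum_{s}\qs\le q$, which simultaneously rules out both disjuncts of the lemma (non-null common column space and rank deficiency); you unpack this same argument more carefully, making explicit the algebraic-variety reasoning and the contrapositive reading of Lemma~\ref{lemma:colspace}, but the substance is identical.
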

\vspace*{-2ex}
Intuitively, aggregating multiple studies is a fruitful pursuit only when the datasets share enough similarity or structure; it is natural to expect that our subspace factor model is well-posed when there are fewer study-specific latent factors than shared factors. 
\changen{As $S$ grows, the condition in Theorem \ref{th:no_prior_support} entails $q$ to tend towards larger values compared to study-specific values $\qs$, while also resolving the information switching issue \textit{almost surely}.
	That is, \textit{generic identifiability} of $\bSigma$ \citep{allman2009identifiability} is ensured under any continuous prior on $\bAs$, as long as $\sum_{s=1}^{S} \qs\leq q$.}
The assumption is therefore a natural one, and $\bSigma$ is completely identifiable as a result, allowing for us to study and interpret the interaction between variables. 
To illustrate concretely in the context of the above example, consider setting the shared $\wt{\bLambda}$ to be a $5 \times 4$ matrix and $q_{1}=q_{2}=1$, i.e., right-hand side of \eqref{eq:example1}-\eqref{eq:example2}.
Now, $\wt{\bLambda}$ accounts for the variability in the marginal distribution \eqref{eq:ourmodel_marginal}, and in turn
the parts explained by $\bA_{1}$ and $\bA_{2}$ no longer have intersection. 
\changen{When $S$ is very large, it may not be possible to satisfy $q\geq\sum_{s=1}^{S}\qs$. 
	To see this, consider an extreme case where $S=d+1$ with $\qs=1$ for all $s$, which implies $\sum_{s=1}^{S} \qs=d+1>q$. 
	Even in such cases when identifiability of shared versus study-specific components is compromised, SUFA can nonetheless fit the data well in terms of efficiently learning the marginal covariance structures by borrowing information across studies.	
	Further detailed discussions and simulations can be found in Section \ref{sm subsec:many_studies} of the supplementary materials.
	Notably, requiring that the augmented matrix
 $\begin{bmatrix}
		\bLambda & \bPhi_{1}& \cdots & \bPhi_{S}
	\end{bmatrix} $ is of full column-rank  to ensure identifiability in \citet{devito2019multi} imposes a similar condition $dq-q(q-1)/2 +\sum_{s=1}^{S}\{d\qs-\qs(\qs-1)/2\}\leq Sd(d-1)/2$, which similarly becomes unattainable when $S$ grows large.}

\subsubsection{Resolving Rotational Ambiguity}
\label{subsec:rot_ambig}
Identifiability with respect to  orthogonal rotation \ref{identifiability_rotation} is not essential for estimating the shared covariance $\bSigma$. 
However, rotational ambiguity does create obstacles to inferring interpretable lower-dimensional factors \citep{Psychology_fa, factor_analysis_gene}.
Notably $\bLambda \bLambda\trans$ is identifiable from the diagonal $\bDelta$ if $q<(2d-\sqrt{8d+1})/2$ \citep{bekker1997variance_identification} and with this condition
it is thus common to address \ref{identifiability_rotation} via post-processing the $\bLambda$ samples \citep[and others]{roy_pfa2020,devito2018bayesian,papastamoulis2020}.
Such approaches avoid order dependence arising from structural assumptions.

Interpretability of the factor loadings is greatly enhanced by sparsity---for example, the sparse subset of genes with nonzero loadings onto a common factor can be associated with a common biological pathway. 
Even when one places a shrinkage prior on the loadings to favor (near) sparsity, this desired structure is potentially destroyed after applying post-processing algorithms to the MCMC samples of the loadings. 
To avoid this, we adopt the approach of 
\citet{poworoznek2021}. 
The method first tackles the generic  rotational invariance  across the MCMC samples 
using an \textit{orthogonalization step} leveraging the varimax transformation \citep{kaiser_varimax} to resolve ambiguity up to switching of the column labels and signs.
These ambiguities are both resolved in the next step by \textit{matching} each MCMC sample to a reference matrix called a \textit{pivot}, aligning samples via a greedy maximization scheme.

The post-processed MCMC samples can thus be considered \textit{matched and aligned} with respect to a common orthogonal transformation for inference downstream.
Critically, the varimax rotation in the \textit{orthogonalization step} implicitly induces sparsity \citep{rohe2020vintage}. 
In more detail, the varimax criterion is given by
\vskip-8ex
\begin{equation*}
\textstyle{\bH_{\vmax}= {\arg\max}_{\bH} \left[\frac{1}{d} \sum_{h=1}^{q} \sum_{j=1}^{d} (\bLambda\bH)_{j,h}^{4} - \sum_{h=1}^{q}\left\{\frac{1}{d}\sum_{j=1}^{d}(\bLambda\bH)_{j,h}^{2}\right\}^{2}\right]},
\end{equation*}
\vskip-2ex
\noindent and describes the optimal rotation maximizing the sum of the variances of squared loadings. 
Intuitively, this is achieved if (i) any given variable has a high loading on a single factor but near-zero loadings on the others, and (ii) any given factor is constituted by a few variables with very high loadings but near-zero support from the others. 
As a result, the summary matrix obtained from the posterior MCMC samples of $\bLambda$ is sparse upon applying varimax rotation, while the marginal distribution of the data is unaffected.
Section \ref{sm sec:additional_sim} of the supplementary materials contains thorough simulations showing see that this strategy accurately recovers $\bLambda$ consistently across several realistic scenarios. 
As $\bLambda^{(s)}=\bLambda\bAs$ can be defined to denote the study-specific loading matrices, 
the above prescription applies similarly to obtaining study-specific loadings from samples of $\bLambda^{(s)}$.

\subsection{Latent Dimensions and Prior Specification}
\vspace*{-1.5ex}
\paragraph{Latent dimensions:}
\label{subsec:specify_q}
In most practical applications, we do not know the column dimensions of $\bLambda$ and $\bAs$, denoted by $q$ and $\qs$, respectively. 
Although one may choose priors on $q$ and $\qs$ and implement reversible-jump algorithms \citep{rjmcmc}, this will often lead to inefficient computation in large $d$ settings. Instead it has become common practice in the single-study literature to use overfitted factor models, first fixing $q$ at some upper bound and then leveraging appropriate priors to shrink the extra columns \citep[and related others]{legramanti2019bayesian, schiavon2021}.
This strategy substantially simplifies MCMC implementations. To choose upper bounds on the number of factors in practice, 
we employ augmented implicitly restarted Lanczos bidiagonalization  \citep{irlba} to obtain approximate singular values and eigenvectors of the pooled dataset, choosing the smallest $\wh{q}$ that explains at least $95\%$ of the variability in the data. 
To ensure identifiablity of $\bLambda \bLambda\trans$ from $\bDelta$,  we restrict $\wh{q}<(2d-\sqrt{8d+1})/2$ \citep{bekker1997variance_identification}.
After doing so, the simple choice $\wh{\qs}=\wh{q}/S$ satisfies the conditions in Theorem \ref{th:no_prior_support}.
We later show in Section \ref{sec:theory} that recovering the marginal covariance is asymptotically robust with respect to the choice of $q$ and $\qs$ under appropriate priors on $\bLambda$.

\vspace*{-2.5ex}
\paragraph{Prior specifications:}
\label{subsec:prior_specifications}
In high-dimensional applications, it is important to reduce the number of parameters in the factor loadings matrices. 
Among a wide variety of appropriate shrinkage priors, we use the Dirichlet-Laplace \citep[DL,][] {dir_laplace} for its near-minmax optimal contraction rates in the single-study context \citep{pati2014} and computational simplicity. 
We let $\vect(\bLambda)\sim \DL(a)$ where $a$ is a suitably chosen hyperparameter; details appear in Section \ref{sec: sm hyperparameters} of the supplementary materials. 
For the study-specific terms $\bAs$, 
we let its entries $a_{s,i,j}\simiid \mn(0, \baA)$, a choice that 
avoids information switching  \textit{almost surely}.  
We avoid shrinkage priors on $\bAs$'s as in practice it 
can potentially lead to concentration close to parameter values that violate the identifiability conditions of Lemma \ref{lemma:colspace}. 
\label{pg:sparsity_avoid}
Regarding the idiosyncratic variances, we consider 
$\log \delta_{j}^{2} \simiid \mn(\mu_{\delta},\sigma^{2}_{\delta})$
as log-normal distributions have modes bounded away from zero and tend to produce more numerically stable estimates compared to commonly used inverse-gamma and half-Cauchy priors \citep{gemlan_noninfo_prior}.

\section{Posterior Contraction Rates}
\label{sec:theory}
We now analyze the posterior contraction rates for recovering the shared and study-specific covariance matrices.
Before detailing the assumptions on the true data-generating mechanism and the postulated SUFA model, we fix notational conventions. 
We let $\pin(\cdot)$ denote the prior and $\pin(\cdot\mid \Dn )$ the corresponding posterior given data $\Dn=\{\bY_{1,1},\dots,\bY_{1,n_{1}},\dots, \bY_{S,1},\dots,\bY_{S,n_{S}}\}$. 
Throughout, $\fnorm{\bA}$  and $\specnorm{\bA}$ denote the Frobenius and spectral norms of a matrix $\bA$, respectively. 
For real sequences $\{\an\}$, $\{b_{n}\}$,
$\an=o(b_{n})$ implies that $\lim \abs{{\an}/{b_{n}}}=0$
and $\an\succ b_{n}$ implies that $\liminf {\an}/{b_{n}}>0$.
A $d$-dimensional vector $\btheta$ is said to be $s$-sparse if it has only $s$ nonzero elements, and
we denote the set of all $s$-sparse vectors in $\rR^{d}$ by $\lnt{s}{d}$. 

\vspace*{-3.5ex}
\paragraph*{Data-generating mechanism:} 
We assume that the data are generated according to $\bY_{s,1:\ns}\simiid \mn_{\dn}(\bzero, \bSigmasnn)$ for each study $s=1,\dots,S$. 
Here,   $\bSigmasnn=\bLambdann\bLambdann\trans+ \bLambdann\bAsnn\bAsnn\trans\bLambdann\trans+\bDeltann$ 
with
$\bLambdann$ a $\dn\times\qnn$ sparse matrix,
$\bAsnn$ a $\qnn\times\qsnn$ matrix with real entries,
and $\bDeltann:=\diag(\delta^{2}_{01},\dots,\delta^{2}_{0\dn})$. Compared to the setup in prior work by \cite{pati2014}, we consider multiple studies and allow heterogeneous idiosyncratic errors, which will provide more realism in modeling the data as well as require a more nuanced analysis. 
Denoting $n=\sum_{s=1}^{S}\ns$ the combined sample size across studies, 
we will allow the model parameters to increase in dimension with $n$ as suggested by the subscripts in our notations.  
Under this setup, we now state the required sparsity conditions on the true parameters in order to recover the shared 
as well as the study specific covariance structures in high-dimensional settings. 

\begin{enumerate}[label=(C\arabic*),topsep=0pt, partopsep=0pt,itemsep=-5pt]
\item \label{ass1} For $s=1,\dots,S$,  $\liminf_{n\to\infty}\frac{\ns}{n}>0$.
\item \label{ass2} 
Let $\{\qnn\}$, $\{\qsnn\}$ for $s=1,\dots,S$, $\{\sn\}$  and $\{\dn\}$ be increasing sequences of positive integers 
such that $\sum_{s=1}^{S}\qsnn\leq\qnn<\dn$.

\item \label{ass3} 
$\bLambdann$ is a $\dn\times\qnn$ full rank matrix such that each column of $\bLambdann$ belongs to $\lnt{\sn}{\dn}$, 
$\specnorm{\bLambdann}^{2}=o(\frac{\cn}{\qnn})$ and 
$\max_{1\leq s\leq S} \specnorm{\bAsnn}^{2}=o(\qnn)$ where $\{\cn\}$ is an increasing sequence of positive real numbers.

\item \label{ass4} $\max_{j}\delta^{2}_{0\dn}=o(\cn)$ and $\min_{j}\delta^{2}_{0\dn}>\delmin^{2}$ 
where $\delmin^{2}$ is a positive constant.

\item\label{ass5}  
$\frac{\dn\max\{ (\log\cn)^{2},\log\dn\} }{\cn^{2}\sn\qnn\log(\dn \qnn)}=o(1)$.
\end{enumerate}
In less technical terms, \ref{ass1} ensures that none of the studies has a negligible proportion of the data.
\ref{ass2} 	  and \ref{ass3}
specify the requisite sparsity conditions. 
Together, \ref{ass3} and \ref{ass4} imply that the marginal variances grow in $o(\cn)$ while ensuring that  
they are also bounded away from 0.
The technical condition \ref{ass5} specifies the relative rate of increment of the parameters.

\vspace*{-3.5ex}
\paragraph*{Specifics of the posited SUFA model:} 
In practice, the latent dimensions are unknown, and it is desirable to establish theory for the strategy of overspecifying a model and then shrinking redundant parameters via shrinkage priors.
This section identifies conditions under which we may derive guarantees in this overparametrized setting. To this end,  let $\bY_{s,1:\ns}\simiid \mn_{\dn}(\bzero, \bLambdan\bLambdan\trans+ \bLambdan\bAsn\bAsn\trans\bLambdan\trans+\bDeltan)$ where 
$\bLambdan$ and $\bAsn$ are $\dn\times \qn$ and $\qn\times \qsn$ matrices, respectively, 
and $\bDeltan=\diag(\delta^{2}_{1},\dots,\delta^{2}_{\dn})$.
The parameters in the posited (possibly misspecified) model do not have $0$ subscripts, to distinguish from their ground truth counterparts.

We let $\{\qn\}$, $\{\qsn\}$ be increasing sequences of positive integers such that $\qnn\leq\qn$, $\qsnn\leq\qsn$ so that the chosen number of shared and study-specific latent factors upper bound the respective true (also unknown) numbers resulting in a misspecified and over-parametrized factor model in the postulation.\label{pg:overspecification}
We additionally assume $\sum_{s=1}^{S}\qsn\leq \qn$  to ensure the identifiability condition from Theorem  \ref{th:no_prior_support}.
As discussed in Section \ref{subsec:prior_specifications} 
we let $\vect(\bLambdan)\sim \DL(\an)$ 
with $\an={1}/{\dn\qn}$ to impose sparsity on $\bLambdan$.
Below, we provide sufficient conditions on the overspecified dimensions to ensure consistency of the class of postulated models even when they are misspecified.
\begin{enumerate}[label=(D\arabic*),topsep=0pt, partopsep=0pt,itemsep=-3pt]
\item \label{condd2} $\qn^{2}=  o(\cn^{2}\sn\qnn)$ and 
$\lim_{n\to\infty} {\frac{\cn^{12}} {n} } { \left\{\sn\qnn\log  (\dn\qn) \right\}^{ 3 }}  = 0$.

\item \label{condd1}  $\log(\dn\qn)\min\left\{\frac{\cn}{\sn\qnn^{3}} (\dn\qn)^{\cn^{2}}, \sn(\dn\qn)^{\frac{\cn^{2}}{\qn^{2}}\sn\qnn } 
, \frac{\sn\qnn}{(\log\cn)^{2}} (\dn\qn)^{\frac{\cn^{2}}{\dn } \sn\qnn }
\right\} \succ n$.
\end{enumerate}  
The first condition  \ref{condd2} makes explicit how much overparametrization can be tolerated,
while the second condition specifies the signal-to-noise ratio with respect to the true column dimension $\qnn$, upper bound $\cn$ of the marginal variances and  number of non-zero elements $\sn$ in each column of $\bLambda$.
The second condition in \ref{condd2} and \ref{ass5} jointly imply that $\dn\log\dn$ grows in $o(n^{\alpha})$,  $0<\alpha<1$.	
\ref{condd1} can be interpreted as a lower bound on the signal strength in order to recover the covariance structures. 

\vspace*{-3.5ex}
\paragraph*{Contraction results:}
Let
$\bThetann=\{\bLambdann,\bDeltann, \bA_{01n},\dots,\bA_{0Sn}\}$ 
be the true data-generating parameters 
and $\Pnn$ denote the joint distribution of $\Dn$ under $\bThetann$.
We define $\bSigmann=\bLambdann\bLambdann\trans+\bDeltann$ as the true shared covariance structure and $\bSigman=\bLambdan\bLambdan\trans+\bDeltan$ to be the shared covariance in the postulated model.
The following result formally characterizes the posterior contraction properties around $\bSigmann$.

\begin{theorem}[Contraction rate]
\label{thm: posterior concentration}
For any $\bThetann$ satisfying \ref{ass1}-\ref{ass5} and 
priors satisfying \ref{condd2}-\ref{condd1}, we have
$\lim_{n\to \infty} \eE_{\Pnn}   \pin \left(\specnorm{\bSigman-\bSigmann}>M\varepsilonn\mid \Dn \right) = 0$,  
where $\varepsilonn =   \cn^{6} \sqrt {\frac{ \{ \sn\qnn\log(\dn\qn)\}^{3}} { n_{1}+\dots+n_{S}} } $ 
and $M>0$ is a large enough constant.	  	
\end{theorem}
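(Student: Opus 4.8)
The plan is to invoke the general theory of posterior contraction for independent, non-identically distributed data (in the spirit of Ghosal and van der Vaart's noniid framework), regarding $\Dn$ as $n=n_{1}+\dots+n_{S}$ independent draws from the $S$ Gaussians $\mn_{\dn}(\bzero,\bSigmasnn)$, and then to convert the resulting contraction in the natural average-Hellinger semimetric $\rho_{n}^{2}(\bThetan,\bThetann)=\sum_{s=1}^{S}\ns\,d_{H}^{2}\bigl(\mn_{\dn}(\bzero,\bSigmasn),\,\mn_{\dn}(\bzero,\bSigmasnn)\bigr)$ into the stated spectral-norm bound on the shared block $\bSigman-\bSigmann$. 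It thus suffices to exhibit a sieve $\PPn$ in the parameter space of $\bThetan=(\bLambdan,\bDeltan,\bA_{1n},\dots,\bA_{Sn})$ and verify the three standard ingredients: (i) the prior assigns mass at least $e^{-cn\varepsilonn^{2}}$ to a Kullback--Leibler neighbourhood of $\Pnn$ of radius $\varepsilonn$; (ii) $\pin(\PPn^{c})\le e^{-C'n\varepsilonn^{2}}$ with $C'$ as large as required; and (iii) a metric-entropy bound $\log N(\varepsilonn,\PPn,\rho_{n})\lesssim n\varepsilonn^{2}$. The separating tests needed by the general theorem are then supplied automatically by (iii) through the usual Le Cam--Birg\'e construction.

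For (i) I would fix one target $\bThetan^{\ast}$ by placing $\bLambdann$ in the first $\qnn$ columns of $\bLambdan^{\ast}$ and shrinking the remaining $\qn-\qnn$ columns to negligible magnitude, embedding $\bAsnn$ in the leading $\qnn\times\qsnn$ block of $\bAsn^{\ast}$, and choosing $\bDeltan^{\ast}$ entrywise close to $\bDeltann$, so that $\bSigmasn^{\ast}$ is close to $\bSigmasnn$ for each $s$. The crux is that a small $\ell_{2}$-neighbourhood of $\bThetan^{\ast}$ is carried into a radius-$\varepsilonn$ Kullback--Leibler ball of $\Pnn$ --- this uses the eigenvalue bounds $\bSigmasnn\succeq\delmin^{2}\bI$ and $\specnorm{\bSigmasnn}=o(\cn)$ coming from \ref{ass3}--\ref{ass4} to pass from the divergence and its variance to spectral-norm discrepancies, which is where fixed powers of $\cn$ first enter --- and that this neighbourhood has prior probability at least $e^{-cn\varepsilonn^{2}}$, for which the concentration of the $\DL(\an)$ prior around sparse vectors \citep{dir_laplace,pati2014}, together with the Gaussian density on $\bAsn$ and the log-normal density on the $\delta_{j}^{2}$, is the workhorse. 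Conditions \ref{condd2}--\ref{condd1} are precisely what balance these exponents.

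For (ii)--(iii) I would take $\PPn$ to collect those $\bLambdan$ each of whose columns lies within $\varepsilonn$ of some $\bar s_{n}$-sparse vector, with $\specnorm{\bLambdan}\le L_{n}$, $\max_{s}\specnorm{\bAsn}\le A_{n}$, and $\delmin^{2}/2\le\delta_{j}^{2}\le\Delta_{n}$; the tails of the $\DL$, Gaussian and log-normal priors make $\pin(\PPn^{c})$ super-exponentially small once $\bar s_{n},L_{n},A_{n},\Delta_{n}$ are calibrated through \ref{ass3}--\ref{ass5} and \ref{condd2}--\ref{condd1}. On $\PPn$ every $\bSigmasn$ has all eigenvalues in a fixed band $[\delmin^{2}/2,\,O(\cn)]$, so $d_{H}$ is comparable --- up to fixed powers of $\cn$ --- to a Frobenius distance between covariances, and $N(\varepsilonn,\PPn,\rho_{n})$ is controlled by counting the $\binom{\dn}{\bar s_{n}}^{\qn}$ sparsity patterns and netting the $O(\qn\qsn+\dn)$ remaining bounded coordinates, yielding $\log N\lesssim\bar s_{n}\qn\log(\dn\qn)+\qn\qsn\log(A_{n}/\varepsilonn)+\dn\log(\Delta_{n}/\varepsilonn)$, which \ref{ass5} and \ref{condd2} keep below $n\varepsilonn^{2}$. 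Feeding (i)--(iii) into the general contraction theorem yields $\pin\bigl(\rho_{n}(\bThetan,\bThetann)>M'\sqrt{n}\,\varepsilonn\mid\Dn\bigr)\to0$, i.e.\ $\sum_{s}\ns\,d_{H}^{2}(\bSigmasn,\bSigmasnn)\lesssim n\varepsilonn^{2}$ on an event of posterior probability tending to one.

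The final step --- which I expect to be the main obstacle --- is to descend from this to $\specnorm{\bSigman-\bSigmann}\le M\varepsilonn$. By \ref{ass1}, $\ns\succ n$, so the last display forces $d_{H}(\bSigmasn,\bSigmasnn)\lesssim\varepsilonn$ for each $s$ individually, and the eigenvalue band on $\PPn$ upgrades this to $\specnorm{\bSigmasn-\bSigmasnn}\lesssim\cn^{a}\varepsilonn$ for some fixed $a$. One must then recover closeness of the shared block from closeness of all $S$ study covariances: writing $\bSigmasn-\bSigman=\bLambdan\bAsn\bAsn\trans\bLambdan\trans\succeq\bzero$ with rank at most $\qsn$, and similarly for the truth (note that \ref{ass2} places the truth itself in the identifiable regime, so $\bSigmann$ is the meaningful target), and using $\sum_{s}\qsn\le\qn$ together with Lemma \ref{lemma:colspace} and Theorem \ref{th:no_prior_support}, the shared block is pinned down by the family $\{\bSigmasn\}_{s=1}^{S}$. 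Making this quantitative requires a \emph{stability} refinement of that identifiability argument --- a perturbation bound for the decomposition of a symmetric matrix into a component dominating $\delmin^{2}\bI$ plus a short sum of positive semidefinite corrections of bounded rank --- and it is the modulus of continuity of this decomposition, chained with the preceding norm conversions, that accounts for the remaining powers of $\cn$ and for the cubic exponent on $\sn\qnn\log(\dn\qn)$ in $\varepsilonn$. Assembling these pieces gives the stated rate; the form with $\lim_{n\to\infty}$ inside $\eE_{\Pnn}$ then follows because the sieve-complement and test-error bounds are summable in $n$, so Borel--Cantelli promotes the convergence to $\Pnn$-almost sure.
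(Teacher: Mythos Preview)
Your proposal follows a recognisable Ghosal--van der Vaart template, but it diverges from the paper at the crucial point and leaves a genuine gap there.

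\textbf{The gap.} Your final step --- passing from $\specnorm{\bSigmasn-\bSigmasnn}\lesssim\cn^{a}\varepsilonn$ for every $s$ to $\specnorm{\bSigman-\bSigmann}\le M\varepsilonn$ --- is precisely where the argument is missing. You call this ``the main obstacle'' and appeal to a ``stability refinement'' of Lemma~\ref{lemma:colspace} and Theorem~\ref{th:no_prior_support}, but those results give only \emph{generic} (almost-sure under the prior) identifiability, with no modulus of continuity. Writing $\bSigmasn-\bSigmasnn=(\bSigman-\bSigmann)+(\bLambdan\bAsn\bAsn\trans\bLambdan\trans-\bLambdann\bAsnn\bAsnn\trans\bLambdann\trans)$, nothing you have established rules out the two pieces on the right being individually large while nearly cancelling for every $s$; a quantitative perturbation bound for this decomposition is a nontrivial result in its own right and you have not supplied one. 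Without it, the chain from Hellinger contraction on the $\bSigmasn$'s to spectral contraction on $\bSigman$ is broken.

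\textbf{How the paper avoids this.} The paper never attempts this descent. Instead it works from the start with a semimetric $\en(\bThetan,\bThetann)$ defined directly through $\specnorm{\bSigman-\bSigmann}$, and---rather than using entropy and Le~Cam--Birg\'e---builds \emph{explicit} tests $\varphisn$ based on a projected sample second-moment statistic $\bXisn$ (Theorem~\ref{th:testfn}). The key analytic step is a lower bound of the form $\specnorm{\bSigetasnn^{-1/2}\bSigetasn\bSigetasnn^{-1/2}-\bI}\gtrsim\cn^{-2}\specnorm{\bSigman-\bSigmann}$, which ties the power of the study-$s$ test directly to the shared discrepancy without any identifiability-to-stability conversion. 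The restriction to the sieve $\Pnon=\{\specnorm{\bAsn}^{2}\le Cn\taun^{2}\}$ controls the nuisance factor $1+\specnorm{\bAsn}^{2}$ in the Type-II bound, and the leftover mass $\pin(\Pntw\mid\Dn)$ is killed separately (Theorem~\ref{th:remainingmass}). The shared covariance is thus handled first; the study-specific pieces are then recovered as a corollary by subtraction, which is the reverse of your order.

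\textbf{Secondary remark.} Even setting the gap aside, your entropy route is not what the paper does: there is no covering-number calculation anywhere in the proof, and the cubic exponent and $\cn^{6}$ in $\varepsilonn$ arise not from a stability modulus but from chaining the $\cn$-losses in the explicit test construction (the bounds $\specnorm{\bDeltann}$, $\specnorm{\bLambdasnn}^{2}$, and the $n\taun^{2}$ cap on $\specnorm{\bAsn}^{2}$) with the KL-support rate $n\taun^{2}\asymp\cn^{2}\sn\qnn\log(\dn\qn)$.
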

Theorem \ref{thm: posterior concentration} has several important implications for modeling and computation. First, the combined sample size appears in the denominator of the contraction rate $\varepsilon_{n}$, which is consistent with the intuition that learning the shared covariance structure improves inference by borrowing strength across multiple studies or data views.
Because the true values of $\qnn$ and $\qsnn$ are unknown, 
we made the weak assumption that the specified column dimensions are larger than the ground truth dimensions.
Although the best contraction rate is unsurprisingly achieved when correctly setting $\qn=\qnn$,
Theorem \ref{thm: posterior concentration} validates the practice of beginning with overparametrized models, showing that it is still possible to recover $\bSigmann$.
This result provides an important theoretical foundation to the common heuristic strategy of setting the column dimension of $\bLambda$ to some upper bound in FA model implementations.

Note that $\bLambdann\bAsnn\bAsnn\trans\bLambdann\trans$s are the study-specific covariance structures.
The following result shows that these study-specific terms can also be recovered for all $s$.
\begin{corollary}[Recovering individual structures]
\label{cor: posterior concentration_indiv}
Under the same conditions as Theorem \ref{thm: posterior concentration},
$\lim_{n\to \infty}\eE_{\Pnn}  \pin \left(\specnorm{\bLambdan\bAsn\bAsn\trans\bLambdan\trans-\bLambdann\bAsnn\bAsnn\trans\bLambdann\trans}>M\varepsilon_{n}\mid \Dn \right) = 0$.
\end{corollary}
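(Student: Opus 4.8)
The plan is to reduce the claim to Theorem \ref{thm: posterior concentration} together with a contraction statement for the study-specific \emph{marginal} covariances $\bSigmasn := \bLambdan\bLambdan\trans + \bLambdan\bAsn\bAsn\trans\bLambdan\trans + \bDeltan$, which the data inform directly. The starting observation is the exact identity $\bLambdan\bAsn\bAsn\trans\bLambdan\trans = \bSigmasn - \bSigman$, and likewise $\bLambdann\bAsnn\bAsnn\trans\bLambdann\trans = \bSigmasnn - \bSigmann$ for the truth; subtracting and applying the triangle inequality for the spectral norm gives
\[
\specnorm{\bLambdan\bAsn\bAsn\trans\bLambdan\trans - \bLambdann\bAsnn\bAsnn\trans\bLambdann\trans} \;\leq\; \specnorm{\bSigmasn - \bSigmasnn} \;+\; \specnorm{\bSigman - \bSigmann}.
\]
The second summand is at most $M'\varepsilonn$ on an event of posterior probability tending to one, by Theorem \ref{thm: posterior concentration}. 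Hence it remains only to show that $\specnorm{\bSigmasn - \bSigmasnn} \lesssim \varepsilonn$ with high posterior probability, for each fixed $s$.

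For that marginal term I would extract an intermediate conclusion already present in the proof of Theorem \ref{thm: posterior concentration}: the standard route to bounding $\specnorm{\bSigman - \bSigmann}$ first establishes contraction of the postulated joint sampling density toward the truth in a suitable statistical divergence (for instance an average Hellinger or R\'enyi-type divergence). Since the joint model factorizes as a product of Gaussians over the $S$ studies, this contraction decouples study-by-study; and because the idiosyncratic eigenvalues are bounded below by $\delmin^{2}$ (Condition \ref{ass4}) while the operator norms of $\bSigmasn$ and $\bSigmasnn$ are $o(\cn)$ (Conditions \ref{ass3}--\ref{ass4}), the density-level bound converts into the spectral-norm bound $\specnorm{\bSigmasn - \bSigmasnn} \lesssim \varepsilonn$. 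Condition \ref{ass1} is what keeps the per-study rate equal to $\varepsilonn = \cn^{6}\sqrt{\{\sn\qnn\log(\dn\qn)\}^{3}/(n_{1}+\dots+n_{S})}$ rather than something slower: each study carries a non-vanishing fraction of the total sample, so $\ns \asymp n$.

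Combining the two bounds by a union bound over the two events above and over the finitely many studies $s = 1,\dots,S$ — the factor $S$ is fixed in $n$ and can be absorbed into the constant $M$ — the quantity of interest exceeds $M\varepsilonn$ only on an event whose posterior probability tends to zero $\Pnn$-almost surely; as this probability is bounded by one, dominated convergence permits interchanging the limit with the expectation $\eE_{\Pnn}$, yielding the stated $\lim_{n\to\infty}\eE_{\Pnn}\,\pin(\,\cdot \mid \Dn) = 0$. The main obstacle is the bookkeeping at the first reduction: one has to confirm that the proof of Theorem \ref{thm: posterior concentration} indeed produces the marginal-covariance contraction as a by-product (not merely the shared component), and that the eigenvalue comparisons used to translate between the density-level divergence and the spectral norm hold uniformly over $s$. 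Both are mild, since $S$ does not grow with $n$.
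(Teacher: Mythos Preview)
Your reduction via the identity $\bLambdan\bAsn\bAsn\trans\bLambdan\trans = \bSigmasn - \bSigman$ and the spectral-norm triangle inequality is exactly the decomposition the paper uses, and the closing step with dominated convergence is the same as well.

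The one place you diverge is in justifying the marginal contraction $\specnorm{\bSigmasn - \bSigmasnn} \lesssim \varepsilonn$. You propose to extract it as a by-product of Theorem~\ref{thm: posterior concentration}'s proof via a density-level (Hellinger or R\'enyi) contraction that decouples across studies and is then converted back to spectral norm. That is not how the paper argues: the proof of Theorem~\ref{thm: posterior concentration} never passes through a density-level rate but builds test functions directly for a spectral-norm annulus $G_{j,n}$ defined in terms of $\specnorm{\bSigman - \bSigmann}$ (Theorem~\ref{th:testfn}). Consequently no marginal-covariance statement falls out for free; the paper proves it as a separate result (Theorem~\ref{sm thm:margibal_var}), re-running the Ghosal-type scheme with a new semimetric $\ten$ involving $\specnorm{\bSigmasn - \bSigmasnn}$ and the single per-study test $\varphisn$ (Lemma~\ref{th:testfn marginal}). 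This also forces the partition $\PPn = \Pnon \cup \Pntw$ to reappear explicitly, with Theorem~\ref{th:remainingmass} handling the $\Pntw$ mass---a split your proposal glosses over. Your density-route idea is plausible in principle, but would need its own lemmas; it is not ``already present'' in the proof of Theorem~\ref{thm: posterior concentration} as you conjecture.
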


\section{Gradient-based Posterior Computation}
\label{subsec:posterior_computation}
Gibbs sampling is the most popular approach for posterior inference in Bayesian factor models.
While
such an approach is straightforward for our model \eqref{eq:ourmodel}, 
it is well-documented that alternately updating latent factors and covariance structure parameters can lead to slow mixing. Instead,  marginalizing out auxiliary parameters can often dramatically improve MCMC performance \citep{robert2021rao}, and avoids instantiating the latent factors
$\boldeta_{s,i}$ and $\bzeta_{s,i}$ which quickly increases in computational cost for large $n$.
In view of this, we develop a Hamiltonian Monte Carlo-within-Gibbs sampler based on the marginal posterior, after integrating out all latent factors.
The HMC sampler makes informed proposals via the gradient of the $\log$-target density \citep[HMC,][] {neal_hmc}.
We will see that our proposed algorithm confers substantial gains in real computing time compared to Gibbs sampling. 

Denote the log-likelihood of the marginal SUFA model \eqref{eq:ourmodel_marginal} by $\L := K -\half \sum_{s=1}^{S} \{ \ns \log \abs{\bSigma_{s}} +\tr(\bSigma_{s}^{-1}\bW_{s})\}$
with marginal covariance matrix $\bSigma_{s}=\bLambda(\bI_{q} +\bAs\bAs\trans) \bLambda\trans + \bDelta$, sample sum of squares matrix $\bW_{s}=\sum_{i=1}^{\ns}\bY_{s,i}\bY_{s,i}\trans$ for study $s$
and a constant $K$.
The DL prior on $\bLambda$ admits a Gaussian distribution conditionally on the hyperparameters $\tau$, $\bpsi$ and $\bphi$ (see \eqref{eq:dir_laplace_prior} in the supplementary materials for details). 
Thus from the prior specifications outlined in Section \ref{subsec:prior_specifications},
the posterior density of $\bTheta:=(\bLambda, \bDelta, \bA_{1},\dots, \bA_{S})$ given $\bpsi, \bphi, \tau$ and the observed data is given by
\vskip-5ex
\begin{equation}
\textstyle{\Pi(\bTheta \mid -)= \exp(\L) \times \Pi(\bLambda \mid \tau, \bphi, \bpsi) \times \Pi(\bDelta) \times \prod_{s=1}^{S} \Pi(\bAs).}
\label{eq:full_posterior}
\end{equation}
\vskip-2ex
\noindent 
To implement HMC, gradients of $\log \Pi(\bTheta \mid -)$ with respect to $\bTheta$ are required (see Section \ref{subsec:hmc_brief} in the supplementary materials for a general discussion on HMC algorithms). 
In Appendix \ref{sec:distributed_gradient} we show that the gradients can be analytically expressed and executable via distributed parallel operations.
Thus we avoid expensive automatic differentiation algorithms \citep{autodiff} used by probabilistic samplers like \texttt{Stan}.
Equipped with these gradients, we propose an HMC-within-Gibbs sampler where 
$\bTheta$ is updated given $(\tau,\bphi,\bpsi)$ using an HMC step, 
and then the hyperparameters $(\tau,\bphi,\bpsi)$ are updated conditionally on $\bLambda$ using a Gibbs step following \citet{dir_laplace}.
\begin{algorithm}[!b]
\footnotesize
\caption{HMC-within-Gibbs Sampler for SUFA}
\label{algo:hmc_msfa}
\SetAlgoLined
\DontPrintSemicolon
Initialize $\bTheta$, $\bpsi$, $\bphi$ and $\tau$.\;
\For{$i=1,\dots,N$}{
\begin{description}[leftmargin=0pt]
\item[HMC Update $\bTheta$:]  
Draw $\bp\sim\mn(\bzero, \bM)$ and initiate 
$\{\bTheta(0), \bp(0)\}=(\bTheta_{i-1},\bp)$.
Repeat the following $L$ leapfrog steps to numerically solve Hamiltonian equations
:\;
\For{$\ell=0,\dots,L-1$}{
\vskip-4ex
\begin{align*}
&\textstyle{\bTheta\{t+ (\ell+1)\delta t\}= 	\bTheta(t+ \ell\delta t) + \delta t \bM^{-1} \left[ \bp(t+ \ell\delta t)-\frac{\delta t}{2} \nabla V\left\{\bTheta(t+ \ell\delta t)\right\} \right]},\\
&\textstyle{\bp\{t+ (\ell+1)\delta t\}= 	\bp(t+ \ell\delta t) - \frac{\delta t }{2}  \left[  \nabla V\left\{\bTheta(t+ \ell\delta t)\right\} + \nabla V\left\{\bTheta(t+ (\ell+1)\delta t)\right\}\right]}.
\end{align*}
\vskip-2ex}
Letting $(\bTheta', \bp')=(\bTheta(t+ L\delta t),\bp(t+ L\delta t))$, 
set $\bTheta_{i}= \bTheta'$ with probability $\min \left[1, e^{-\left\{ H\left(\bTheta', \bp'\right)- H\left(\bTheta_{i-1},\bp\right) \right\}} \right]$,  where $H(\bTheta,\bp)=-\log \Pi(\bTheta \mid -) + \bp\trans \bM\inv \bp/2$.		

\item[Gibbs Update $\bpsi$, $\bphi$ and $\tau$:] \hspace{-.1in}
\begin{enumerate}[label=({\roman*})]
\item For $j=1, \dots, d$ and $h=1,\dots q$ sample $\wt{\psi}_{j,h}$ independently from inverse-Gaussian distributions $\mathrm{iG}\left(\tau_{i-1}{\phi_{i-1,j,h}}/{\abs{\lambda_{i,j,h}}},1\right)$ and set $\psi_{i,j,h}=1/\wt{\psi}_{j,h}$.			
\item Draw $T_{j,h}$ independently from from a generalized inverse Gaussian distributions $\mathrm{giG}(a - 1, 1, 2\abs{\lambda_{i,j,h}} )$ and set $\phi_{i,j,h}  = T_{j,h}/T$ with $T=\sum_{j,h} T_{j,h}$.
\item Sample $\tau_{i}\sim\mbox{giG}\left\{dq(1-a),1, 2\sum_{j,h} \frac{\abs{\lambda_{i,j,h}}}{\phi_{i,j,h} }\right\}$.
\end{enumerate}
\end{description}\vskip-3ex	}
\end{algorithm}
The vanilla HMC algorithm requires the following tuning parameters:  a positive real number $\delta t$,
a positive integer $L$
and a positive definite matrix $\bM$ 
of the same order as $\bTheta$.
Letting $\nabla V(\bTheta)=-\pdv {}{\bTheta} \log \Pi(\bTheta\mid -)$ be the gradient of the negative $\log$-posterior,
we outline the HMC-within-Gibbs sampler in Algorithm \ref{algo:hmc_msfa} to obtain $N$ MCMC samples from the joint posterior distribution of 
$(\bTheta,\tau,\bpsi,\bphi)$. 
Implementing Algorithm \ref{algo:hmc_msfa} requires computing $\log \Pi(\bTheta\mid -)$ as well as $\pdv{}{\bTheta} \log \Pi(\bTheta\mid -)$ at each MCMC step.
In Appendix \ref{sec:distributed_gradient}, we detail how these quantities are amenable to a parallelized implementation. 

Before validating theory and methods in empirical studies, we highlight two advantages of our sampler.
First, it can be adapted to any prior with (conditionally) differentiable density, including the spike-and-slab \citep{Ishwaran2005spikeslab}, spike-and-slab lasso \citep{rovckova2016fast}, horseshoe \citep{carvalho2009handling}, multiplicative gamma \citep{bhattacharya2011sparse}, generalized double Pareto \citep{armagan2013generalized}, cumulative shrinkage \citep{legramanti2019bayesian}, and others.
Almost all of these admit a conditionally Gaussian prior on $\bLambda$, so that adapting Algorithm \ref{algo:hmc_msfa} requires only modifying the Gibbs step. \label{pg:adapt_HMC}
Second, a crucial advantage of Algorithm \ref{algo:hmc_msfa} is its scalability with respect to the sample size.
In particular,
the log-likelihood $\L$ and its gradient depend on the data only through the study-specific sample sum of squares $\bW_{s}$, which just needs to be computed and cached once.
Choices of the tuning parameters are discussed in Section \ref{sm subsec:HMC_tuning} of the supplementary materials.

\section{Empirical Study}

\vspace*{-1ex}
\label{sec:simstudies}
To assess the proposed methodology, we generate synthetic data according to the true model \eqref{eq:MSFA_model} under three scenarios induced by different $\bLambda$ matrices, which we refer to as FM1, FM2 and FM3. 
The ground truth shared covariance structures are displayed in the first column of Figure \ref{fig:cormats}; complete details of the simulation setup and further figures appear in Section \ref{sm subsec:sim_truths} of the supplementary materials. 
For each scenario, we also examine performance under two misspecified cases:
(i) \textit{slight misspecification}, where $\bPhi_{s}=\bLambda \bAs+ \bE_{s}$ and $\bE_{s}$ is a matrix of randomly generated errors, and 
(ii) \textit{complete misspecification}, where $\bPhi_{s}$'s are in the null space of $\cspace{\bLambda}$ 
for all $s=1,\dots,S=5$. 
We vary $d=50, 200$ and $450$ and set $n_{1},\dots,n_{S} \simiid \max\{\Poisson(\frac{d}{S}),\frac{d}{S}\}$  
for each $d$.
We generate $\qs\sim \Poisson(\frac{q}{S})$ where $q=10$ for the $d=50$ and $q=20$ otherwise.

\vspace*{-2ex}
\paragraph*{Peer methods:} We compare SUFA to B-MSFA \citep{devito2018bayesian}, BFR \citep{alejandra2022} and PFA \citep{roy_pfa2020}.
However, PFA did not scale beyond $d>50$ and therefore we could only apply it for $d=50$.
Implementations of all these methods
require the user to input an upper bound on the number of latent factors; we choose the bound following the strategy in Section \ref{subsec:specify_q}.
We repeat BFR and SUFA over 100 replicates.
PFA and B-MSFA codes did not scale across independent parallel replications and we repeat them over 30 replications.
\begin{figure}[!h]
\centering
\includegraphics[trim={0cm .2cm .1cm 0cm}, clip,width=\linewidth]{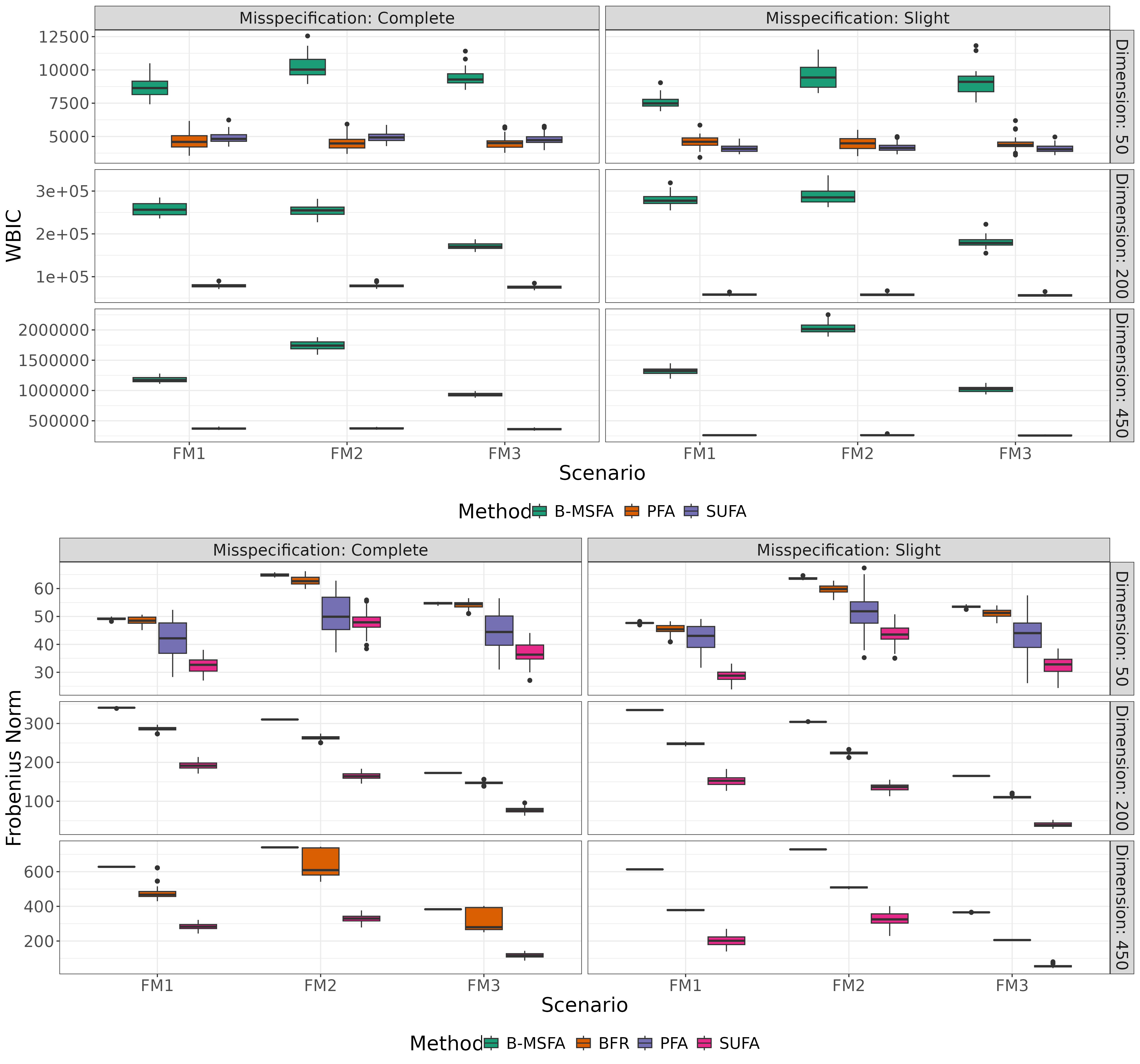}
\vskip-1ex
\caption{Comparing our proposed SUFA with B-MSFA, BFR and PFA across different simulation scenarios (FM1, FM2 and FM3) and model misspecifications: The top and bottom panels show the boxplots of the WBIC values and the boxplots of Frobenius norms between the true and estimated shared covariance $\bSigma$, respectively.}
\label{fig:main_sim}
\end{figure}   
To assess model fit, we consider the \textit{widely applicable Bayesian information criterion} \citep[WBIC,][]{wbic2013}, with lower values implying better fit.
Boxplots of the WBIC values across independent replicates are shown  in the top panel of Figure \ref{fig:main_sim}.
Notably, posterior samples are required to obtain WBIC values.
The BFR implementation provides the MAP estimate only and we can not report WBIC values for BFR.
For $d=50$, PFA performs slightly better than SUFA only in the completely misspecifed case.
As dimension grows, performance improvements under SUFA become more evident.
Identifiability issues with B-MSFA may degrade parameter estimation performance but are not expected to adversely impact goodness-of-fit measures such as WBIC.
As SUFA was motivated by improving identifiability, the better model fit is surprising, and may be explained by 
(i) improved learning of the model parameters in high-dimensional low sample size cases by SUFA due to parsimony
and
(ii) more efficient posterior sampling using the proposed HMC-within-Gibbs sampler (itself in part due to  improved parameter identifiability). 
We compute the Frobenius norms between the simulation truth and estimated values of the shared covariance matrix $\bSigma$. 
Boxplots of the norms across independent replicates are shown in the bottom panel of Figure \ref{fig:main_sim}. 
In almost every case, SUFA yields the lowest estimation errors.
\begin{figure}[!h]
\centering
\includegraphics[width=\linewidth]{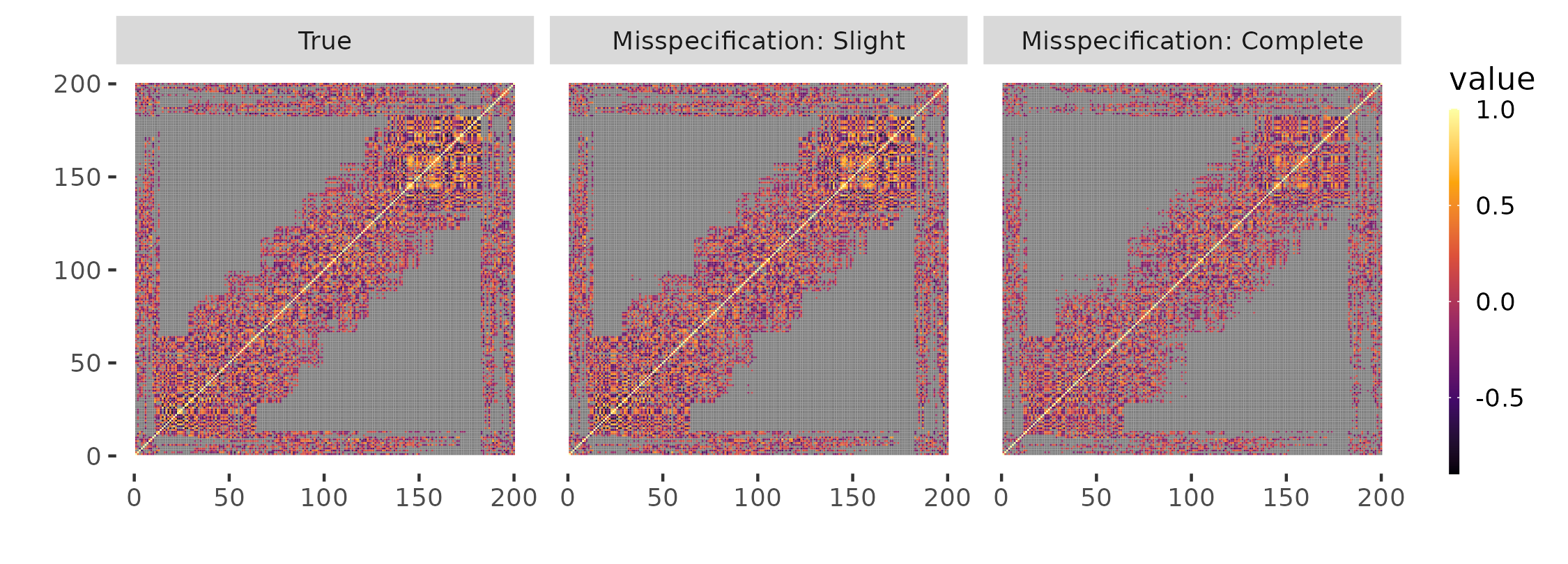}
\vskip-3.25ex
\includegraphics[width=\linewidth]{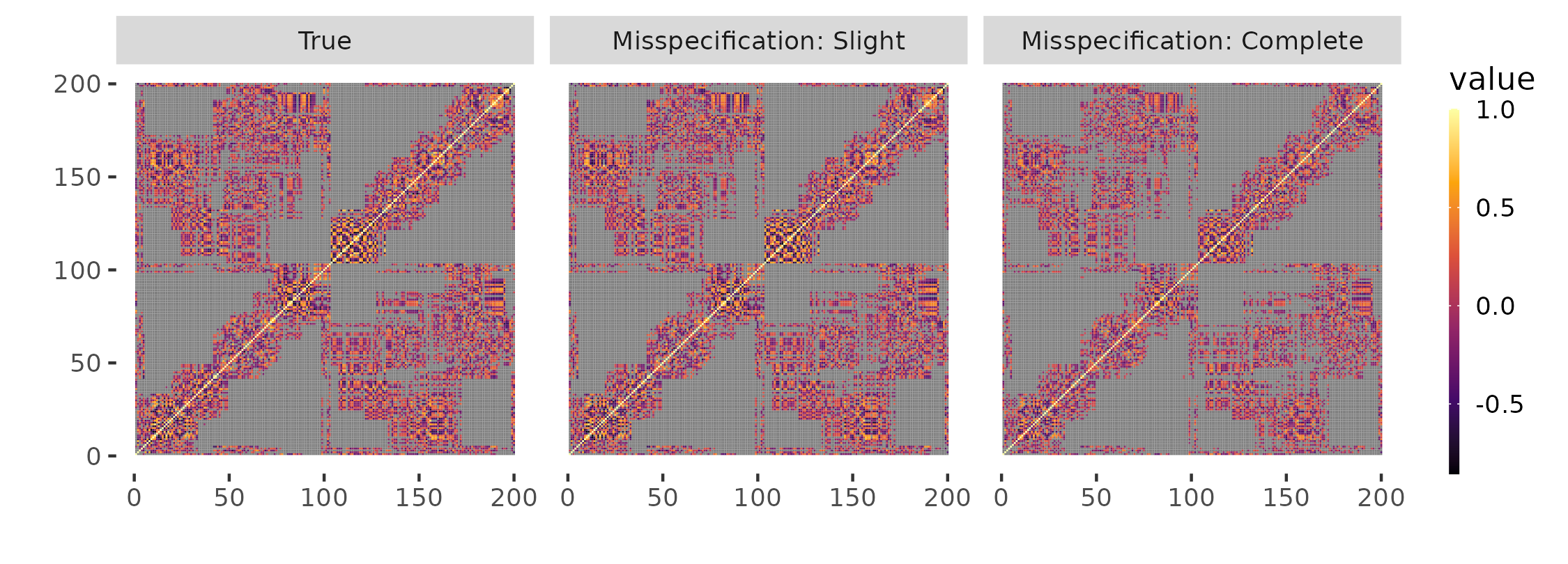}
\vskip-3.25ex
\includegraphics[width=\linewidth]{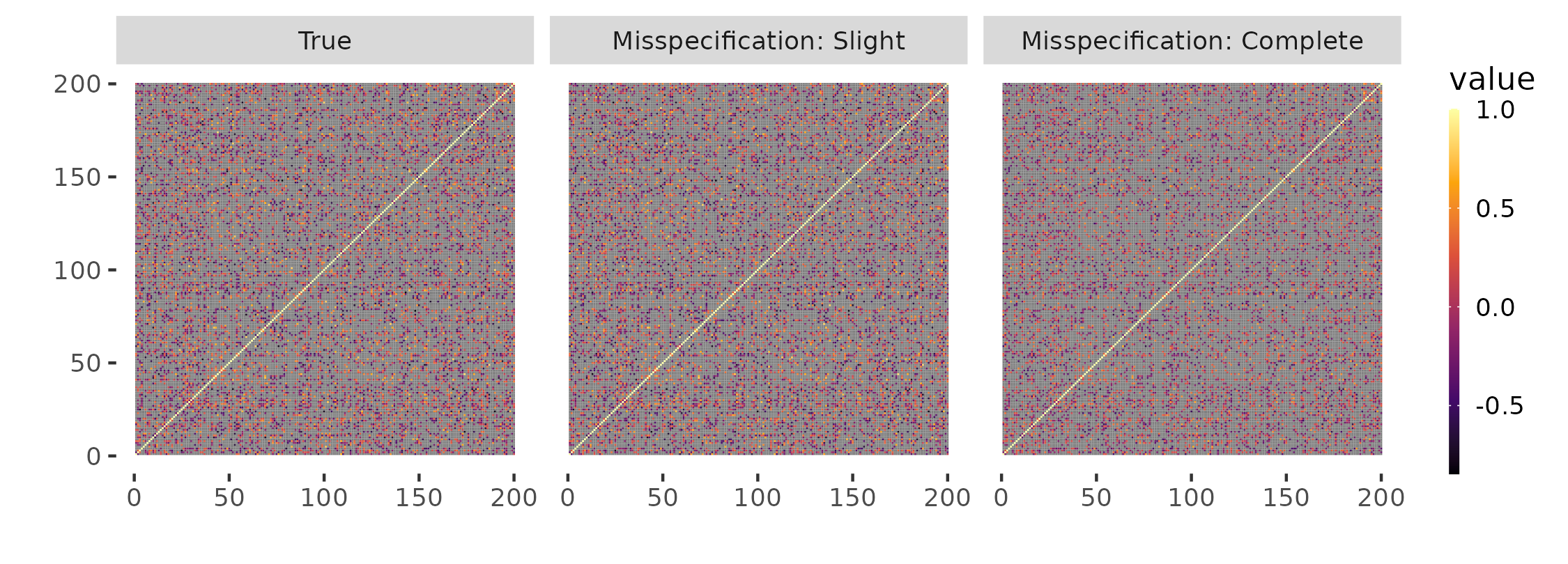}
\vskip-3.75ex
\caption{The true and estimated shared correlation matrices: The rows correspond to the three different scenarios considered.
In each row, the true correlation matrix is shown in the leftmost plot; the middle and right plots are the point estimates in the slight and complete misspecified cases, respectively.}
\label{fig:cormats}
\end{figure}

In Figure \ref{fig:cormats}, we plot the shared correlation matrices $\bR= \diag(\bSigma)^{-\half}\bSigma \diag(\bSigma)^{-\half}$ 
recovered by SUFA. 
In each setting, we plot the instance with median Frobenius norm from the ground truth out of the  the 100 independent replicates, and
we use the posterior mean of the MCMC samples as the point estimate.
Due to page limits, we only show results for $d=200$ in Figure \ref{fig:cormats}, and we use gray to depict correlations having absolute value less then $0.10$ to simplify visualization.
The heatplots under the ``\textit{True}" panel of Figure \ref{fig:cormats} 
show the simulation truths of the three different correlation structures;  
the ``\textit{Slight}" and ``\textit{Complete}" \textit{misspecification} panels show the recovered $\bR$ by the SUFA model for the two model misspecification types under consideration.
The figure clearly indicates that SUFA accurately recovers the shared correlation structures under misspecified scenarios as well.

We conduct a similar study on inferring the shared loadings. 
Here, we quantify the accuracy of the estimated $\wh{\bLambda}$ by regressing $\blambda_{j\cdot}$ on $\wh{\bLambda}$ for each $j=1,\dots,q$, where  $\blambda_{j\cdot}$ denotes the $j\th$ column of the ground truth. 
Treating $\wh{\bLambda}$ as the ``predictor matrix" in this way,
a coefficient of determination  value $R^{2}$ close to 1 suggests that the true $\bLambda$ is well-estimated up to orthogonal rotation, since $R^2$ is invariant to non-singular matrix multiplication.
Results under peer methods are summarized in Figure \ref{fig:adj_R}. 
The same post-processing strategy in Section \ref{subsec:rot_ambig} is used on all peer methods; note it does not depend on prior knowledge of $\bLambda$. 
\begin{figure}[h]
\includegraphics[trim={0cm .1cm .1cm 0cm}, clip,width=\linewidth] {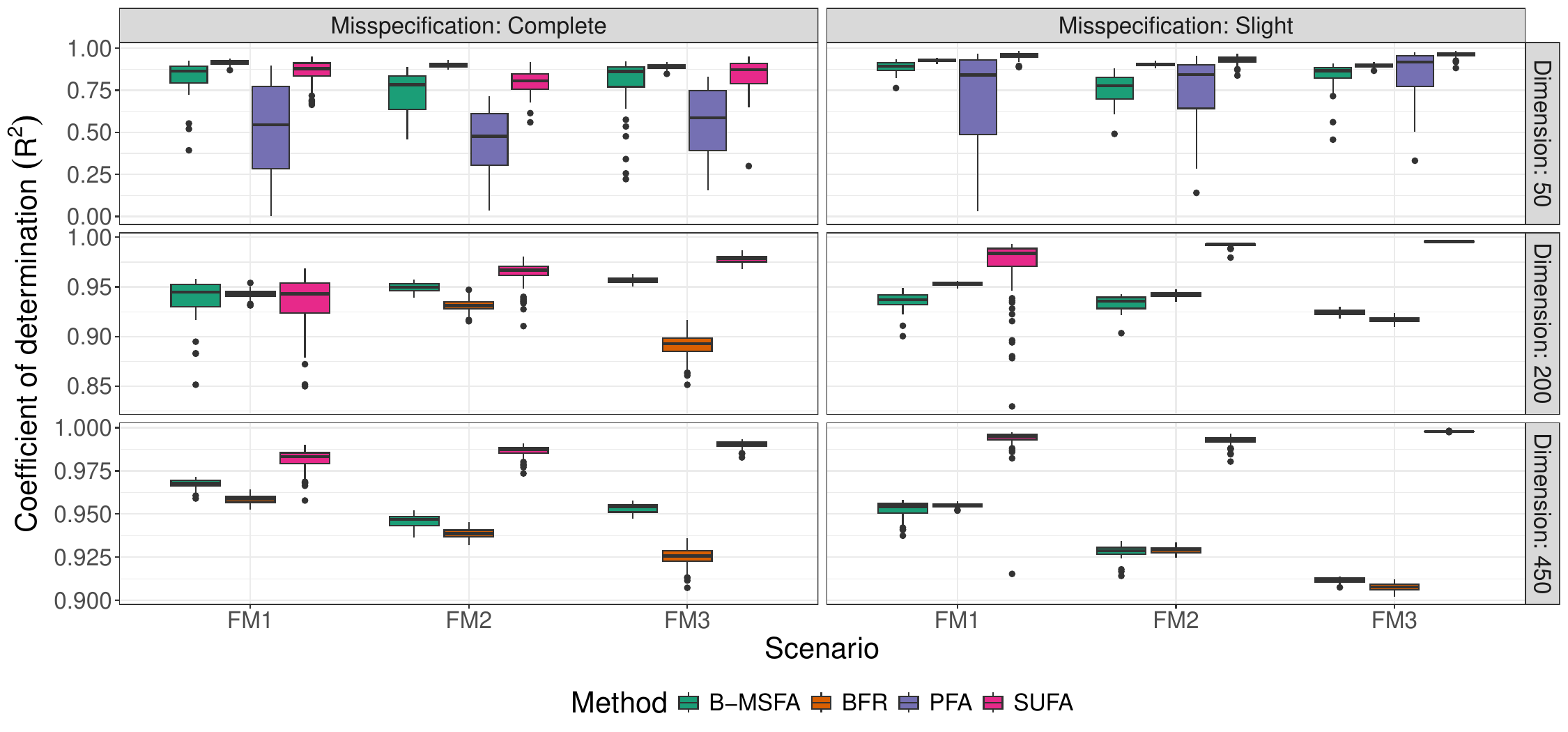}
\caption{Boxplots of median $R^2$ across columns of $\bLambda$ regressed on $\wh{\bLambda}$, across all simulation settings and alternate approaches under consideration.}
\label{fig:adj_R}
\end{figure}

Figure \ref{fig:adj_R} reveals that
even in the completely misspecified case, performance of SUFA is comparable with the other approaches,
while PFA exhibits a lot of variability in recovering $\bLambda$.
B-MSFA, BFR and SUFA all recover $\bLambda$ quite well, though SUFA exhibits better performance in high-dimensions. This is likely owing to the ability of SUFA to address the information switching issue, leading to improved learning of the shared versus study-specific structures.

While results remain favorable for inferring shared loadings, the proposed method exhibits subpar performance in recovering study-specific loadings under the complete misspecification scenario.
This may be expected as it corresponds essentially to a worst case for the assumptions behind SUFA.
Due to space limitations, we defer the complete details, figures, and discussion to Section \ref{sm subsec:additional_sim_studyspecific} in the Supplement.


Finally, we compare the total runtimes for $7,500$ MCMC iterations in  Figure \ref{fig:runtime}. 
BFR is fastest since its implementation provides the MAP estimate only without any uncertainty quantification. 
PFA is slowest due to requiring updates of $S$ dense $d\times d$  matrices.
However, owing to the distributed computing implementation of our HMC-within-Gibbs sampler, we observe dramatic improvements in computational efficiency using SUFA. 
Complete details of the parallelized implementation and a complexity analysis showing that each HMC step is of order $\O(Lqd^{2})$ appear in Appendix \ref{sec:distributed_gradient}.
\begin{figure}[!h]
\centering
\includegraphics[width=\linewidth]{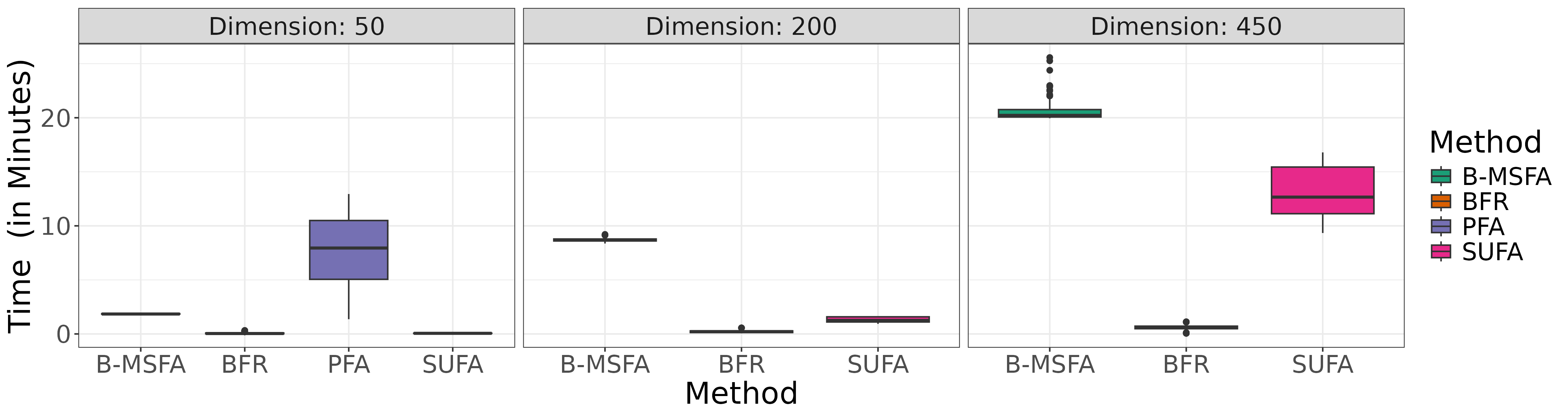}
\vskip-2ex
\caption{Comparing execution times (in minutes) of B-MSFA, BFR, PFA and SUFA across increasing dimensions.}
\label{fig:runtime}
\end{figure}

\subsection{Application to Gene Expression Data}
\label{sec:application}

We now turn to a case study on gene associations among immune cells, which serve specialized roles in innate and adaptive immune responses that function to eliminate antigens \citep{immune_cells}.  
Understanding gene associations in immune cells is of particular current interest as an essential step in developing cancer therapeutics in immunotheraphy \citep{immunotherapy}.
Here, we integrate data from three studies analyzing gene expressions. 
The first is the GSE109125 bulkRNAseq dataset, collected from 103 highly purified immunocyte populations representing all lineages and several differentiation cascades and profiled using the ImmGen  pipeline \citep{bulk_train}. 
The second study is  a microarray dataset GSE15907 \citep{micro1_train1,micro1_train2}, measured on multiple \textit{ex-vivo} immune lineages, primarily from adult B6 male mice. 
Finally, we include the GSE37448 \citep{micro2_train} microarray dataset, also  part of the Immgen project. 
After standard pre-processing (detailed in Section \ref{sm subsec:data_preprocess} of the supplementary material), we work with 474 common genes measured on 156, 146 and 628 cells from the  respective datasets. 

\begin{figure}[!b]
\centering
\begin{subfigure}{.45\linewidth}
\centering
\includegraphics[trim={.1cm .1cm .1cm .3cm}, clip,height=.275\textheight]{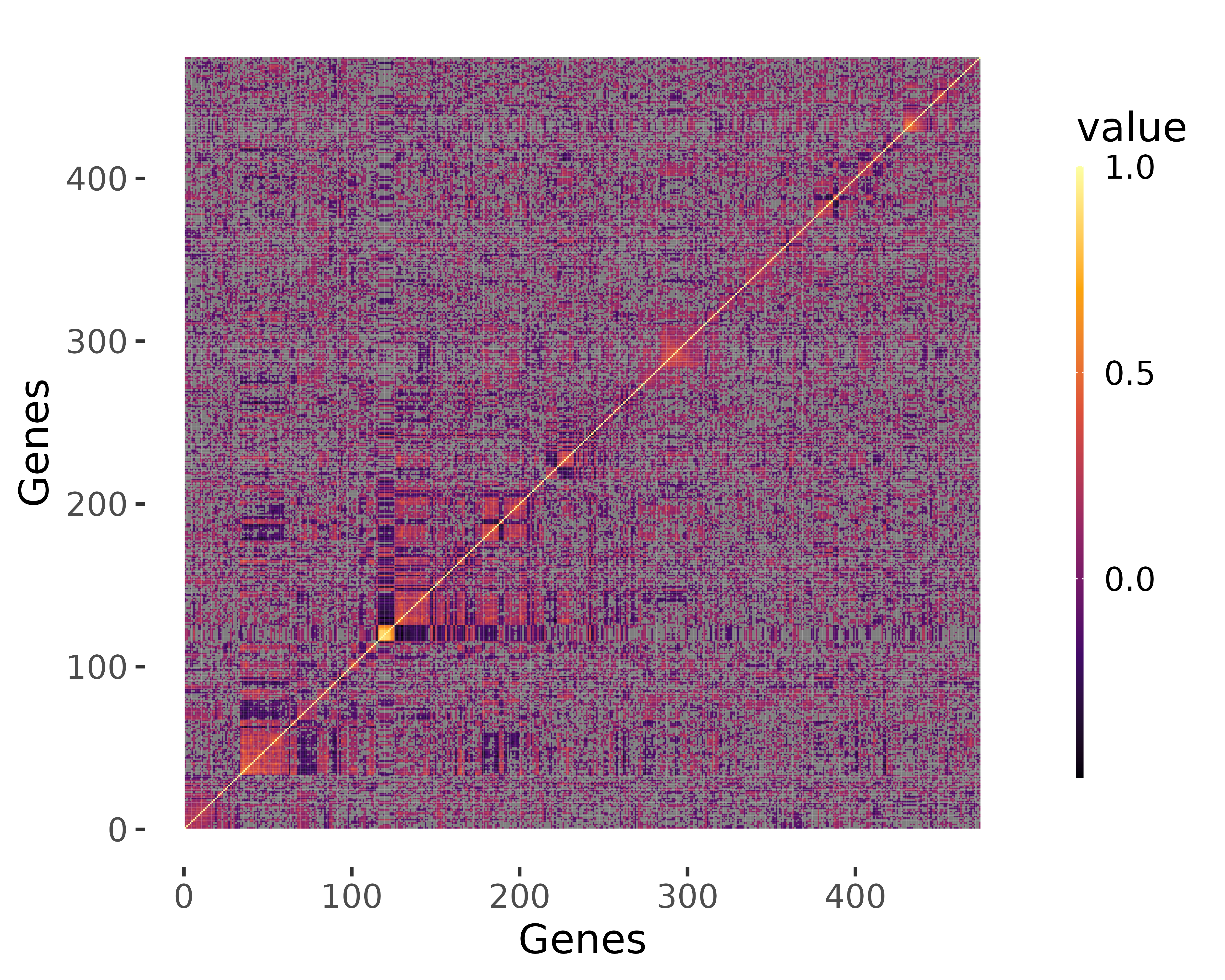}
\caption{Shared correlation matrix.}
\label{subfig:sharedmats_application}
\end{subfigure}
\begin{subfigure}{.54\linewidth}
\centering
\includegraphics[trim={1.25in 1.4in 1in 1.2in}, clip,height=.275\textheight] {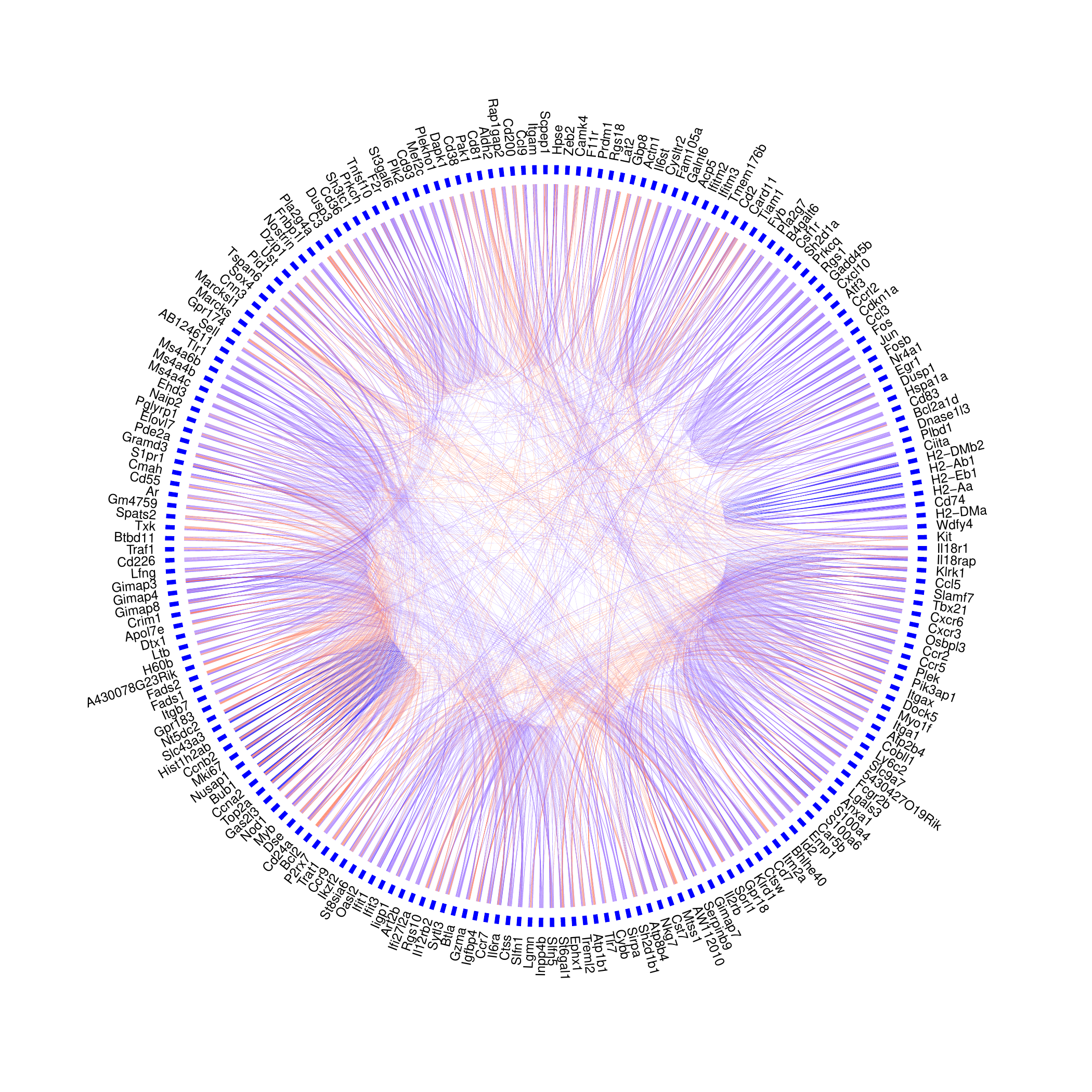}
\caption{Shared correlation network between genes.}
\label{subfig:circosplot_application}
\end{subfigure}
\end{figure}

\begin{figure}[!h]
\ContinuedFloat
\begin{subfigure}{\linewidth}
\centering
\includegraphics[trim={.18cm .42cm .1cm .18cm}, clip,width=\linewidth]{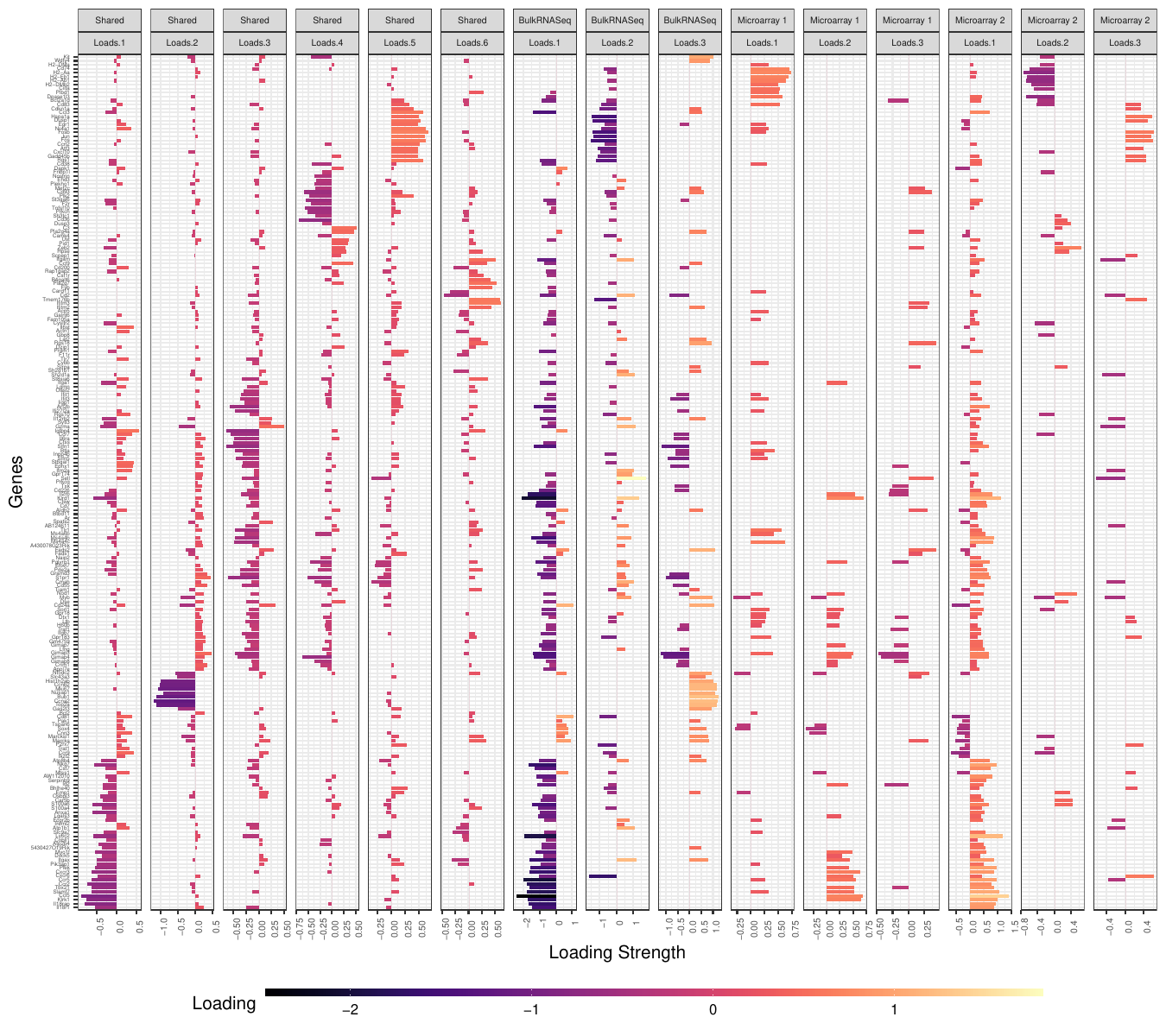}
\caption{Shared and study-specific loading matrices.}
\label{subfig:all_loads}
\end{subfigure}
\vskip-1.5ex
\caption{Results on ImmGen data: Panel (a) shows the transpose of the shared loading matrix $\bLambda$ and correlation matrix derived from $\bSigma$ from left to right; 
panel (b) is the circos plot of the dependency structure between genes, blue (red) connection implies a positive (negative), their opacities being proportional to the corresponding association strengths. 
Associations are only plotted for absolute correlation $\geq 0.25$;
panel (c) plots the top six columns with highest sum-of-squares values from the shared $\bLambda$, and repeat for the top three columns of each study-specific loading matrix.}
\label{fig:application}
\end{figure}

We apply SUFA and B-MSFA to the integrated datasets, resulting in WBIC values of $424278$ and $445916$, respectively.
\label{pg:Bmsfa_fit} 
This suggests that SUFA provides a better fit, and we hence focus on interpreting the SUFA results.
First, we visualize the estimated shared correlation structure between genes in Figure \ref{subfig:sharedmats_application}. 
We use the uncertainty quantified by the posterior samples to infer whether correlations are zero. Following the Bayesian literature on inference for sparsity patterns in covariance matrices \citep{KsheeraSagar2021precision},
we encode the off-diagonals of the correlation matrix as zero if the respective 95\% posterior credible interval contains zero.

Next, we focus on identifying important hub genes that have absolute correlation $\geq 0.25$ with at least 10 other genes. 
The resulting dependency network is summarized as a circos plot \citep{gu2014circlize} in Figure \ref{subfig:circosplot_application}.
Our findings are quite consistent with results from the prior literature: 
we observe strong positive correlation between the genes Ly6a, Ly6c1 and Ly6c2 identified within the murine Ly6 complex on chromosome 15 \citep{ly6};
similar behavior is observed within the membrane-spanning 4A class of genes, namely Ms4a4c, Ms4a4b and Ms4a6b \citep{ms4a}.
Validation of our findings via a thorough literature review is discussed in Section \ref{sm subsec:networks} of the supplementary materials.
The joint analysis of multiple datasets using our SUFA framework thus reveals interesting findings adding to the relevant immunology literature, discovering potential relationships that warrant further scientific study.

\label{pg:interpretable_factors} Even though gene expression datasets are high-dimensional, often the variability in the data is governed by relatively few latent factors often interpreted as \textit{meta-genes} or pathways 
\citep{brunet2004metagene}.
It is thus possible to potentially reduce the dimension of expression data from hundreds of genes to a  handful  of  meta-genes to  identify  distinct molecular  patterns. 
Accordingly the shared $\bLambda$  can be interpreted as the common activation profiles of the meta-genes \citep{pournara2007factor} across studies.
Differences in gene-networks across studies are captured by the study-specific latent factors ($\bzeta_{s,i}$'s in equation \eqref{eq:ourmodel}); while they are 
also loaded via the $\bLambda$ matrix, 
the respective $\bAs$ matrices allow 
varying profiles.

We visualize the six columns having largest sum-of-squares from the shared loading matrix, and repeat for the top three columns of each  study-specific loading matrix. 
These appear in Figure \ref{subfig:all_loads}.
We observe large loadings for the  structurally related genes Il18rap and Il18r1 within the Il18 receptor complex \citep{il18} in the first columns consistently across the shared and study-specific loading matrices.
The membrane-spanning 4A class of genes (Ms4a4c, Ms4a4b and Ms4a6b) exhibit similar behavior; see column 3 of the shared loading in Figure \ref{subfig:all_loads}.
These findings indicate the potential existence of meta-genes controlling  co-expression profiles along a pathway.
Additional interpretation of factor loadings are provided in Section \ref{sm subsec:loading_interpretation} of the supplementary materials.
\label{pg: loading_interpret}
Coherent with empirical studies \citep{bulk_vs_marray}, the strongest signals appear in the bulkRNAseq data despite having much smaller sample size than the microarrays.
The loadings of Microarray 2 show much stronger signals than those in Microarray 1,  possibly due to the larger sample size of the former.
These intuitive findings, corroborated by results from the prior literature, illustrate the interpretability of the inferred study-specific loadings under SUFA.

\section{Discussion} \label{sec: discussion}
This article proposes a new factor analytic approach for {covariance learning} by integrating multiple related studies in high-dimensional settings. 
We provide both practical solutions and theoretical guarantees for learning the shared and {study-specific} covariance structures, {ensuring \textit{generic identifiability}.}
Moreover, we quantify the utility of data integration under the proposed model by way of improved learning rates with each additional study in our analysis of posterior contraction properties.
The theoretical investigations on posterior concentration provide support for the study-specific inferences as well.
Overall model fit and inference on the shared covariance structure are robust to model misspecifications.
However, when model assumptions are strongly violated, we  begin to see bias in inferring study-specific terms.
These contributions are developed alongside a scalable HMC approach to posterior inference capable of handing high-dimensional datasets with massive sample size.
In realistic simulation studies, we show that our proposed method outperforms several existing approaches in multiple aspects.
Our approach yields new insights when applied to a gene network estimation problem, integrating 
immune cell gene expression datasets.

Leveraging low-rank matrix decompositions, factor models have widespread scope in contemporary data analysis regimes beyond our covariance study.
These extensions include  measurement error problems \citep{sarkar2021me}, model-based clustering \citep{chandra_lamb}, conditional graph estimation \citep{chandra2021precfact}, among many others.
Our proposed SUFA approach can be readily adopted in these domains. 
Integrating data from multiple sources is also of interest in many applications involving non-Gaussian data.
Building on copula-based factor models \citep{murray2013_copula_fa}, our approach can be hierarchically adopted to handle such data.
{When additional information is available regarding \textit{groups of features}, it can be of interest to use group sparsity priors, potentially building on the approach of \citet{schiavon2021} for more refined inference on covariance structures.}
These extensions represent several examples of promising future directions building upon the ideas in this article.

\baselineskip=14pt
\paragraph*{Supplementary materials:}
Complete prior specifications,  proofs and details of the theoretical results and technical lemmas, 
extended simulation results,
details on the gene expression datasets
and codes to reproduce the results. 

\paragraph*{Acknowledgments:}
We thank the Editor, Dr. Dylan Small, an anonymous Associate Editor and two anonymous referees for comments that led to significant improvements in the clarity and presentation of the paper.
The authors are grateful for partial funding support from NIH grants R01-ES027498 and R01-ES028804, NSF grants DMS-2230074 and PIPP-2200047, and ERC Horizon 2020 grant agreement No. 856506.

\newcommand{\Appendix}{\appendix\def\thesection{\Alph{section}}\def\thesubsection{\Alph{section}.\arabic{subsection}}}
\begin{appendix}
\Appendix
\section*{Appendix}
\appendix
\renewcommand{\thesubsection}{A.\arabic{subsection}}
\baselineskip=17pt

\subsection{Proofs of Main Results}
\label{app:proofs}
\begin{proof}[\bf Proof of Lemma \ref{lemma:colspace}]
We first show that information switching or the existence of matrices $\bC_{1}$ and $\bC_{2}$ as mentioned in \ref{identifiability_info} implies $\bigcap_{s=1}^{S}\cspace{\bAs}$ is non-null.
We let $\bC=\bC_{1}+\bC_{2}$.
Since $\bC$ is symmetric, $\bC=\bH \bD \bH\trans$ where $\bH=\begin{bmatrix}
\bh_{1}& \cdots &\bh_{q}
\end{bmatrix}$ is an orthogonal matrix and $\bD=\diag(d_{1},\dots,d_{q})$ is a diagonal matrix with real entries. 
Clearly $\bC=\sum_{i=1}^{q} d_{i}\bh_{i}\bh_{i}\trans$.
Letting $\bC^{+}=\sum_{i=1}^q d_{i}\mathbbm{1}_{d_{i}> 0}\bh_{i} \bh_{i}\trans$ and $\bC^{-}=-\sum_{i=1}^q d_{i}\mathbbm{1}_{d_{i}< 0}\bh_{i} \bh_{i}\trans$, we have $\bC=\bC^{+}- \bC^{-}$.
We establish the result in a case-by-case manner:
\begin{description}[leftmargin=3pt,itemsep=-1pt,topsep=-2ex]
    \item[Case 1: ${\bC}^{+}=\bzero$] Let $\bx$ be a non-zero vector such that $\bx\trans \bAs \bAs\trans\bx=0$.
Since by assumption $\bAs \bAs\trans +\bC$ is positive semi-definite (psd), 
$\bx\trans (\bAs \bAs\trans +\bC) \bx\geq 0$.
Now $\bx\trans (\bAs \bAs\trans +\bC) \bx= - \bx\trans \bC^{-}\bx \geq 0$ implies that $\bx\trans \bC^{-}\bx=0$ as $\bC^{-}$ is a psd matrix by construction. 
Now for any psd matrix $\bB$ and non-zero vector $\bx$, $\bx\trans\bB\bx=0$ implies that $\bx\in\nullity{\bB}$ where $\nullity{\bB}$ denotes the null space of $\cspace{\bB}$.
Hence $\nullity{ \bAs}\subset\nullity{{\bC^{-}}}$ for all $s=1,\dots,S$ implying that $\cspace{\bC^{-}}\subseteq \bigcap_{s=1}^{S} \cspace{\bAs}$.
\item[Case 2: ${\bC}^{-}=\bzero$] Due to information switching $\bAs\bAs\trans+\bC=\wt{\bA}_{s}\wt{\bA}_{s}\trans$ for $q\times\qs$ matrix $\wt{\bA}_{s}$ for all $s$.
Now $\bC=\bC^{+}\Rightarrow
\cspace{\bAs}\subset \cspace{\wt{\bA}_{s}}$.
Note that $\mathrm{rank}(\wt{\bA}_{s})\leq \qs$.
Hence
\vspace*{-3ex}
\begin{equation*}
    \mathrm{rank}(\bAs)=\qs\Rightarrow \cspace{\bAs}=\cspace{\wt{\bA}_{s}} \Rightarrow \cspace{\bC^{+}}\subset \cspace{\bAs}\Rightarrow \cap_{s=1}^{s}\cspace{\bAs} \supset \cspace{\bC^{+}} \neq \phi.
\end{equation*}
\vspace*{-6.5ex}

    \item[Case 3: ${\bC}^{+},{\bC}^{-}\neq\bzero$] If possible let $\cap_{s=1}^{S}\cspace{\bAs}=\phi$.
    This assumption implies that $\cup_{s=1}^{S}\nullity{\bAs}=\rR^{q}$.
    Note that $\bAs\bAs\trans+\bC= \wt{\bA}_{s} \wt{\bA}_{s}\trans + \bC^{-}$ where $\wt{\bA}_{s}=\begin{bmatrix}
        \bAs & \bC^{+\half}
    \end{bmatrix}$.  
    Hence $\cspace{\bAs}\subset \cspace{\wt{\bA}_{s}}$. 
    Using the argument from case 1 we have $\cspace{\bC^{-}}\subset \cspace{\wt{\bA}_{s}}$.
    Since $\bC^{-}$ is non-null, we can find a non-zero vector $\bx \in \cspace{\bC^{-}}$.
    Note that by construction $\cspace{\bC^{-}}$ and $\cspace{\bC^{+}}$ are orthogonal and hence $\bx\trans \bC^{+}\bx=0$.
    Also $\bx \in \nullity{\bA_{\wt{s}}}$ for some $\wt{s}$ as $\cup_{s=1}^{S}\nullity{\bAs}=\rR^{q}$ implying that $\bx\trans\bA_{\wt{s}}\bA_{\wt{s}}\trans\bx=0$.
    Now $\bA_{\wt{s}}\bA_{\wt{s}}\trans+\bC$ is a psd matrix implies that $\bx\trans\left(\bA_{\wt{s}}\bA_{\wt{s}}\trans+\bC\right)\bx\geq 0$. 
    By definition of $\bx$
    \vspace*{-1.5ex}
    \begin{equation*}
       \bx\trans\left(\bA_{\wt{s}}\bA_{\wt{s}}\trans+\bC\right)\bx= \bx\trans\bA_{\wt{s}}\bA_{\wt{s}}\trans \bx+ \bx\trans\bC^{+} \bx -\bx\trans\bC^{-} \bx= -\bx\trans\bC^{-} \bx<0.\label{eq:contradiction}
    \end{equation*}
    \vskip-2.5ex
    Since $\bx \in \cspace{\bC^{-}}$, $\bx\trans\bC^{-} \bx>0$ leading to the contradiction in the above equation. 
    Hence $\cap_{s=1}^{S}\cspace{\bAs}$ can not be null.
\end{description}
\vskip2ex

Next we show that a non-null $\bigcap_{s=1}^{S}\cspace{\bAs}$ guarantees the existence of a (not necessarily unique) psd matrix $\bC$ such that $\bAs \bAs\trans -\bC\succeq \bzero$ for all $s$ resulting in information switching. 
Let $\bH=\begin{bmatrix}
\bh_{1}& \cdots & \bh_{r}
\end{bmatrix} $ be a $q\times r$ matrix with orthonormal columns such that the columns form a basis of $\bigcap_{s=1}^{S} \cspace{ \bAs}$. 
Let $\bAs\bAs\trans=\bH_{s} \bD_{s}\bH_{s}\trans$ be the spectral decomposition of $\bAs\bAs\trans$. 
Clearly $\bD_{s}$ are diagonal matrices with non-negative entries. 
Let $\varsigma$ be the minimum among all positive entries of $\bD_{s}$ across all $s=1,\dots,S$. 
Define $\bC=\varsigma \bH\bI_{r}\bH\trans$. 
By construction $\bx\trans\bAs \bAs \trans \bx=0$ implies that $ \bx\trans\bC\bx=0$. 
Also for $\bx\in \cspace{\bC}$, we have $0<\bx\trans\bC\bx\leq \varsigma \norm{\bx}^2\leq \bx\trans\bAs \bAs\trans\bx$. 
Therefore, $\bAs \bAs\trans-\bC\succeq \bzero$ for all $s=1,\dots,S$, establishing the result.
\end{proof}

\begin{proof}[\bf Proof of Theorem \ref{th:no_prior_support}]
when information switching occurs,  Lemma \ref{lemma:colspace} guarantees that $\bigcap_{s=1}^{S} \cspace{\bAs}$ is non-null, so there exist vectors $\bx_{1},\dots,\bx_{S}$ such that $\bA_{1}\bx_{1}=\dots =\bA_{S} \bx_{S}\neq \bzero$. 
Now, for any non-degenerate continuous prior on $\bAs,s=1,\dots,S$, the augmented matrix 
$\begin{bmatrix}
\bA_{1}&\cdots & \bA_{S}
\end{bmatrix}$
is full rank if and only if $\sum_{s=1}^{S} \qs\leq q$.
Hence, the necessary and sufficient condition in Lemma \ref{lemma:colspace} has zero prior support.
\end{proof}

\begin{proof}[\bf Proof of Theorem \ref{thm: posterior concentration}]
For two sequences $\an,b_{n} \geq 0$ 
we let $\an\asymp b_{n}$ imply $0< \liminf \abs{{\an}/{b_{n}}} \leq \limsup \abs{{\an}/{b_{n}}}<\infty$.
We define $\bThetan=\{\bLambdan,\bDeltan,\bA_{1n},\dots,\bA_{Sn}\} $ 
as the set of all parameters, and
let $\Pn$ and $\Pnn$ denote the joint distributions of $\Dn$ under $\bThetan$ and the true value $\bThetann$, respectively. 
to be the Kullback-Leibler (KL) divergence between $\Pnn$ and $\Pn$.

Let $\PPn=\Pnon\cup \Pntw$ be a partition of the parameter space $\PPn$ where $\Pnon= \{\bThetan: \specnorm{\bAsn}^{2}\leq C n \taun^{2} \text{ for all } s=1,\dots,S \} $ for some large enough constant $C>0$ with  $n \taun^{2}=\cn^{2}\qnn\sn\log(\dn\qn)$ and $\Pntw=\Pnon^{\complement}$.
Note that $\pin \left(\specnorm{\bSigman-\bSigmann}>M\varepsilonn\mid \Dn \right)\leq \pin (\bThetan\in \Pnon: \specnorm{\bSigman-\bSigmann}>M\varepsilonn\mid \Dn )+\pin \left(\Pntw \mid \Dn \right)$.

To prove the theorem, we show that, as $n \to \infty$, the posterior distribution on $\Pnon$ concentrates around $\bSigmann$ while the remaining mass assigned to $\Pntw$ diminishes to zero. 
We define $B_{n,0} (\bThetann , \epsilon) = \left\{\bThetan \in \PPn : \kl{\Pnn}{\Pn} \leq n\epsilon^{2}\right\}$ as the set of probability measures characterized by $\bThetan$ within $\epsilon$-radius of $\Pnn$ in KL divergence. 
We state the following lemma.
\begin{lemma}
\label{th:ghosal}
Let	 $\en$ be a semimetric on $\PPn$ such that
$\en(\bThetan,\bThetann)= \frac{1}{n\taun^{2}\cn^{3} } (\specnorm{\bSigman-\bSigmann}-\cn^{2}\epsilonn )$ where  $\epsilonn\asymp \cn n\taun^{3}$.   
Then $\pin \left\{\bThetan\in\Pnon:  \en\left(\bThetan, \bThetann\right) >M\taun\mid \Dn \right\} \to 0$ in $\Pnn$-probability for a sufficiently large $M$ as long as 
the following conditions hold for sufficiently large $j\in \nN$:
\begin{enumerate}[label=({{\Roman*}})]
\item\label{cond1} For some $C>0$, $\pin\left\{B_{n,0}(\bThetann,\taun)\right\}\geq e^{-Cn\taun^{2}}$.
\item\label{cond2} Define the set $G_{j,n}=\left\{\bThetan\in \Pnon: j\taun< \en \left( \bThetan,\bThetann\right)\leq 2j\taun \right\}$. 
There exist test functions $ \varphin$ such that for some $K>0$,
$\lim_{n\to \infty} \eE_{\bThetann} \varphin= 0$ and $\sup_{\bThetan\in G_{j,n}} \eE_{\bThetan} (1-\varphin)\leq \exp(- Knj^{2}\taun^{2} )$.			
\end{enumerate}
\end{lemma}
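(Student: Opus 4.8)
The plan is to deduce this from the classical prior-mass-and-testing machinery (in the style of Ghosal, Ghosh and van der Vaart), specialized to our triangular array of Gaussian observations. Since posterior masses are bounded by $1$, it suffices to show that the expected posterior mass $\eE_{\Pnn}\,\pin\{\bThetan\in\Pnon:\en(\bThetan,\bThetann)>M\taun\mid\Dn\}$ tends to $0$, which yields the claimed convergence in $\Pnn$-probability. Write $R_n(\bThetan)$ for the likelihood ratio of $\Dn$ under $\bThetan$ versus the truth $\bThetann$, so that $\pin(B\mid\Dn)=\int_B R_n\,d\pin\big/\int R_n\,d\pin$ for any measurable $B\subseteq\PPn$. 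The first step is to lower-bound the evidence $\int R_n\,d\pin$: applying the standard lower bound for likelihood-ratio integrals over Kullback--Leibler neighborhoods to $\pin$ restricted and renormalized on $B_{n,0}(\bThetann,\taun)$ --- legitimate here because for the Gaussian sampling model the divergence $\kl{\Pnn}{\Pn}$ and its second-moment analogue $\klv{\Pnn}{\Pn}$ are of comparable order, so condition \ref{cond1} simultaneously bounds the prior mass of the ball in both --- I would obtain an event $A_n$ with $\Pnn(A_n^{c})\le (n\taun^{2})^{-1}\to 0$ on which $\int R_n\,d\pin\ge \pin\{B_{n,0}(\bThetann,\taun)\}\,e^{-2n\taun^{2}}\ge e^{-(C+2)n\taun^{2}}$, using \ref{cond1} and $n\taun^{2}=\cn^{2}\qnn\sn\log(\dn\qn)\to\infty$.

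Next I would slice the shell $\{\bThetan\in\Pnon:\en(\bThetan,\bThetann)>M\taun\}$ into the dyadic annuli $G_{j,n}$, $j\ge M$, which cover it by construction. On $A_n$, Bayes' rule gives $\pin(G_{j,n}\mid\Dn)\le e^{(C+2)n\taun^{2}}\int_{G_{j,n}}R_n\,d\pin$; multiplying by $1-\varphin$, taking $\Pnn$-expectation, and applying Fubini's theorem together with the type-II error bound in \ref{cond2},
\[
\eE_{\Pnn}\!\left[(1-\varphin)\,\mathbbm{1}_{A_n}\,\pin(G_{j,n}\mid\Dn)\right]\le e^{(C+2)n\taun^{2}}\!\int_{G_{j,n}}\!\eE_{\bThetan}(1-\varphin)\,d\pin\le e^{(C+2)n\taun^{2}}e^{-Knj^{2}\taun^{2}},
\]
since $\pin(G_{j,n})\le 1$. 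Summing over $j\ge M$ and choosing $M$ large enough that $KM^{2}>C+2$ turns this into a convergent geometric-type series that vanishes because $n\taun^{2}\to\infty$. Assembling the bounds,
\[
\eE_{\Pnn}\,\pin\{\bThetan\in\Pnon:\en(\bThetan,\bThetann)>M\taun\mid\Dn\}\le \eE_{\Pnn}\varphin+\Pnn(A_n^{c})+\sum_{j\ge M}e^{(C+2)n\taun^{2}}e^{-Knj^{2}\taun^{2}},
\]
whose three terms vanish by the type-I error bound in \ref{cond2}, by the evidence lower bound, and by the geometric-sum estimate, respectively.

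Within the present lemma I expect this assembly to be routine; the genuine work is pushed into verifying the hypotheses \ref{cond1}--\ref{cond2}, which is where the rate conditions \ref{condd2}--\ref{condd1} enter. The hardest part there is constructing exponentially powerful tests $\varphin$ for the spectral-norm-based semimetric $\en$ against the annuli $G_{j,n}$ in the \emph{over-parametrized} regime $\qn\ge\qnn$, where the composite alternative is a large family of low-rank-plus-diagonal covariances; lower-bounding $\pin\{B_{n,0}(\bThetann,\taun)\}$ under a Dirichlet--Laplace prior placed on a \emph{sparse} loading matrix of growing dimension is the secondary difficulty. A final bookkeeping point, used to pass from this lemma to Theorem \ref{thm: posterior concentration}, is the rate translation: $\en(\bThetan,\bThetann)>M\taun$ is equivalent to $\specnorm{\bSigman-\bSigmann}>\cn^{2}\epsilonn+Mn\taun^{3}\cn^{3}$, and since $\epsilonn\asymp\cn n\taun^{3}$ with $n\taun^{3}=\sqrt{\{\cn^{2}\qnn\sn\log(\dn\qn)\}^{3}/n}$, the right-hand side is of order $\cn^{6}\sqrt{\{\sn\qnn\log(\dn\qn)\}^{3}/n}=\varepsilonn$, recovering the rate stated there.
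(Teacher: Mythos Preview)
Your proposal is correct and follows the same route as the paper: both invoke the Ghosal--Ghosh--van der Vaart prior-mass-and-testing machinery, with the paper simply citing Theorems 8.19--8.22 of Ghosal and van der Vaart's book while you spell out the evidence-lower-bound/annulus-decomposition argument explicitly. Your treatment is in fact slightly more careful than the paper's in flagging that condition \ref{cond1} is stated for the Kullback--Leibler ball $B_{n,0}$ alone, whereas the evidence lower bound needs the second-moment ball as well; your resolution via the Gaussian structure is exactly what the paper establishes separately (see the proof of Theorem \ref{th:remainingmass}), so no gap remains.
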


The remaining technical arguments are included in full in the Supplemental Materials. 
We prove Lemma \ref{th:ghosal}	in Section \ref{sm sec:more_proofs}, and	
verify condition \ref{cond1} in Theorem \ref{lemma:kl_support} 
and show the existence of a sequence of test functions satisfying condition \ref{cond2} in Theorem \ref{th:testfn} there. Note that $\en\left(\bThetan, \bThetann\right) >M\taun \Leftrightarrow \specnorm{\bSigman-\bSigmann} >M\varepsilonn$. 
This implies that $\pin (\bThetan\in \Pnon: \specnorm{\bSigman-\bSigmann}>M\varepsilonn\mid \Dn )\to 0$ in $\Pnn$-probability.
Subsequently applying the dominated convergence theorem (DCT)  establishes that $\lim_{n\to\infty} \eE_{\Pnn} \pin (\bThetan\in \Pnon: \specnorm{\bSigman-\bSigmann}>M\varepsilonn\mid \Dn )= 0$.
It remains to show that the remaining mass assigned to $\Pntw$ goes to 0, which is established in Theorem \ref{th:remainingmass}. 
\end{proof}

\begin{proof}[\bf Proof of Corollary \ref{cor: posterior concentration_indiv}]
Let $\bFsn=\bLambdan\bAsn\bAsn\trans\bLambdan\trans$ and $\bFsnn=\bLambdann\bAsnn\bAsnn\trans\bLambdann\trans$.
Using the notations defined in the proof of Theorem \ref{thm: posterior concentration}, we have
$\pin (\specnorm{\bFsn-\bFsnn}>M\varepsilonn\mid \Dn )\leq \pin (\bThetan\in \Pnon: \specnorm{\bFsn-\bFsnn}>M\varepsilonn\mid \Dn )+\pin \left(\Pntw \mid \Dn \right)$.    
As Theorem \ref{th:remainingmass} in the supplementary materials shows that $\pin \left(\Pntw \mid \Dn \right)\to 0$ in $\Pnn$-probability we focus on the first expression in the previous display in this proof. 

Note that $\bSigmasnn=\bSigmann+ \bFsnn$ and $\bSigmasn=\bSigman+ \bFsn$,
and therefore $\pin(\bThetan\in \Pnon: \specnorm{\bFsn-\bFsnn} >M\varepsilonn\mid \Dn  )
\leq 
\pin(\bThetan\in \Pnon: \specnorm{\bSigman-\bSigmann} >\frac{M}{2}\varepsilonn\mid \Dn  )+
\pin(\bThetan\in \Pnon: \specnorm{\bSigmasn-\bSigmasnn} >\frac{M}{2} \varepsilonn\mid \Dn  )$.    
In Theorem \ref{thm: posterior concentration} we established that 
the first term in the previous display diminishes to 0  
in $\Pnn$-probability;
in Theorem \ref{sm thm:margibal_var} of the supplementary materials we show that $\pin(\bThetan\in \Pnon: \specnorm{\bSigmasn-\bSigmasnn} >\frac{M}{2} \varepsilonn\mid \Dn  )\to0$ in $\Pnn$-probability.
Hence $\pin (\specnorm{\bFsn-\bFsnn}>M\varepsilonn\mid \Dn )\to 0$ in $\Pnn$-probability.
Subsequently applying DCT we conclude the proof.
\end{proof}

\subsection{Sampler Details and Distributed Computation}
\label{sec:distributed_gradient}
In this section, instead of $\bDelta$ we use its one-to-one transformation  $\wt{\bdelta}=(\wt{\delta}_{1},\dots, \wt{\delta}_{d})\trans =\log \diag(\bDelta) $. 
It allows updating the $\wt{\bdelta}$ vector in the unrestricted space $\rR^{d}$ resulting in simplified numerical operations.
Accordingly, we redefine $\bTheta:=(\bLambda, \wt{\bdelta}, \bA_{1},\dots, \bA_{S})$.

\vspace*{-2ex}
\paragraph*{Gradients of the $\log$-posterior:}
Following equation \eqref{eq:full_posterior}, we have 
\vskip-3ex
\begin{equation*}
\textstyle{\log \Pi(\bTheta\mid -) = \L+ \log\Pi(\bLambda \mid \bpsi, \bphi, \tau)+ \log \Pi(\wt{\bdelta})+ \sum_{s=1}^{S} \log\Pi(\bAs)},    
\end{equation*}
\vskip-1.5ex
\noindent where 
$\log \Pi(\bLambda\mid \bpsi, \bphi, \tau)= K_{\bLambda}-  \frac{1}{2\tau^{2}} \sum_{j=1}^{d} \sum_{h=1}^{q} \frac{\lambda_{j,h}^{2}}{ \psi_{j,h} \phi_{j,h}^{2} } $, 
$\log\Pi(\wt{\bdelta})=K_{\wt{\bdelta}} - \frac{1}{2\sigma^{2}_{\delta}} \sum_{j=1}^{d}(\wt{\delta}_{j}-\mu_{\delta})^{2} $ and 
$\log\Pi(\bAs)=K_{\bAs}- \frac{1}{2\baA}\sum_{j=1}^{q} \sum_{h=1}^{\qs}a_{s,j,h}^{2}$.
The constants $K_{\bLambda}$, $K_{\wt{\bdelta}}$ and $K_{\bAs}$'s are never required as they cancel out in the Metropolis-Hastings ratio.
Denoting the Hadamard product by $\odot$, the partials of $\log\Pi(\bLambda \mid -)$ 
are given below:
\vskip-5ex
\begin{align}
&\textstyle{\pdv{ 	} { \bLambda} \log \Pi(\bTheta \mid -)	=- \sum_{s=1}^{S} \bGs\bLambda \bCs -  \left(\frac{\lambda_{j,h}}{\psi_{j,h} \phi_{j,h}^{2}  \tau^{2}}\right)_{d \times q}} , \label{eq:grad_lambda}\\
&\textstyle{\pdv {} { \wt{\bdelta}} \log \Pi(\bTheta \mid -) =-\half\sum_{s=1}^{S} \diag(\bGs) \odot \exp(\wt{\bdelta})   - \left(\frac{\wt{\delta}_{1}-\mu_{\delta}} {\sigma^{2}_{\delta}},\dots, \frac{\wt{\delta}_{d}-\mu_{\delta}} {\sigma^{2}_{\delta}}\right)\trans},\label{eq:grad_delta}\\
&\textstyle{ \pdv{ 	} { \bAs} \log \Pi(\bTheta \mid -)	=- \bLambda\trans \bGs\bLambda \bAs - \frac{1}{\baA}\bAs},\label{eq:grad_A} 
\end{align}
\vskip-2ex
\noindent where $\bGs=\ns \bSigma_{s}^{-1}- \bSigma_{s}^{-1}\bW_{s}\bSigma_{s}^{-1}$ and  $\bCs=\bI_{q}+ \bAs\bAs\trans$. 

\vspace*{-2ex}
\paragraph*{Distributed gradient and likelihood computation:}
In each HMC step of Algorithm \ref{algo:hmc_msfa}, calculating $\log \Pi(\bTheta \mid -)$ and its gradients $\pdv{}{\bTheta} \log \Pi(\bTheta \mid -)$  are required for $L$ many leapfrog operations.
Note that the first terms in the RHS of \eqref{eq:grad_lambda}-\eqref{eq:grad_A} are gradients of $\L$ with respect to parameters, while the second terms are gradients of the prior densities.
Although computing the gradients of $\L$ are seemingly intensive,
these can be obtained very efficiently by dividing the calculations into parallel sub-operations across different studies.
These sub-operations can be distributed across parallel processes in multi-thread computing systems.
We elaborate the procedure in Algorithm \ref{algo:distributed_computing}, also summarized as a schematic in Figure \ref{fig:distributed_computing}. 
\begin{figure}[h]
\centering
\includegraphics[width=\linewidth]{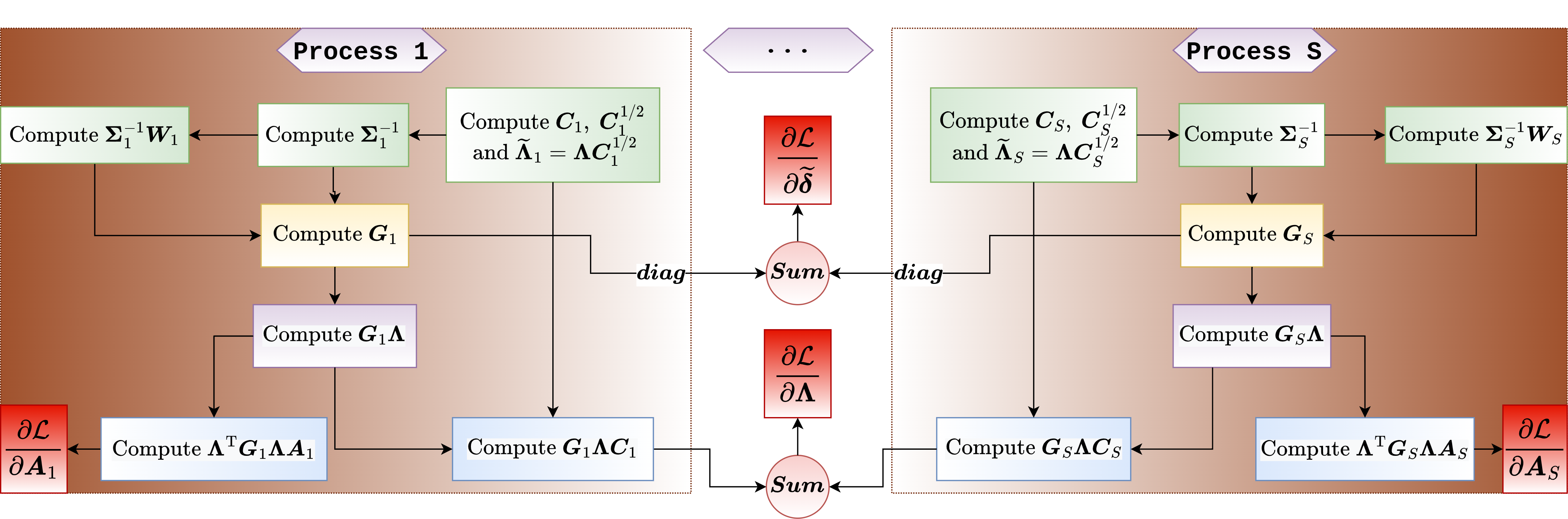}
\caption{Schematic diagram of the distributed gradient computation. The figure sketches out a flowchart to avoid repetitive calculations by storing variables and subsequently using them in following steps.}
\label{fig:distributed_computing}
\end{figure}

Lemma \ref{lem:numerical_complexity} analyzes its complexity.

\begin{algorithm}[h]
	\footnotesize
\caption{Distributed Gradient Computation of the $\log$-likelihood $\L$}
\label{algo:distributed_computing}
\SetAlgoLined
\DontPrintSemicolon
\textbf{parallel} \For{$s=1,\dots,S$}{
\compstore $~\bCs$, $\bCs^{\half}$, $\wt{\bLambda}_{s}=\bLambda\bCs^{\half}$ where $\bCs^{\half}=\texttt{Cholesky}(\bCs)$ and $\bSigma_{s}\inv=(\wt{\bLambda}_{s} \wt{\bLambda}_{s} \trans+\bDelta)^{-1}=\bDelta\inv$ $- \bDelta\inv \wt{\bLambda}_{s} (\bI_{q}+\wt{\bLambda}_{s}\trans\bDelta^{-1}\wt{\bLambda}_{s})	^{-1} \wt{\bLambda}_{s}\trans\bDelta^{-1}$.\label{mat_inv}\;
\compstore $~\bSigma_{s}^\inv \bW_{s}$ and subsequently $\bGs$.\label{step:bigmats}\;
\compstore $~\bGs \bLambda$ and $\bGs \bLambda \bCs$.\label{step:gradlambda}\;
\textbf{Compute} $\bLambda\trans \bGs\bLambda \bAs$ to obtain $\pdv{ \L	} {\bAs} \label{step5}$.
}
\textbf{Sum} the vectors $\diag(\bGs)$ from Step \ref{step:bigmats} across $s=1,\dots, S$ to obtain $\pdv{ \L} { \wt{\bdelta}}$. \label{sum_del}\;
\textbf{Sum} the matrices $\bGs \bLambda \bCs$ from Step \ref{step:gradlambda} across $s=1,\dots, S$ to obtain $\pdv{ \L	} {\bLambda} $. \label{sum_lam}	
\end{algorithm}

\begin{lemma}
\label{lem:numerical_complexity}
The runtime complexity of each HMC step is  $ \mathcal{O}(Lqd^{2})$ with $L$ being the number of leapfrog steps.
\end{lemma}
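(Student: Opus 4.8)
The plan is to bound the cost of a single leapfrog step and multiply by $L$. Within a leapfrog step the only nontrivial work is one evaluation of $\nabla\log\Pi(\bTheta\mid-)$, together with $\mathcal{O}(1)$ evaluations of $\log\Pi(\bTheta\mid-)$ (at the proposal endpoints, for the Metropolis--Hastings ratio); all of these are produced by Algorithm~\ref{algo:distributed_computing}, which also yields $\log\abs{\bSigma_{s}}$ and $\tr(\bSigma_{s}\inv\bW_{s})$ as by-products of quantities it already forms. Everything else is cheap: the prior-gradient terms in \eqref{eq:grad_lambda}--\eqref{eq:grad_A} are built elementwise in $\mathcal{O}(dq+\sum_{s}q\qs)=\mathcal{O}(dq)$ time (using $\sum_{s}\qs\le q<d$); taking the mass matrix $\bM$ diagonal, as is standard, the leapfrog updates of $(\bTheta,\bp)$ also cost $\mathcal{O}(\dim\bTheta)=\mathcal{O}(dq)$; and the Gibbs refresh of $(\tau,\bphi,\bpsi)$ touches $\mathcal{O}(dq)$ scalars. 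So the whole step is dominated by the cost of running Algorithm~\ref{algo:distributed_computing} once, which I bound next.

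The crux is that every operation in Algorithm~\ref{algo:distributed_computing} producing a $d\times d$ object can be executed in $\mathcal{O}(qd^{2})$ time by never multiplying two unstructured $d\times d$ matrices. Work study by study. Step~\ref{mat_inv}: $\bCs=\bI_{q}+\bAs\bAs\trans$, its Cholesky factor $\bCs^{\half}$, and $\wt{\bLambda}_{s}=\bLambda\bCs^{\half}$ cost $\mathcal{O}(q^{2}\qs+q^{3}+dq^{2})$; via Woodbury, $\bSigma_{s}\inv=\bDelta\inv-\bDelta\inv\wt{\bLambda}_{s}\bN_{s}\inv\wt{\bLambda}_{s}\trans\bDelta\inv$ with capacitance matrix $\bN_{s}=\bI_{q}+\wt{\bLambda}_{s}\trans\bDelta\inv\wt{\bLambda}_{s}$, so forming $\bN_{s}$, its Cholesky factor and $\bN_{s}\inv$ costs $\mathcal{O}(dq^{2}+q^{3})$, and $\log\abs{\bSigma_{s}}=\log\abs{\bDelta}+\log\abs{\bN_{s}}$ then costs $\mathcal{O}(d+q)$. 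Step~\ref{step:bigmats}: compute $\bSigma_{s}\inv\bW_{s}=\bDelta\inv\bW_{s}-\bDelta\inv\wt{\bLambda}_{s}\big(\bN_{s}\inv\big(\wt{\bLambda}_{s}\trans(\bDelta\inv\bW_{s})\big)\big)$, in which the cached dense matrix $\bW_{s}$ is only ever scaled by the diagonal $\bDelta\inv$ or multiplied by the thin $d\times q$ factor $\wt{\bLambda}_{s}$, giving $\mathcal{O}(qd^{2})$; then $\bGs=\ns\bSigma_{s}\inv-(\bSigma_{s}\inv\bW_{s})\bSigma_{s}\inv$ is assembled by applying the same rank-$q$-plus-diagonal form of $\bSigma_{s}\inv$ on the right, again $\mathcal{O}(qd^{2})$, and $\tr(\bSigma_{s}\inv\bW_{s})$ and $\diag(\bGs)$ are read off in $\mathcal{O}(d)$. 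Steps~\ref{step:gradlambda}--\ref{step5}: $\bGs\bLambda$, $\bGs\bLambda\bCs$ and $\bLambda\trans\bGs\bLambda\bAs$ cost $\mathcal{O}(d^{2}q+dq^{2}+q^{2}\qs)$. Since $\qs<q<d$, each stage is $\mathcal{O}(qd^{2})$, so the per-study cost is $\mathcal{O}(qd^{2})$.

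The $S$ per-study blocks are independent and run in parallel, and the only remaining work — the $\mathcal{O}(S)$-term reductions of Steps~\ref{sum_del}--\ref{sum_lam} assembling $\pdv{\L}{\wt{\bdelta}}$ and $\pdv{\L}{\bLambda}$, then adding the prior gradients — is $\mathcal{O}(dq)$. Hence one gradient/likelihood evaluation, and thus one leapfrog step, costs $\mathcal{O}(qd^{2})$ (treating $S$ as a fixed constant, or equivalently as the per-processor wall-clock cost on $S$ workers), and an HMC step consisting of $L$ leapfrog steps costs $\mathcal{O}(Lqd^{2})$.

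The main obstacle is the treatment of $\bSigma_{s}\inv\bW_{s}\bSigma_{s}\inv$ inside $\bGs$: written naively this is a product of two dense $d\times d$ matrices and costs $\mathcal{O}(d^{3})$, so the argument rests on reorganizing the associations so that $\bW_{s}$ — which, being a full sample sum-of-squares matrix, cannot be compressed — is contracted only against the $d\times q$ factors $\wt{\bLambda}_{s}$ or against the diagonal $\bDelta\inv$; once this reorganization is in place, verifying that each remaining stage is $\mathcal{O}(qd^{2})$ is routine arithmetic with the dimensions $\qs<q<d$.
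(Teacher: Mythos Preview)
Your proof is correct and follows essentially the same approach as the paper: a step-by-step accounting of Algorithm~\ref{algo:distributed_computing} showing each per-study stage is at most $\mathcal{O}(qd^{2})$, with the dominant cost at Step~\ref{step:bigmats}, then multiplying by $L$. If anything, you are more explicit than the paper about the one delicate point---the paper simply asserts that the Woodbury form ``reveals that computing $\bGs$ is $\mathcal{O}(qd^{2})$'', whereas you spell out the association order that keeps $\bW_{s}$ from ever meeting another dense $d\times d$ matrix---so your version is a slight sharpening of the same argument.
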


\end{appendix}



{\small 
\baselineskip=10pt
\bibliographystyle{natbib}
\bibliography{{main}}


\clearpage\pagebreak\newpage
\newgeometry{textheight=9in, textwidth=6.5in,}
\pagestyle{fancy}
\fancyhf{}
\rhead{\bfseries\thepage}
\lhead{\bfseries SUPPLEMENTARY MATERIALS}

\baselineskip 20pt
\begin{center}
{\LARGE{Supplementary Materials for\\}} 

\papertitle
\vskip10mm
\baselineskip=12pt

\authors
\vskip 20pt 
\end{center}

\setcounter{equation}{0}
\setcounter{page}{1}
\setcounter{table}{1}
\setcounter{figure}{0}
\setcounter{section}{0}
\numberwithin{table}{section}
\renewcommand{\theequation}{S.\arabic{equation}}
\renewcommand{\thesubsection}{S.\arabic{section}.\arabic{subsection}}
\renewcommand{\thesection}{S.\arabic{section}}
\renewcommand{\thepage}{S.\arabic{page}}
\renewcommand{\thetable}{S.\arabic{table}}
\renewcommand{\thefigure}{S.\arabic{figure}}
\baselineskip=15pt

\vskip 10mm
Supplementary materials provide further details on Hamiltonian Monte Carlo,
complete prior specifications,  
proofs and details of the theoretical results and technical lemmas, 
extended simulation results,
and details on the gene expression datasets. 

\baselineskip=16pt
\newpage

\section{Details on Prior Specifications} \label{sec: sm hyperparameters}
\paragraph*{Specifics of the Dirichlet-Laplace (DL) prior:} On a $d$-dimensional vector $\btheta$, the DL prior with parameter $a$, denoted by $\DL(a)$, can be specified hierarchically as
\begin{equation}
\textstyle{\theta_{j} \mid \bphi,\tau \simind  \DE(\phi_{j} \tau ),~
\bphi\sim \Dir(a,\ldots,a),~
\tau \sim \Ga\left(da,\half\right)}, \label{eq:DL_original}    
\end{equation}
where $\theta_{j}$ is the $j\th$ element of $\btheta$, 
$\tau \in \rR$, $ \bphi \in \rR^{d}$,
$\DE(b)$ is a Laplace distribution with mean 0 and variance $2b^{2}$,
$\Dir(a_{1}, \dots, a_{d} )$ is the $d$-dimensional Dirichlet distribution, 
and $\Ga(a,b)$ is the gamma distribution with mean $a/b$ and variance $a/b^{2}$.
For straightforward Gibbs sampling, often the following equivalent representation of model \eqref{eq:DL_original} is used
\begin{equation}
\textstyle{\theta_{j} \mid \bpsi,\bphi,\tau \simind \mn(0,\psi_{j}\phi_{j}^{2}\tau^{2}),~
\psi_{j} \simiid \Exp\left(\half\right),~
\bphi\sim \Dir(a,\ldots,a),~
\tau \sim \Ga\left(da,\half\right)}, 
\label{eq:dir_laplace_prior}
\end{equation}
\noindent where $\bpsi \in \rR^{d}$ 
and $\Exp(a)$ is an exponential distribution with mean $1/a$.

\paragraph*{Choice of hyperparameters:} Following the suggestions by \citetlatex{dir_laplace} we set $a=\half$ as the DL hyperparameter.
Recall that we have assumed $a_{s,j,h}\simiid \mn(0,\baA)$ and 
$\log \delta_{j}^{2}\simiid \mn(\mu_{\delta}, \sigma_{\delta}^{2})$ where $\bAs=((a_{s,j,h}))$ and $\bDelta=\diag(\delta_{1}^{2},\dots,\delta_{d}^{2})$.
To specify weakly informative priors,
we set $\baA=1$ and choose $\mu_{\delta}, \sigma_{\delta}^{2}$ such that $\eE(\delta_{j}^{2})=1$ and $\var(\delta_{j}^{2})=7$ a priori for all $j=1,\dots,d$.

\section{Proofs of Theoretical Results} \label{sec: sm proofs}
\paragraph*{Notations:} 
For two sequences $a_{n},b_{n} \geq 0$, $a_{n}\precsim b_{n}$ implies that $a_{n}\leq C b_{n}$ for some constant $C>0$;
$\an\asymp b_{n}$ implies that $0< \liminf \abs{{\an}/{b_{n}}} \leq \limsup \abs{{\an}/{b_{n}}}<\infty$.
$\abs{\bA}$ denotes the determinant of the square matrix $\bA$.
For a set $S$, $\abs{S}$ denotes its cardinality.
Let $\norm{\bx}$ be the Euclidean norm of a vector $\bx$.
We define $\bThetan=\{\bLambdan,\bDeltan,\bA_{1n},\dots,\bA_{Sn}\} $ 
as the set of all parameters and
$\bThetann=\{\bLambdann,\bDeltann, \bA_{01n},\dots,\bA_{0S}\}$ as the true data-generating values, and
let $\Pn$ and $\Pnn$ denote the joint distributions of $\Dn=\{\bY_{1,1},\dots,\bY_{1,n_{1}},\dots, \bY_{S,1},\dots,\bY_{S,n_{S}}\}$ under $\bThetan$ and the true value $\bThetann$, respectively. 
We define $\kl{\Pnn}{\Pn}$ 
to be the Kullback-Leibler (KL) divergence between $\Pnn$ and $\Pn$.

{For brevity of notations, we reuse $C$, $\wt{C}$, $C''$, etc. in the proofs to denote constants whose values may not be the same throughout the same proof.
Nevertheless, we are careful to make sure that these quantities are indeed constants.}

We define the following quantities extensively used in the proofs\vskip-6ex
\begin{align*}
\tLambdann&=\begin{bmatrix} \bLambdann & \bzero_{\dn\times {(\qn-\qnn)}} \end{bmatrix}:   \bLambdann\text{ padded with $\bzero$ columns to have the same order as $\bLambdan$},\\
\tAsnn&=\begin{bmatrix} \bAsnn & \bzero_{\qsnn  \times {(\qsn-\qsnn)}} \end{bmatrix}:   \bAsnn\text{ padded with $\bzero$ columns to have the same order as $\bAsn$},\\
\tLambdasnn&=\tLambdann\tAsnn:  \text{product of the padded matrices to have the same order as $\bLambdan\bAsn$}.
\end{align*}

\begin{theorem}[KL support]
$\pin \left\{B_{n,0}(\Pnn,\taun)\right\} \geq e^{-Cn\taun^{2}}$ for $n\taun^{2}\asymp\cn^{2} \sn\qnn\log(\dn\qn)$.
\label{lemma:kl_support}
\end{theorem}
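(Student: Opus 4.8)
The plan is to establish this estimate, which is precisely condition \ref{cond1} of Lemma \ref{th:ghosal}, via the usual prior-mass argument of Ghosal--van der Vaart adapted to sparse factor models as in \citet{pati2014}. The first step is to replace the Kullback--Leibler ball $B_{n,0}(\bThetann,\taun)$ by a box around the zero-padded true parameters $(\tLambdann,\bDeltann,\tAsnn)$. Since $\Pnn$ and $\Pn$ are centered Gaussians, $\kl{\Pnn}{\Pn}=\half\sum_{s=1}^{S}\ns\big[\tr(\bSigmasn^{-1}\bSigmasnn)-\dn+\log(\abs{\bSigmasn}/\abs{\bSigmasnn})\big]$; writing $\bSigmasn^{-1/2}\bSigmasnn\bSigmasn^{-1/2}=\bI+\bE_{s}$ and using $x-\log(1+x)\le x^{2}$ near $0$ gives $\kl{\Pnn}{\Pn}\lesssim\sum_{s}\ns\specnorm{\bSigmasn^{-1}}^{2}\fnorm{\bSigmasn-\bSigmasnn}^{2}$ once the covariances are close. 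On a small box around $\bDeltann$ one has $\bDeltan\succeq(\delmin^{2}/2)\bI$, hence $\specnorm{\bSigmasn^{-1}}\le 2/\delmin^{2}$, and since $\sum_{s}\ns=n$ it suffices to force $\max_{s}\fnorm{\bSigmasn-\bSigmasnn}\lesssim\taun$.

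Second, I would bound $\fnorm{\bSigmasn-\bSigmasnn}$ by parameter distances through the telescoping identity
\[
\bSigmasn-\bSigmasnn=(\bLambdan\bLambdan\trans-\tLambdann\tLambdann\trans)+(\bLambdan\bAsn\bAsn\trans\bLambdan\trans-\tLambdasnn\tLambdasnn\trans)+(\bDeltan-\bDeltann),
\]
submultiplicativity, and the ground-truth operator-norm bounds of \ref{ass3}--\ref{ass4} (which give $\specnorm{\tLambdann}^{2}=o(\cn/\qnn)$, $\specnorm{\tLambdasnn}^{2}=o(\cn)$, $\specnorm{\bDeltann}=o(\cn)$). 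This shows that on the box $\mathcal{B}=\{\fnorm{\bLambdan-\tLambdann}\le\xi_{1},\ \max_{s}\fnorm{\bAsn-\tAsnn}\le\xi_{2},\ \max_{j}\abs{\delta_{j}^{2}-\delta_{0j}^{2}}\le\xi_{3}\}$ one has $\fnorm{\bSigmasn-\bSigmasnn}\lesssim\cn(\xi_{1}+\xi_{2})+\sqrt{\dn}\,\xi_{3}$ up to lower order terms, so choosing $\xi_{1}\asymp\xi_{2}\asymp\taun/\cn$ and $\xi_{3}\asymp\taun/\sqrt{\dn}$ yields $\mathcal{B}\subseteq B_{n,0}(\bThetann,\taun)$.

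Third and most substantively, I would lower-bound $\pin(\mathcal{B})$, which factors over $\bLambdan$, the $\bAsn$'s and $\bDeltan$ by prior independence. The dominant factor is $\pin\{\fnorm{\bLambdan-\tLambdann}\le\xi_{1}\}$: the target $\tLambdann$ has at most $\sn\qnn$ nonzero entries among its $\dn\qn$ coordinates, all bounded, so the Dirichlet--Laplace small-ball bound of \citet{pati2014} with $\an=1/(\dn\qn)$ gives $-\log\pin\{\fnorm{\bLambdan-\tLambdann}\le\xi_{1}\}\lesssim\sn\qnn\log(\dn\qn)\asymp n\taun^{2}$. The Gaussian small-ball estimate gives $-\log\pin\{\fnorm{\bAsn-\tAsnn}\le\xi_{2}\}\lesssim\baA^{-1}\fnorm{\tAsnn}^{2}+\qn\qsn\log(\xi_{2}^{-1})$, which is $o(n\taun^{2})$ because $\fnorm{\tAsnn}^{2}\le\qsnn\specnorm{\bAsnn}^{2}=o(\qnn\qsnn)$ and $\qn^{2}=o(\cn^{2}\sn\qnn)$ by \ref{condd2}, the logarithmic factor being absorbed using \ref{condd1}. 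Similarly, since $\delta_{0j}^{2}\in[\delmin^{2},o(\cn)]$ so that $\abs{\log\delta_{0j}^{2}}\lesssim\log\cn$, the log-normal prior has density $\gtrsim e^{-C(\log\cn)^{2}}$ near each $\delta_{0j}^{2}$, giving $-\log\pin\{\max_{j}\abs{\delta_{j}^{2}-\delta_{0j}^{2}}\le\xi_{3}\}\lesssim\dn\big\{(\log\cn)^{2}+\log(\dn\xi_{3}^{-1})\big\}=o(n\taun^{2})$ by \ref{ass5} and \ref{condd1}. Multiplying the finitely many factors and taking logarithms yields $\pin\{B_{n,0}(\bThetann,\taun)\}\ge\pin(\mathcal{B})\ge e^{-Cn\taun^{2}}$; the identical bound for the KL-variance neighborhood follows verbatim.

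The main obstacle is the Dirichlet--Laplace small-ball estimate: because the prior couples all $\dn\qn$ coordinates of $\bLambdan$ through the global scale $\tau\sim\Ga(\dn\qn\an,\half)$ and the Dirichlet weights $\bphi$, the bound must simultaneously keep the $\sn\qnn$ signal coordinates within $\xi_{1}$ of their bounded targets while shrinking the remaining $\dn\qn-\sn\qnn$ coordinates below $\xi_{1}(\dn\qn)^{-1/2}$ at a total log-cost of only $\sn\qnn\log(\dn\qn)$; this is precisely where the choice $\an=1/(\dn\qn)$ matters, and adapting the argument of \citet{pati2014} to the zero-padded dimensions is the crux. The only other delicate point is the arithmetic of checking, via \ref{ass5} and the three branches of \ref{condd1} (which are designed to absorb, respectively, the $\log\xi_{i}^{-1}$ factors arising from the $\bLambdan$, $\bAsn$ and $\bDeltan$ small-ball estimates), that the $(\log\cn)^{2}$, $\dn\log\dn$ and logarithmic-radius contributions are genuinely dominated by $\sn\qnn\log(\dn\qn)$.
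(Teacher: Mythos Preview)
Your overall strategy---replace the KL ball by a parameter box around the padded truth, then lower-bound the prior mass of each factor separately via the Dirichlet--Laplace small-ball estimate of \citet{pati2014}, a Gaussian small-ball bound for the $\bAsn$'s, and a log-normal density bound for $\bDeltan$---is exactly the paper's approach. The organization, the invocation of \ref{ass5} and the three branches of \ref{condd1} to absorb the $\log\xi_i^{-1}$, $(\log\cn)^{2}$ and $\dn\log\dn$ terms, and the identification of the DL step as the dominant contribution are all correct.

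There is, however, one concrete inaccuracy in your box-inclusion step. Your telescoping for the study-specific piece $\bLambdan\bAsn\bAsn\trans\bLambdan\trans-\tLambdasnn\tLambdasnn\trans$ must pass through $\fnorm{\bLambdan\bAsn-\tLambdann\tAsnn}\le\specnorm{\tAsnn}\fnorm{\bLambdan-\tLambdann}+\specnorm{\bLambdan}\fnorm{\bAsn-\tAsnn}$, and \ref{ass3} only gives $\specnorm{\tAsnn}=o(\sqrt{\qnn})$; multiplying by $\specnorm{\bLambdasn}+\specnorm{\tLambdasnn}$ therefore produces a term of order $\sqrt{\cn\qnn}\,\xi_{1}$, not $\cn\xi_{1}$. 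With your radii $\xi_{1}\asymp\xi_{2}\asymp\taun/\cn$ this yields $\fnorm{\bSigmasn-\bSigmasnn}\gtrsim\taun\sqrt{\qnn/\cn}$, which is not $\lesssim\taun$ without the unstated assumption $\qnn=O(\cn)$. The paper fixes this by taking the smaller radii $\xi_{1}\asymp\taun/(\qnn\sqrt{\cn})$ and $\xi_{2}\asymp\taun/(\cn\sqrt{\qnn})$; the extra $\qnn$ factors then reappear inside the logarithms $\log\xi_i^{-1}$ of the small-ball bounds, and it is precisely the first two branches of \ref{condd1} that are calibrated to absorb them. So your outline is right, but the radii need to be tightened by these $\qnn$ factors, and you should check that the DL and Gaussian small-ball estimates still close with the corrected radii.
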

\begin{proof}
Since the priors on $\bLambdan$, $\bAsn$ and $\bDeltan$ are independent,
from Lemma \ref{sm lemma:kl_bound}, we have
\begin{align}
&\pin\left\{B_{n,0}(\Pnn,\taun)\right\} \geq \pin\left\{\sum_{s=1}^{S}  \frac{\ns\fnorm{\bSigmasnn - \bSigmasn}^{2} } { n\delmin^{2} \smin{\bSigmasn}}  \leq \taun^{2}\right\}\notag \\
\geq& \pin\left( \fnorm{\bLambdan-\tLambdann}<\frac{C\taun} {\qnn\sqrt{\cn}} \right) \times   \pin\left( \max_{1\leq s \leq S} \fnorm{\bAsn-\tAsnn}<\frac{C\taun}{\cn\sqrt{\qnn}} \right)\notag\\ 
&\qquad\qquad\qquad\qquad\qquad\qquad\qquad
\times \pin\left\{\fnorm{\bDeltann - \bDeltan} \leq C\taun , \smin{\bDeltan}\geq \nu \right\}
\label{eq:productprior}.
\end{align}
\noindent We handle the 
$\bAs$, $\bLambda$ 
and $\bDelta$ parts in \eqref{eq:productprior} separately. 
We then conclude the proof by showing that each part 
individually exceeds $e^{-Cn\taun^{2}}$ for some constant $C>0$.

\begin{description}[leftmargin=7.5pt]
\item[The $\bAsn$ term in \eqref{eq:productprior}:]
Note that the priors on $\bAsn$ are independent.
Also from \ref{ass3} $\fnorm{\tAsnn}< \sqrt{\qsnn} \specnorm{\bAsnn}=o(\sqrt{\qsnn\qnn})$.
Further using Lemma \ref{sm lemma:gaussianbdd} and the conditions 
$\sum_{s=1}^{S}\qsnn\leq \qnn$ and $\sum_{s=1}^{S}\qsn\leq \qn$,
we obtain
\begin{align}
\pin \left(\max_{1\leq s \leq S} \fnorm{\bAsn-\tAsnn} \leq \frac{C\taun}{\cn\sqrt{\qnn}} \right)
&\geq e^{-C'\sum_{s=1}^{S} \max\{\qn\qsn \log\frac{\cn\sqrt{\qnn}}{\taun}, \qn\qsn\log (\qn\qsn), \qnn\qsnn \} }\notag\\
&\geq e^{-C'' \max(\qn^{2}  \log\frac{\cn\sqrt{\qnn}}{\taun}, \qn^{2} \log \qn, \qnn^{2} ) }.
\label{eq:Asnpart}
\end{align}
Using \ref{condd2} we get $\max(\qn^{2} \log \qn, \qnn^{2} )=o(n\taun^{2})$, and \ref{condd1} implies that  $\qn^{2}  \log\frac{\cn\sqrt{\qnn}}{\taun}=o\{\cn^{2}\sn\qnn\log(\dn\qn) \}$.
Thus the quantity in \eqref{eq:Asnpart} exceeds $e^{-Cn\taun^{2}}$ for some $C>0$.

\item [{The $\bLambdan$ term in \eqref{eq:productprior}:}]
Using \citet[Lemma 7.1]{pati2014} we obtain
\begin{equation}
\pin\left( \fnorm{\bLambdan-\tLambdann}< \frac{C\taun} {\qnn\sqrt{\cn}} \right) 
\geq e^{- C \max\left\{\fnorm{\bLambdann}^{2}, \sn\qnn\log\frac{\sn\qnn}{\frac{C\taun} {\qnn\sqrt{\cn}} }, \log(\dn\qn)   \right\}  }.\label{eq:Lambdaaapart}
\end{equation}

Note that from \ref{ass3} $\fnorm{\bLambdann}^{2}\leq \qnn\specnorm{\bLambdann}^{2}=o(\cn)$ and hence $\max\left\{\fnorm{\bLambdann}^{2}, \log(\dn\qn)   \right\}=o(n\taun^{2})$.
Additionally using \ref{condd1}, we obtain that 
$\sn\qnn\log\frac{\sn\qnn}{\frac{C\taun} {\qnn\sqrt{\cn}} }=\half\sn\qnn\log \frac{n\sn\qnn^{3} }{\cn\log(\dn\qn) }=o\{\cn^{2}\sn\qnn\log(\dn\qn) \}$.
Hence the RHS of 
\eqref{eq:Lambdaaapart} exceeds $e^{-Cn\taun^{2}}$ for some $C>0$.

\item[The $\bDeltan$ term in \eqref{eq:productprior}:] 
Note that
for a differentiable function $f(\cdot)$ in a compact interval $[a,b]$, using the mean value theorem we have $\abs{f(w)-f(y)}= f'(x) (b-a)$ where $a<w<x<y<b$ implying that $\abs{f(w)-f(y)} \leq (b-a)\times \sup_{x \in (a,b)} \abs{f'(x)}$.
Thus we have $\nu \leq \min_{j} \delta_{jn}^{2}\leq \max_{j} \delta_{jn}^{2} \leq \cn+M \Rightarrow \abs{  \delta_{jn}^{2} - \delta_{0jn}^{2}}\leq (\cn+M)\log(\cn+M) \abs{\log  \delta_{jn}^{2} -\log \delta_{0jn}^{2}} \leq C\cn\log\cn \abs{\log  \delta_{jn}^{2} -\log \delta_{0jn}^{2}}$ where $C,M>0$ are positive large enough constants.
Hence
\begin{align}
&\pin\left\{\fnorm{\bDeltan - \bDeltann} \leq C\taun, \smin{\bDeltan}\geq \nu \right\}\notag\\
\geq &\pin\left(\fnorm{\bDeltan - \bDeltann}\leq C\taun, \min_{1\leq j\leq \dn} \delta_{jn}^{2} \geq \nu, \max_{1\leq j\leq \dn} \delta_{jn}^{2} \leq \cn+M \right)\notag\\
\geq&\pin \left(\norm{\btdeltan-\btdeltann}\leq\frac{C'\taun}{\cn\log\cn}, \min_{1\leq j\leq\dn} \delta_{jn}^{2} \geq \nu, \max_{1\leq j\leq \dn} \delta_{jn}^{2} \leq \cn +M \right) ,
\label{eq:delta_eq}
\end{align}
where $\btdeltan=(\log\delta_{1n}^{2},\dots,\log\delta_{\dn n}^{2})\trans$ and $\btdeltann=(\log\delta_{01n}^{2},\dots,\log\delta_{0\dn n}^{2})\trans$.
Recall that $\log \delta_{jn}^{2}\simiid \mn(\mu_{\delta},\sigman^{2})$.
Adding a constant to the elements of $\btdeltann$ does not violate any assumptions 
and therefore without loss of generality the parameter $\mu_{\delta}$ can be assumed to be zero.

Note that $\norm{\btdeltan-\btdeltann}\leq\frac{C'\taun}{\cn\log\cn}\Rightarrow \max_{j}\abs{ \log \delta_{jn}^{2} - \log \delta_{0jn}^{2}}\leq  \frac{C'\taun}{\cn\log\cn}$.
Since $\taun=o(1)$ the last display in conjunction with \ref{ass4}
further implies that
$\max_{j}\delta_{jn}^{2}< \max_{j}\delta_{0jn}^{2}+C''\taun<\cn+M $ and
$\min_{j}\delta_{jn}^{2}> \min_{j}\delta_{0jn}^{2}-C''\taun> \nu$.
Hence  the set 
$\left\{\btdeltan: \norm{\btdeltan-\btdeltann}\leq\frac{C'\taun}{\cn\log\cn} \right\}$
is a subset of $\left\{ \min_{1\leq j\leq\dn} \delta_{jn}^{2} \geq \nu, \max_{1\leq j\leq \dn} \delta_{jn}^{2} \leq \cn +M \right\}$.
Therefore using Lemma \ref{sm lemma:gaussianbdd}
\begin{align}
&\pin\left\{\fnorm{\bDeltan - \bDeltann} \leq C\taun, \smin{\bDeltan}\geq \nu \right\}
\geq\pin \left(\norm{\btdeltan-\btdeltann}\leq\frac{C'\taun}{\cn\log\cn} \right)\notag\\
\geq &  \exp {-C \max\left( \dn\log\frac{\sqrt{\sigman}\cn\log\cn}{\taun}, \dn\log \dn, \frac{1}{\sigman^{2}}\norm{\btdeltann}^{2} \right)} . \label{eq:eqDeltannnnn}
\end{align}
Using \ref{ass3} we have $\norm{\btdeltann}^{2}=o\{\dn(\log\cn)^{2}\}$. Hence $\max\left( \dn\log \dn, \frac{1}{\sigman^{2}}\norm{\btdeltann}^{2} \right)=o\{\cn^{2}\sn\qnn\log(\dn\qn) \}$ using \ref{ass5}.
Additionally from \ref{condd1} $\dn \frac{ n\sigman (\log\cn)^{2} } {\sn\qnn \log(\dn\qn)}=o\{\cn^{2}\sn\qnn\log(\dn\qn) \}$.
Thus the RHS of 
\eqref{eq:eqDeltannnnn} exceeds $e^{-Cn\taun^{2}}$ for some $C>0$.
\end{description}
As we have shown that each of the product terms in \eqref{eq:productprior} individually exceeds $e^{-Cn\taun^{2}}$, we conclude the proof.
\end{proof}

\begin{theorem}[Test function]
\label{th:testfn}
Recalling the definitions from the proof of Theorem \ref{thm: posterior concentration},
we let the set $G_{j,n}=\{\bThetan \in\Pnon: j\taun \leq \en(\bThetan,\bThetann) \leq 2j\taun \}$ denote an annulus of inner and outer radii  $n\taun^{2}\cn^{3}(\cn^{2}\epsilonn+j\taun) $ and $n\taun^{2}\cn^{3}(\cn^{2}\epsilonn+2j\taun)  $, respectively,
in spectral norm around $\bSigmann$ for positive integer $j=o\left[ \frac{\sqrt{n}}{\cn^{5}} \left\{\sn\qnn\log(\dn\qn)\right\}^{-\frac{3}{2}}\right]$. 
Based on observed data $\Dn$ 
consider the following hypothesis testing problem $H_{0}:\bThetan=\bThetann$  versus  $H_{1}:\bThetan\in G_{j,n}$.

Define $\bLambdasnn=\bLambdann\begin{bmatrix} \bI_{\qnn}& \bAsnn \end{bmatrix}$, 
$\tilqsnn=\qnn+\qsnn$,
$\bDeletasnn= \bLambdasnn\trans\bDeltann\inv\bLambdasnn$,\\
$\bSigetasnn=\left(\bI_{\tilqsnn} + \bDeletasnn \right)^{-\half} \bDeletasnn \left(\bI_{\tilqsnn} + \bDeletasnn \right)^{-\half}$ and
$\bUpsilon_{s,in}= (\bI_{\tilqsnn} + \bDeletasnn )\inv \bLambdasnn\trans \bDeltann\inv \bY_{s,i}$.
Further define
$\bXisn=\bSigetasnn^{-\frac{1}{2}} \left(\sum_{i=1}^{\ns}  \bUpsilon_{s,in} \bUpsilon_{s,in}\trans\right) \bSigetasnn^{-\half}$ and the indicator functions $\varphisn= \mathbbm{1}\left( \specnorm{\frac{1}{\ns}\bXisn  - \bI_{\tilqsnn}} > \epsilonn  \right)$ for $s=1,\dots,S$.
We define the test function: $\varphin= 1-\prod_{s=1}^{S} (1-\varphisn)$.
Then for some absolute constant $K>0$
\begin{equation*}
\eE_{H_{0}}(\varphin) \to 0; \qquad \sup_{\bThetan\in G_{j,n}} \eE_{\bThetan} (1-\varphin)\leq \exp(- Knj^{2}\taun^{2} ).
\end{equation*}
\end{theorem}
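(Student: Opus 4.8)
The plan is to read $\varphisn$ as a spectral-norm goodness-of-fit test of a whitened sample covariance matrix against the identity, to verify that under $H_0$ this matrix is an \emph{exact} sample covariance of i.i.d.\ standard Gaussian vectors, and then to show that membership in $G_{j,n}$ forces the population analogue of that matrix to be bounded away from the identity in at least one study by an amount that a dimension-sensitive Gaussian concentration inequality can detect at the required exponential rate. Throughout I use that, because the $S$ studies' data are independent and $\varphin=1-\prod_{s=1}^{S}(1-\varphisn)$, we have $\varphin\le\sum_{s=1}^{S}\varphisn$ pointwise and $\eE(1-\varphin)=\prod_{s=1}^{S}\eE(1-\varphisn)$ under any product measure.

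First I would establish the distributional identity under $H_0$. Writing $\bY_{s,i}=\bLambdasnn(\boldeta_{s,i}\trans,\bzeta_{s,i}\trans)\trans+\bepsilon_{s,i}$ and using $\bDeletasnn=\bLambdasnn\trans\bDeltann\inv\bLambdasnn$, the law of total covariance gives $\cov_{H_0}(\bUpsilon_{s,in})=\bI_{\tilqsnn}-(\bI_{\tilqsnn}+\bDeletasnn)\inv=\bSigetasnn$, so that $\bSigetasnn^{-\half}\bUpsilon_{s,1n},\dots,\bSigetasnn^{-\half}\bUpsilon_{s,\ns n}\simiid\mn_{\tilqsnn}(\bzero,\bI_{\tilqsnn})$ and $\ns\inv\bXisn$ is precisely their sample second-moment matrix in dimension $\tilqsnn=\qnn+\qsnn$. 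For the Type~I error I would then invoke a standard $\epsilon$-net plus Bernstein bound for the spectral norm of a Gaussian sample covariance, $\eE_{H_0}\varphisn=\Pnn\!\left(\specnorm{\ns\inv\bXisn-\bI_{\tilqsnn}}>\epsilonn\right)\le 2\cdot 9^{\tilqsnn}\exp\{-c\,\ns\min(\epsilonn,\epsilonn^{2})\}$. Conditions \ref{ass1}--\ref{ass5} and \ref{condd2}--\ref{condd1}, together with $\epsilonn\asymp\cn n\taun^{3}$, $n\taun^{2}\asymp\cn^{2}\sn\qnn\log(\dn\qn)$ and $\liminf\ns/n>0$, give $\epsilonn=o(1)$ and $\tilqsnn=o(\ns\epsilonn^{2})$, hence $\eE_{H_0}\varphisn\to0$ for each $s$ and therefore $\eE_{H_0}\varphin\le\sum_{s=1}^{S}\eE_{H_0}\varphisn\to0$.

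For the Type~II error fix $\bThetan\in G_{j,n}$. Since $1-\varphin=\prod_{s}(1-\varphisn)\le 1-\varphi_{s^{\star}n}$ for any single index $s^{\star}$, it suffices to exhibit one study $s^{\star}$ with $\eE_{\bThetan}(1-\varphi_{s^{\star}n})\le e^{-Knj^{2}\taun^{2}}$. Under $\bThetan$ the data remain Gaussian, $\bY_{s,i}\simiid\mn_{\dn}(\bzero,\bSigmasn)$, so $\bSigetasnn^{-\half}\bUpsilon_{s,in}\simiid\mn_{\tilqsnn}(\bzero,\bM_{sn})$ with $\bM_{sn}=\bSigetasnn^{-\half}(\bI_{\tilqsnn}+\bDeletasnn)\inv\bLambdasnn\trans\bDeltann\inv\bSigmasn\bDeltann\inv\bLambdasnn(\bI_{\tilqsnn}+\bDeletasnn)\inv\bSigetasnn^{-\half}$, and $\ns\inv\bXisn$ is the sample covariance of these vectors; in particular $\bM_{sn}=\bI_{\tilqsnn}$ if $\bSigmasn=\bSigmasnn$. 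The argument now splits in two. (a) \emph{Detectability.} I would show that $\bThetan\in G_{j,n}$ forces, for the maximising study $s^{\star}$, the relative gap $\specnorm{\bM_{s^{\star}n}-\bI_{\tilqsnn}}/\specnorm{\bM_{s^{\star}n}}$ to be at least a constant multiple of $j\taun$ and in particular $\specnorm{\bM_{s^{\star}n}-\bI_{\tilqsnn}}\ge 3\epsilonn$. This uses the decomposition $\bSigmasn-\bSigmasnn=(\bSigman-\bSigmann)+(\bLambdan\bAsn\bAsn\trans\bLambdan\trans-\bLambdann\bAsnn\bAsnn\trans\bLambdann\trans)$, the restriction to $\Pnon$ (so $\specnorm{\bAsn}^{2}\le Cn\taun^{2}$) plus \ref{ass3}--\ref{ass5} to control the study-specific discrepancy term, and lower eigenvalue bounds for the sandwich map $\bX\mapsto\bLambdasnn\trans\bDeltann\inv\bX\bDeltann\inv\bLambdasnn$ in terms of $\delmin^{2}$ and the singular values of $\bLambdasnn$; the normalising factors $n\taun^{2}\cn^{3}$ and $\cn^{2}\epsilonn$ in the definition of $\en$ are calibrated precisely so that the radius $\specnorm{\bSigman-\bSigmann}\asymp n\taun^{2}\cn^{3}(\cn^{2}\epsilonn+j\taun)$ of $G_{j,n}$ translates into this detectable gap. (b) \emph{Concentration under the alternative.} On $\{\varphi_{s^{\star}n}=0\}$ we have $\specnorm{\ns\inv\bXi_{s^{\star}n}-\bI_{\tilqsnn}}\le\epsilonn$, so $\specnorm{\ns\inv\bXi_{s^{\star}n}-\bM_{s^{\star}n}}\ge\specnorm{\bM_{s^{\star}n}-\bI_{\tilqsnn}}-\epsilonn\ge 2\epsilonn$; writing the sample covariance of $\mn(\bzero,\bM_{s^{\star}n})$ vectors as $\bM_{s^{\star}n}^{\half}$ times a standard Gaussian sample covariance and applying the same net/Bernstein bound gives $\eE_{\bThetan}(1-\varphi_{s^{\star}n})\le 2\cdot 9^{\tilqsnn}\exp\{-c\,n_{s^{\star}}\min(\rho,\rho^{2})\}$ with $\rho$ a constant multiple of $\specnorm{\bM_{s^{\star}n}-\bI_{\tilqsnn}}/\specnorm{\bM_{s^{\star}n}}$, and the stated bound $j=o[\sqrt n\,\cn^{-5}\{\sn\qnn\log(\dn\qn)\}^{-3/2}]$ together with $\epsilonn\asymp\cn n\taun^{3}$ and $n\taun^{2}\asymp\cn^{2}\sn\qnn\log(\dn\qn)$ ensure $c\,n_{s^{\star}}\min(\rho,\rho^{2})\ge Knj^{2}\taun^{2}$ while $\tilqsnn=o(nj^{2}\taun^{2})$. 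Taking the supremum over $\bThetan\in G_{j,n}$ finishes the proof.

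The main obstacle is step~(a): converting the spectral-norm radius that \emph{defines} $G_{j,n}$ --- a statement about the \emph{shared} covariance $\bSigman-\bSigmann$ --- into a guaranteed separation of the \emph{whitened, study-specific} matrices $\bM_{sn}$ from the identity. This requires (i) a compatible lower bound for the sandwich/whitening composition, so that conjugating by $\bDeltann\inv$ and then by $\bSigetasnn^{-\half}$ does not annihilate the gap; (ii) careful use of the $\Pnon$ restriction to bound the $\bAsn$-dependent cross terms \emph{and} the operator norm $\specnorm{\bM_{sn}}$ uniformly over $G_{j,n}$; and (iii) the correct triangle-inequality split between the shared and study-specific discrepancies so that a lower bound on $\specnorm{\bSigman-\bSigmann}$ survives as a lower bound on $\max_{s}\specnorm{\bSigmasn-\bSigmasnn}$ on $\Pnon$. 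Once these are in place, both error bounds reduce to a single Gaussian sample-covariance concentration inequality, applied under $H_0$ and under the alternative respectively.
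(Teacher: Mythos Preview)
Your proposal is essentially correct and follows the same route as the paper: verify that under $H_0$ the whitened vectors $\bSigetasnn^{-\half}\bUpsilon_{s,in}$ are i.i.d.\ $\mn_{\tilqsnn}(\bzero,\bI_{\tilqsnn})$, apply a Vershynin-type spectral-norm concentration bound for the Type~I error, and for the Type~II error reduce the event $\{\varphisn=0\}$ to a deviation of a standard-Gaussian sample covariance by at least a threshold determined by the gap $\specnorm{\bM_{sn}-\bI_{\tilqsnn}}$.

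Two small differences are worth noting. First, the paper aggregates studies multiplicatively for the Type~II bound, obtaining $\eE_{\bThetan}(1-\varphin)=\prod_{s=1}^{S}\eE_{\bThetan}(1-\varphisn)\le \exp\{-Kj^{2}\taun^{2}\sum_{s}\ns\}$, whereas you bound by a single study $s^{\star}$ and recover the same $nj^{2}\taun^{2}$ rate only through \ref{ass1}; both are fine, but the product is slightly sharper and avoids the need to select a maximising study. Second, the paper does not carry out the triangle-inequality split you describe in (a)(iii); it works directly with $\specnorm{\bSigmasn-\bSigmasnn}$ and lower-bounds $\specnorm{\bM_{sn}-\bI_{\tilqsnn}}$ via explicit eigenvalue calculations in terms of $\bchisnn=\bDeltann^{-\half}\bLambdasnn$, then uses a separate lemma (your step (a) combined with (ii)) to turn the annulus condition on $\specnorm{\bSigman-\bSigmann}$ into a lower bound on the resulting ratio. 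Your description of the obstacle and the ingredients needed to overcome it is accurate.
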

\begin{proof}
\begin{description}[leftmargin=7pt]
\item[Type-I error:]
Note that under $H_{0}$
\begin{align}
&\var_{H_{0}}(\bUpsilon_{s,in}) \notag\\
=& \left(\bI_{\tilqsnn} + \bDeletasnn \right)\inv \bLambdasnn\trans \bDeltann\inv (\bLambdasnn\bLambdasnn\trans + \bDeltann) \bDeltann\inv \bLambdasnn  \left(\bI_{\tilqsnn} + \bDeletasnn \right)\inv\label{eq:sig_eta}\\
=& \left(\bI_{\tilqsnn} + \bDeletasnn \right)\inv  (\bLambdasnn\trans \bDeltann\inv\bLambdasnn\bLambdasnn\trans\bDeltann\inv \bLambdasnn + \bLambdasnn\trans  \bDeltann\inv \bLambdasnn)  \left(\bI_{\tilqsnn} + \bDeletasnn \right)\inv\notag\\
=& \left(\bI_{\tilqsnn} + \bDeletasnn \right)\inv (\bDeletasnn^{2}+\bDeletasnn)
\left(\bI_{\tilqsnn} + \bDeletasnn \right)\inv\notag\\
=&\left(\bI_{\tilqsnn} + \bDeletasnn \right)^{-\frac{1}{2}} \bDeletasnn \left(\bI_{\tilqsnn} + \bDeletasnn \right)^{-\frac{1}{2}}=\bSigetasnn,\notag
\end{align}
implying that $\bSigetasnn^{-\half} \bUpsilon_{s,in} \simiid\mn_{q}(\bzero,\bI_{\tilqsnn})\Rightarrow \frac{1}{\ns}\eE_{H_{0}} (\bXisn)=\bI_{\tilqsnn}$  for all $s=1,\dots,S$.		
From \citetlatex[Corollary 5.50]{vershynin_book},  $\eE_{H_{0}} (\varphisn) \leq 2\exp(-\wt{C}\tilqsnn \tn^{2} )$ for a universal constant $\wt{C}>0$ and any sequence $\tn$ satisfying $\tilqsnn \tn^{2}\leq \ns\epsilonn^{2}$. 
From \ref{ass1}, \ref{ass2} and \ref{condd2}
$\tilqsnn\to\infty$ and $\ns\epsilonn^{2}/\tilqsnn>\cn^{2}\qnn\sn\log(\dn\qn) \to \infty$ as $n\to\infty$.
Hence $\tn$ can be constructed such that $\tn\succ 1$ and hence $\lim_{n\to \infty} \eE_{H_{0}}(\varphisn) =0$.
Since the $\varphisn$s are independent across $s$, 
$\lim_{n\to \infty} \eE_{H_{0}}(\varphi_{n}) =1- \prod_{s=1}^{S}\lim_{n\to \infty} \{1- \eE_{H_{0}} (\varphisn) \}= 0$.

\item[Type-II error:] 
For data generating parameters $\bThetan\in G_{j,n}$, we define $\bSigetasn=\cov(\bUpsilon_{s,in})$.
Then 
\begin{equation*}
\bSigetasn=\left(\bI_{\tilqsnn} + \bDeletasnn \right)\inv \bLambdasnn\trans \bDeltann\inv (\bLambdasn\bLambdasn\trans + \bDeltan) \bDeltann\inv \bLambdasnn  \left(\bI_{\tilqsnn} + \bDeletasnn \right)\inv.
\end{equation*}
Hence for $\bz_{s,i}\simiid \mn_{\tilqsnn} (\bzero,\bI_{\tilqsnn})$ for $i=1,\dots,\ns$,
\begin{align}
&1-\varphisn
=\mathbbm{1}\left\{ \specnorm{\frac{1}{\ns}\bSigetasnn^{-\frac{1}{2}} \left(\sum_{i=1}^{\ns}  \bUpsilon_{s,in} \bUpsilon_{s,in}\trans\right) \bSigetasnn^{-\frac{1}{2}}  - \bI_{\tilqsnn}} \leq \epsilonn \right\}\notag\\
\leq& \mathbbm{1}\left\{ \specnorm{\frac{1}{\ns}\bSigetasnn^{-\frac{1}{2}} \left(\sum_{i=1}^{\ns}  \bUpsilon_{s,in} \bUpsilon_{s,in}\trans -\bSigetasn \right) \bSigetasnn^{-\frac{1}{2}}}  \geq \specnorm{ \bSigetasnn^{-\frac{1}{2}}\bSigetasn \bSigetasnn^{-\frac{1}{2}}  - \bI_{\tilqsnn}  } - \epsilonn  \right\}\notag\\
\leq& \mathbbm{1}\left(  \specnorm{\bSigetasnn\inv\bSigetasn} \specnorm{\frac{1}{\ns}  \sum_{i=1}^{\ns}  \bz_{s,i} \bz_{s,i}\trans -\bI_{\tilqsnn}  }  \geq  \specnorm{ \bSigetasnn^{-\frac{1}{2}}\bSigetasn \bSigetasnn^{-\frac{1}{2}}  - \bI_{\tilqsnn}  } - \epsilonn \right).\label{eq:test_power}
\end{align}
Note that, 
\begin{align*}
&{ \bSigetasnn^{-\half}\bSigetasn \bSigetasnn^{-\half}  - \bI_{\tilqsnn}  }={ \bSigetasnn^{-\half} (\bSigetasn -\bSigetasnn) \bSigetasnn^{-\half}}\text{ and}\\  
&\bSigetasn -\bSigetasnn= \left(\bI_{\tilqsnn} + \bDeletasnn \right)\inv {\bLambdasnn\trans}{\bDeltann\inv} (\bSigmasn-\bSigmasnn) {\bDeltann\inv}\bLambdasnn \left(\bI_{\tilqsnn} + \bDeletasnn \right)\inv\\
&\bSigetasnn^{-\half}=\left(\bI_{\tilqsnn} + \bDeletasnn \right)^{\half}			\left\{ \bLambdasnn\trans\bDeltann\inv (\bLambdasnn\bLambdasnn\trans+\bDeltann) \bDeltann\inv \bLambdasnn \right\}^{-\half}
\left(\bI_{\tilqsnn} + \bDeletasnn \right)^{\half}.			
\end{align*}
We obtain the last identity in the above display from \eqref{eq:sig_eta}. 
Therefore,
\begin{align*}
\bSigetasnn^{-\half}\bSigetasn \bSigetasnn^{-\half}  - \bI_{\tilqsnn}=\left(\bI_{\tilqsnn} + \bDeletasnn \right)^{\half}
\left\{\bLambdasnn\trans\bDeltann\inv (\bLambdasnn\bLambdasnn\trans+\bDeltann) \bDeltann\inv \bLambdasnn \right\}^{-\half}\\
\times\left(\bI_{\tilqsnn} + \bDeletasnn \right)^{-\half} {\bLambdasnn\trans} {\bDeltann\inv} (\bSigmasn-\bSigmasnn) {\bDeltann\inv}{\bLambdasnn} \left(\bI_{\tilqsnn} + \bDeletasnn \right)^{-\half}\\
\times\left\{\bLambdasnn\trans\bDeltann\inv (\bLambdasnn\bLambdasnn\trans+\bDeltann) \bDeltann\inv \bLambdasnn \right\}^{-\half} \left(\bI_{\tilqsnn} + \bDeletasnn \right)^{\half}.
\end{align*}
Lemma \ref{lemma:pati} implies that
\begin{multline}
\specnorm{\bSigetasnn^{-\half}\bSigetasn \bSigetasnn^{-\half}  - \bI_{\tilqsnn}} \geq \specnorm{\bSigmasn-\bSigmasnn}\times s_{\min}^{2} \left[ \left(\bI_{\tilqsnn} + \bDeletasnn \right)^{\half} \times \right. \\
\left. \left\{\bLambdasnn\trans\bDeltann\inv (\bLambdasnn\bLambdasnn\trans+\bDeltann) \bDeltann\inv \bLambdasnn \right\}^{-\half}   
\times\left(\bI_{\tilqsnn} + \bDeletasnn \right)^{-\half} {\bLambdasnn\trans} {\bDeltann\inv}\right]. \label{eq:rhs_complex}
\end{multline}

Let us define $\bchisnn=\bDeltann^{-\half}\bLambdasnn$.
Then $\bDeletasnn=\bchisnn\trans\bchisnn$ and the term inside the $s_{\min}^{2}$ in \eqref{eq:rhs_complex} simplifies to
\begin{multline}
\left(\bI_{\tilqsnn} + \bchisnn\trans\bchisnn \right)^{\half} \times  
\left\{\bchisnn\trans  (\bchisnn\bchisnn\trans+\bI_{\dn}) \bchisnn \right\}^{-\half}\\
\times\left(\bI_{\tilqsnn} + \bchisnn\trans\bchisnn \right)^{-\half} {\bchisnn\trans} {\bDeltann^{-\half}}.\label{eq:rhs_complex_chi}
\end{multline}
Letting $\chisrnn,r=1,\dots,\tilqsnn$ denote the eigenvalues of $\bchisnn\trans\bchisnn$ and using Lemma \ref{lemma:pati},
we have
\begin{multline}
s_{\min}^{2}\left[\left(\bI_{\tilqsnn} + \bchisnn\trans\bchisnn \right)^{\half} \times  
\left\{\bchisnn\trans  (\bchisnn\bchisnn\trans+\bI_{\dn}) \bchisnn \right\}^{-\half}\right.\\
\left.\times\left(\bI_{\tilqsnn} + \bchisnn\trans\bchisnn \right)^{-\half} {\bchisnn\trans} {\bDeltann^{-\half}}\right]\geq \smin{\bDeltann\inv} \times\\
\left[\min_{r=1,\dots,\tilqsnn}\left\{ (1+\chisrnn)^{\half} \left( {\chisrnn}{(1+\chisrnn)} \right)^{-\half} (1+\chisrnn)^{-\half}\chisrnn^{\half} \right\}\right]^{2}\\
\geq \frac{1}{\specnorm{\bDeltann} } \times  \min_{r=1,\dots,\tilqsnn} \frac{1}{1+\chisrnn}\geq \frac{1}{\specnorm{\bDeltann} } \times \frac{1}{1+\specnorm{ \bLambdasnn\trans \bDeltann\inv \bLambdasnn} }.
\label{eq:rhs_complex_chi}
\end{multline}
From \ref{ass3}-\ref{ass4} $\specnorm{\bDeltann}=o(\cn)$, $\specnorm{\bLambdasnn}^{2}=o(\cn)$.
Now combining \eqref{eq:rhs_complex}-\eqref{eq:rhs_complex_chi}, we get
\begin{align}
\specnorm{\bSigetasnn^{-\half}\bSigetasn \bSigetasnn^{-\half}  - \bI_{\tilqsnn}} &\geq
\frac{\specnorm{\bSigmasn-\bSigmasnn}}{\specnorm{\bDeltann} (1+\specnorm{ \bLambdasnn\trans \bDeltann\inv 
\bLambdasnn}) }\notag\\
&\geq \frac{\specnorm{\bSigman-\bSigmann}} {\specnorm{\bDeltann} \left(1+ \frac{\specnorm{\bLambdasnn}^{2} }{\delmin^{2}} \right) }\geq \frac{\specnorm{\bSigman-\bSigmann}}{\cn^{2}}. \label{eq:rhs_power}
\end{align}
Similarly, we have
\begin{align}
&\specnorm{\bSigetasnn\inv\bSigetasn   } \leq \specnorm{\bSigmasn }\times  \left\| \left(\bI_{\tilqsnn} + \bDeletasnn \right)^{\half} \times \right. \notag\\
&\quad\left. \left\{\bLambdasnn\trans\bDeltann^{-\half} (\bLambdasnn\bLambdasnn+\bDeltann) \bDeltann^{-\half}\bLambdasnn \right\}^{-\half}
\times\left(\bI_{\tilqsnn} + \bDeletasnn \right)^{-\half} {\bLambdasnn\trans} {\bDeltann\inv}\right\|_{2}^{2}\notag\\
\leq& \specnorm{\bSigmasn }\times 
\left[\max_{1\leq r\leq\tilqsnn}\left\{ (1+\chisrnn)^{\half} \left( {\chisrnn}{(1+\chisrnn)} \right)^{-\half} (1+\chisrnn)^{-\half}\chisrnn^{\half} \right\}\right]^{2}\label{eq:used_later}\\
\leq & \specnorm{\bSigman}+  \specnorm{\bLambdan \bAsn\bAsn\trans  \bLambdan\trans}  \leq \specnorm{\bSigman}\left(1+ \specnorm{\bAsn}^{2} \right) .
\label{eq:lhs_complex}
\end{align}
Note that $\specnorm{\bSigman}\leq \specnorm{\bSigman-\bSigmann}+\cn$.
Combining \eqref{eq:test_power}, \eqref{eq:rhs_power} and \eqref{eq:lhs_complex} we get
\begin{align}
&1-\varphisn\notag\\
\leq &\mathbbm{1}\left\{  \specnorm{\bSigman}\left(1+ \specnorm{\bAsn}^{2} \right) \specnorm{\frac{1}{\ns}  \sum_{i=1}^{\ns}  \bz_{s,i} \bz_{s,i}\trans -\bI_{\tilqsnn}  }  \geq  \frac{\specnorm{\bSigman-\bSigmann}}{\cn^{2}} - \epsilonn \right\}\notag\\
\leq &\mathbbm{1}\left\{   \specnorm{\frac{1}{\ns}  \sum_{i=1}^{\ns}  \bz_{s,i} \bz_{s,i}\trans -\bI_{\tilqsnn}  }  \geq  \frac{\specnorm{\bSigman-\bSigmann}  - \epsilonn\cn^{2}} {\cn^{2}\left(\specnorm{\bSigman-\bSigmann}+\cn\right) \left(1+ \specnorm{\bAsn}^{2} \right) }  \right\}, \label{sm eq:testfnnn}
\end{align}
where $\bz_{s,1:\ns}\simiid \mn_{\tilqsnn}(\bzero, \bI_{\tilqsnn} )$.
Using Lemma \ref{sm lemma:testfn_aux}, we see that the RHS of \eqref{sm eq:testfnnn} is bounded below by $\left( \wt{C}\sqrt{\tilqsnn}+ j\sqrt{n}\taun \right)\frac{1 }{\sqrt{\ns}} $ for some absolute constant $\wt{C}>0$ if $\bThetan\in G_{j,n}$.
Further applying \citetlatex[Equation 5.26]{vershynin_book}, for all $\bThetan \in G_{j,n}$ we get
\begin{equation*}
\eE_{\bThetan} (1-\varphisn )\leq \Pr \left\{   \specnorm{\frac{1}{\ns}  \sum_{i=1}^{\ns}  \bz_{s,i} \bz_{s,i}\trans -\bI_{\tilqsnn}  }  \geq  \left( \wt{C}\sqrt{\tilqsnn}+ j\sqrt{n}\taun \right)\frac{1} {\sqrt{\ns}} \right\}\leq  e^{-Kj^{2}\ns\taun^{2} }.
\end{equation*}
Finally, $\sup_{\bThetan \in G_{j,n}} \eE_{\bThetan} (1-\varphin)=\sup_{\bThetan \in G_{j,n}}\prod_{s=1}^{S}   \eE_{\bThetan} (1-\varphisn)\leq e^ {- K \left(\sum_{s=1}^{S} \ns\right) j^{2}\tau_{n}^{2}}$.
\end{description}
Hence the proof.
\end{proof}

\begin{theorem}[Remaining probability mass]
\label{th:remainingmass}
$\lim_{n\to\infty} \eE_{\Pnn} \pin \left(\Pntw \mid \Dn \right)= 0$.
\end{theorem}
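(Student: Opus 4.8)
The plan is to run the standard ``prior-mass, no-testing'' argument for the complement of a sieve. Write the posterior mass as a ratio $\pin(\Pntw\mid\Dn)=N_{n}/D_{n}$ with $N_{n}=\int_{\Pntw}R_{n}(\bThetan)\,\de\pin(\bThetan)$, $D_{n}=\int R_{n}(\bThetan)\,\de\pin(\bThetan)$, and likelihood ratio $R_{n}(\bThetan)=(\de\Pn/\de\Pnn)(\Dn)$. First I would reuse the evidence lower bound underlying the proof of Lemma \ref{th:ghosal}: since Theorem \ref{lemma:kl_support} gives $\pin\{B_{n,0}(\Pnn,\taun)\}\ge e^{-C_{0}n\taun^{2}}$ for a fixed constant $C_{0}>0$, the standard argument yields $\Pnn\big(D_{n}<e^{-(C_{0}+2)n\taun^{2}}\big)\le(n\taun^{2})^{-1}$. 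Splitting on this event, using the trivial bound $\pin(\Pntw\mid\Dn)\le1$ on its complement, and applying Fubini---which gives $\eE_{\Pnn}N_{n}=\pin(\Pntw)$ because $\eE_{\Pnn}R_{n}(\bThetan)=1$ for each fixed $\bThetan$---produces
\[
\eE_{\Pnn}\,\pin(\Pntw\mid\Dn)\;\le\;\frac{1}{n\taun^{2}}+e^{(C_{0}+2)n\taun^{2}}\,\pin(\Pntw).
\]
As $n\taun^{2}\asymp\cn^{2}\sn\qnn\log(\dn\qn)\to\infty$, the first term vanishes and the claim reduces to showing that the \emph{prior} probability of $\Pntw$ decays faster than $e^{-(C_{0}+2)n\taun^{2}}$.

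The second and main step uses the Gaussian prior on the study-specific matrices. Recall $\Pntw=\Pnon^{\complement}=\{\exists\,s\le S:\specnorm{\bAsn}^{2}>Cn\taun^{2}\}$ and that the entries of each $\bAsn$ are i.i.d.\ $\mn(0,\baA)$ with fixed $\baA>0$, independent of $\bLambdan$ and $\bDeltan$. A union bound over the finitely many studies together with $\specnorm{\bAsn}\le\fnorm{\bAsn}$ reduces matters to bounding $\pin\{\fnorm{\bAsn}^{2}>Cn\taun^{2}\}=\Pr\{\chi^{2}_{\qn\qsn}>Cn\taun^{2}/\baA\}$. Condition \ref{condd2} forces $\qn\qsn\le\qn^{2}=o(\cn^{2}\sn\qnn)=o(n\taun^{2})$, so the degrees of freedom are of strictly smaller order than the threshold, and the Laurent--Massart chi-square deviation inequality gives $\pin\{\fnorm{\bAsn}^{2}>Cn\taun^{2}\}\le\exp\{-\tfrac{C}{2\baA}n\taun^{2}(1-o(1))\}$. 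Picking the constant $C$ in the definition of $\Pnon$ large enough that $C/(2\baA)>C_{0}+2$---permissible since $C$ was only required to be ``large enough'', and enlarging it does not harm the test-function bound of Theorem \ref{th:testfn}---gives $e^{(C_{0}+2)n\taun^{2}}\pin(\Pntw)\le S\,e^{-c\,n\taun^{2}}\to0$ for some $c>0$, which with the display above completes the proof.

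I expect the only real obstacle to be the ordering and bookkeeping of constants rather than any substantive new estimate: one must fix $C_{0}$ from Theorem \ref{lemma:kl_support} before choosing $C$ so that the chi-square tail exponent $C/(2\baA)$ dominates $C_{0}+2$, and one should double-check (via \ref{condd2}) that $\qn\qsn=o(n\taun^{2})$ so that the deviation probability is genuinely of exponential order $n\taun^{2}$ and not merely of order $\qn\qsn$. The evidence lower bound, the Fubini identity, and the chi-square tail estimate are otherwise routine.
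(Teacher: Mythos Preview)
Your approach matches the paper's: bound the prior mass of $\Pntw$ exponentially in $n\taun^{2}$ and compare it against an evidence lower bound coming from the KL-neighborhood prior-mass estimate. The paper invokes Theorem~8.20 of Ghosal--van der Vaart directly after verifying a $B_{n,2}$ (KL divergence \emph{and} KL variation) condition, whereas you unpack the same machinery by hand; for the tail of $\specnorm{\bAsn}$ the paper cites Vershynin's Theorem~5.39 on operator norms of Gaussian matrices, while you pass to the Frobenius norm and use a chi-square (Laurent--Massart) bound, which is a bit more elementary and equally valid once you have checked $\qn\qsn\le\qn^{2}=o(n\taun^{2})$ from \ref{condd2}.

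One technical point to tighten: the probability bound $\Pnn\big(D_{n}<e^{-(C_{0}+2)n\taun^{2}}\big)\le(n\taun^{2})^{-1}$ does not follow from the $B_{n,0}$ estimate alone. The ``standard argument'' is a Chebyshev inequality applied to $\int_{B}\log(\bp_{n}/\bp_{0n})\,\de\pin/\pin(B)$ and therefore needs control of the second moment, i.e.\ the KL variation $\klv{\Pnn}{\Pn}$, not just $\kl{\Pnn}{\Pn}$. This is precisely why the paper first verifies that the same parameter region furnishing $\kl{\Pnn}{\Pn}\le n\taun^{2}$ also gives $\klv{\Pnn}{\Pn}\le n\taun^{2}$ (via the Gaussian identity $\klv{\Pnn}{\Pn}=\tfrac12\sum_{s}n_{s}\sum_{j}(1-\psi_{j})^{2}$ and the Frobenius bounds already established in Lemma~\ref{sm lemma:kl_bound}) before calling Theorem~8.20. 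Inserting that one-line verification into your first step closes the gap; the remainder of your argument, including the choice of $C$ large enough that $C/(2\baA)>C_{0}+2$, goes through unchanged.
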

\begin{proof}
For densities $\bp_{0n}$ and $\bp_{n}$ corresponding to $\Pnn$ and $\Pn$, respectively, we define the average KL variation $\klv{\Pnn}{\Pn}=\int \left\{\log\frac{\bp_{0n}}{\bp_{n}}-\kl{\Pnn}{\Pn}\right\}^{2}\de \bp_{0n}$
and the set $B_{n,2}(\Pnn, \tau)=\{\Pn: \kl{\Pnn}{\Pn}\leq n\tau^{2},~ 	\klv{\Pnn}{\Pn} \leq n\tau^{2}\}$.

From \citetlatex[Theorem 3.1]{banerjee2015bayesian} we find that the KL variation between $\mn_{\dn} (\bzero,\bSigmasnn)$ and $\mn_{\dn} (\bzero,\bSigmasn)$ is 
$\half\sum_{j=1}^{\dn} (1-\psi_{j})^{2}$ where $\psi_{j}$'s are eigenvalues of $\bSigmasn^{-\half}\bSigmasnn \bSigmasn^{-\half}$.
Hence $\klv{\Pnn}{\Pn}= \sum_{s=1}^{S} \ns\fnorm{\bSigmasn\inv (\bSigmasnn-\bSigmasn)^{2} \bSigmasn\inv}^{2} \leq \sum_{s=1}^{S}  \frac{\ns\fnorm{\bSigmasnn-\bSigmasn}^{2}} {2 s_{\min}^{2}(\bSigmasn) }$.
Therefore, the same three conditions in Lemma \ref{sm lemma:kl_bound} imply that 
$2\klv{\Pnn}{\Pn}\leq n\taun^{2} $ and accordingly
$\pin\left\{ B_{n,2}(\Pnn, \taun) \right\} >e^{-\wt{C} n\taun^{2}}$ for some absolute constant $\wt{C}>0$.

Note that the elements of  $\bAsn$ are iid $\mn(0, \baA)$ across all $s$.
Thus using \citetlatex[Theorem 5.39]{vershynin_book} we get that
$\pin( \specnorm{\bAsn} \geq C\sqrt{n\taun^{2} }  ) \leq e^{-C'n\taun^{2}} $ with $C'$ depending only on the constants $\baA$ and $C$.
Therefore $\pin(\Pntw)\leq \sum_{s=1}^{S} \pin( \specnorm{\bAsn} \geq C\sqrt{n\taun^{2} }  )\leq e^{-C''n\taun^{2}}$ where $C''$ is a constant that depends only on the constants $\baA$, $S$ and $C$.
Hence, $\frac{\pin(\Pntw)}{\pin\left\{B_{n,2}(\Pnn, \taun)\right\}} \leq e^{-2(C''-\wt{C})n\taun^{2} }=o(e^{-2n\taun^{2}})$.
The last display holds if the constant $C$ in the definition of $\Pntw$ is large enough, 
however, it is independent of $n$.
Using \citetlatex[Theorem 8.20]{ghosal_book} and subsequent application of DCT we conclude the proof.
\end{proof}

\begin{theorem}
\label{sm thm:margibal_var}
Under the same conditions of Theorem \ref{thm: posterior concentration},
$\pin(\bThetan\in \Pnon: \specnorm{\bSigmasn-\bSigmasnn}>M\varepsilonn\mid \Dn)\to 0$ in $\Pnn$-probability.
\end{theorem}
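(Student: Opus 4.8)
The plan is to mirror the proof of Theorem \ref{thm: posterior concentration}, replacing the shared covariance $\bSigman$ by the study-specific marginal covariance $\bSigmasn$ throughout. As before, write
\[
\pin\left(\specnorm{\bSigmasn-\bSigmasnn}>M\varepsilonn\mid\Dn\right)\leq \pin\left(\bThetan\in\Pnon:\specnorm{\bSigmasn-\bSigmasnn}>M\varepsilonn\mid\Dn\right)+\pin\left(\Pntw\mid\Dn\right),
\]
and note that Theorem \ref{th:remainingmass} already sends the second term to $0$ in $\Pnn$-probability, so only the mass on $\Pnon$ needs to be controlled.

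On $\Pnon$ I would introduce the semimetric $\ten(\bThetan,\bThetann)=\frac{1}{n\taun^{2}\cn^{3}}\left(\specnorm{\bSigmasn-\bSigmasnn}-\cn^{2}\epsilonn\right)$, with $\epsilonn\asymp\cn n\taun^{3}$ and $n\taun^{2}\asymp\cn^{2}\sn\qnn\log(\dn\qn)$, exactly as $\en$ was defined in Lemma \ref{th:ghosal} but with $\bSigmasn-\bSigmasnn$ in place of $\bSigman-\bSigmann$, and then invoke Lemma \ref{th:ghosal} with $\ten$. Its condition \ref{cond1}, namely $\pin\{B_{n,0}(\bThetann,\taun)\}\geq e^{-Cn\taun^{2}}$, is precisely Theorem \ref{lemma:kl_support}; since this concerns only a Kullback--Leibler neighborhood of $\Pnn$, it is insensitive to which covariance functional we track, so nothing needs to be reproved there.

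For condition \ref{cond2} I would reuse the test functions $\varphisn$ built in the proof of Theorem \ref{th:testfn}. Their Type-I error $\eE_{\bThetann}\varphisn\to0$ is exactly what was shown there. For the Type-II error I would observe that the chain of bounds in the proof of Theorem \ref{th:testfn} already yields $\specnorm{\bSigetasnn^{-\half}\bSigetasn\bSigetasnn^{-\half}-\bI_{\tilqsnn}}\geq\specnorm{\bSigmasn-\bSigmasnn}/\cn^{2}$ from the \emph{first} inequality of \eqref{eq:rhs_power} (using $\specnorm{\bDeltann}=o(\cn)$ and $\specnorm{\bLambdasnn}^{2}=o(\cn)$ from \ref{ass3}--\ref{ass4}), before the subsequent step that passes to $\specnorm{\bSigman-\bSigmann}$. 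Keeping $\specnorm{\bSigmasn-\bSigmasnn}$ and carrying the remaining estimates through \eqref{sm eq:testfnnn} verbatim---with $\specnorm{\bSigmasn}\leq\specnorm{\bSigmasn-\bSigmasnn}+O(\cn)$ playing the role of $\specnorm{\bSigman}\leq\specnorm{\bSigman-\bSigmann}+\cn$, and $\specnorm{\bAsn}^{2}\leq Cn\taun^{2}$ on $\Pnon$---gives, for every $\bThetan$ in the annulus $\{\bThetan\in\Pnon: j\taun<\ten(\bThetan,\bThetann)\leq2j\taun\}$, the bound $\eE_{\bThetan}(1-\varphisn)\leq e^{-Kj^{2}\ns\taun^{2}}\leq e^{-K'j^{2}n\taun^{2}}$, the last step using $\liminf_{n}\ns/n>0$ from \ref{ass1}. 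Note also that $\varphisn$ carries no dependence on $j$ (its threshold $\epsilonn$ is fixed), so a single sequence of tests serves all large $j$, as Lemma \ref{th:ghosal} requires.

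With both conditions verified, Lemma \ref{th:ghosal} yields $\pin\{\bThetan\in\Pnon:\ten(\bThetan,\bThetann)>M\taun\mid\Dn\}\to0$ in $\Pnn$-probability. Since $\ten(\bThetan,\bThetann)>M\taun$ is equivalent to $\specnorm{\bSigmasn-\bSigmasnn}>\cn^{2}\epsilonn+Mn\cn^{3}\taun^{3}\asymp M'\varepsilonn$ by the same rate-matching used at the end of the proof of Theorem \ref{thm: posterior concentration}, this gives $\pin(\bThetan\in\Pnon:\specnorm{\bSigmasn-\bSigmasnn}>M\varepsilonn\mid\Dn)\to0$ in $\Pnn$-probability, and combining with Theorem \ref{th:remainingmass} finishes the proof. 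The whole argument is essentially bookkeeping; the only genuinely new observation is that the tests of Theorem \ref{th:testfn} already retain exponential power against deviations of $\bSigmasn$ from $\bSigmasnn$---immediate because their power estimate was derived from an inequality already phrased in $\specnorm{\bSigmasn-\bSigmasnn}$---and the single step to watch is keeping the $\cn$-exponents consistent when translating the event $\{\ten>M\taun\}$ into the stated radius $M\varepsilonn$.
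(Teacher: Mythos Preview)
Your approach is correct and follows essentially the same route as the paper: restating Lemma~\ref{th:ghosal} with a semimetric built from $\specnorm{\bSigmasn-\bSigmasnn}$, reusing Theorem~\ref{lemma:kl_support} for condition~\ref{cond1}, and recycling the study-$s$ test $\varphisn$ for condition~\ref{cond2}. The one simplification the paper makes that you miss is to use the sharper bound $\specnorm{\bSigetasnn^{-1}\bSigetasn}\leq\specnorm{\bSigmasn}$ directly from \eqref{eq:used_later}---the extra factor $(1+\specnorm{\bAsn}^{2})$ in \eqref{eq:lhs_complex} was only inserted to pass from $\bSigmasn$ to $\bSigman$---so in the paper's version that factor never enters the marginal analysis, the semimetric simplifies to $\ten=\cn^{-3}\bigl(\specnorm{\bSigmasn-\bSigmasnn}-\cn^{2}\epsilonn-\cn^{3}\tilqsnn/\sqrt{\ns}\bigr)$, and the allowed range of $j$ widens; your looser route still lands on the same radius $\varepsilonn$.
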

\begin{proof}

Recall $\epsilonn\asymp \cn n\taun^{3}$ and $n\taun^{2}=\cn^{2}\sn\qnn\log(\dn\qn)$ defined in the proof of Theorem \ref{thm: posterior concentration}.
Using these we now state a variation of Lemma \ref{th:ghosal}. stated earlier.
\begin{lemma}
\label{lemma:marg_ghosal}
Let	 $\ten$ be a semimetric on $\PPn$ such that
$\ten(\bThetan,\bThetann)= \frac{1}{\cn^{3} } (\specnorm{\bSigmasn-\bSigmasnn}-\cn^{2}\epsilonn-\frac{\cn^{3}\tilqsnn}{\sqrt{\ns}} )$. 
Then $\pin \left\{\bThetan\in\Pnon:  \ten\left(\bThetan, \bThetann\right) >C\taun\mid \Dn \right\} \to 0$ in $\Pnn$-probability for a sufficiently large $M$ and a constant $C>0$,
if for every sufficiently large $j\in \nN$  the following conditions hold 
\begin{enumerate}[label=({{\Roman*}'})]
\item \label{condp1} For some $C>0$, $\pin\left\{B_{n,0}(\bThetann,\taun)\right\}\geq e^{-Cn\taun^{2}}$.
\item \label{condp2} Define the set $\wt{G}_{j,n}=\left\{\bThetan\in \Pnon: j\taun< \ten \left( \bThetan,\bThetann\right)\leq 2j\taun \right\}$. 
There exists tests $ \varphisn$ such that for some $K>0$,
$\lim_{n\to \infty} \eE_{\bThetann} \varphisn= 0$ and $\sup_{\bThetan\in \wt{G}_{j,n}} \eE_{\bThetan} (1-\varphisn)\leq \exp(- Knj^{2}\taun^{2} )$.			
\end{enumerate}
\end{lemma}
Proof of the above lemma follows same line of arguments as Lemma \ref{th:ghosal}.
Additionally  condition \ref{condp1} is identical to \ref{cond1} from Lemma \ref{th:ghosal}.
We show the existence of a sequence of test functions satisfying condition \ref{condp2} in Lemma \ref{th:testfn marginal}.
This implies that 
\begin{equation}
\pin \{\bThetan\in\Pnon:  \ten\left(\bThetan, \bThetann\right) >C\taun\mid \Dn \}\to 0 \text{ in $\Pnn$-probability} .\label{eq:prob_marg}
\end{equation}

Note that $\ten\left(\bThetan, \bThetann\right) >C\taun \Leftrightarrow \specnorm{\bSigmasn-\bSigmasnn} >\cn^{2}\epsilonn+\frac{\cn^{3}\tilqsnn}{\sqrt{\ns}} + C\cn^{3}\taun$. 
From respective definitions of $\epsilonn$ and $\varepsilonn$, $\cn^{2}\epsilonn\asymp \varepsilonn$ and from \ref{condd2} $\varepsilonn \succ \max\left(\frac{\cn^{3}\tilqsnn}{\sqrt{\ns}},  \cn^{3}\taun\right)$.
Thus \eqref{eq:prob_marg} implies that for some large enough $M$,
$\pin (\bThetan\in\Pnon:\specnorm{\bSigmasn-\bSigmasnn} >M\varepsilonn\mid \Dn )\to 0$ in $\Pnn$-probability. 
\end{proof}

\begin{lemma}
\label{th:testfn marginal}
Recalling the definitions from Lemma \colredb{\ref{lemma:marg_ghosal}},
we let the set $\wt{G}_{j,n}=\{\bThetan \in\Pnon: j\taun \leq \ten(\bThetan,\bThetann) \leq 2j\taun \}$ 
for positive integer $j=o\left[ {\sqrt{n}} \left\{ 
\cn^{2} \sn\qnn\log(\dn\qn)\right\}^{-\half}\right]$. 
Based on observed data $\Dn$ 
consider the following hypothesis testing problem $H_{0}:\bThetan=\bThetann$  versus  $H_{1}:\bThetan\in \wt{G}_{j,n}$.
We use $\varphisn= \mathbbm{1}\left( \specnorm{\frac{1}{\ns}\bXisn  - \bI_{\tilqsnn}} > \epsilonn  \right)$ from Theorem \ref{th:testfn} as the test function.
Then for some absolute constant $K>0$
\begin{equation*}
\eE_{H_{0}}(\varphisn) \to 0; \qquad \sup_{\bThetan\in \wt{G}_{j,n}} \eE_{\bThetan} (1-\varphisn)\leq \exp(- Knj^{2}\taun^{2} ).
\end{equation*}
\end{lemma}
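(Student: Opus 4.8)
The plan is to prove this as the study-specific counterpart of Theorem~\ref{th:testfn}, reusing that proof almost verbatim and modifying only the places where the target object is $\bSigmasn$ rather than $\bSigman$. The Type-I bound needs no new work: the test statistic $\varphisn=\mathbbm{1}(\specnorm{\ns\inv\bXisn-\bI_{\tilqsnn}}>\epsilonn)$ is literally the one used in Theorem~\ref{th:testfn}, and the identity \eqref{eq:sig_eta} again gives $\ns\inv\eE_{H_0}\bXisn=\bI_{\tilqsnn}$, so the same sample-covariance concentration bound yields $\eE_{H_0}\varphisn\le 2\exp(-\wt{C}\,\tilqsnn t_n^2)$ for any $t_n$ with $\tilqsnn t_n^2\le\ns\epsilonn^2$; conditions \ref{ass1}, \ref{ass2}, \ref{condd2} force $\ns\epsilonn^2/\tilqsnn\to\infty$, so $\eE_{H_0}\varphisn\to0$. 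I would just invoke the Type-I part of Theorem~\ref{th:testfn}.

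\textbf{Type-II error.} For $\bThetan\in\wt{G}_{j,n}\subseteq\Pnon$ I would reproduce the chain \eqref{eq:test_power}--\eqref{sm eq:testfnnn}: with $\bz_{s,i}=\bSigetasn^{-1/2}\bUpsilon_{s,in}\simiid\mn_{\tilqsnn}(\bzero,\bI_{\tilqsnn})$,
\[
1-\varphisn\le\mathbbm{1}\Big\{\specnorm{\bSigetasnn\inv\bSigetasn}\,\specnorm{\ns\inv\textstyle\sum_{i=1}^{\ns}\bz_{s,i}\bz_{s,i}\trans-\bI_{\tilqsnn}}\ge\specnorm{\bSigetasnn^{-1/2}\bSigetasn\bSigetasnn^{-1/2}-\bI_{\tilqsnn}}-\epsilonn\Big\}.
\]
The identities \eqref{eq:sig_eta}--\eqref{eq:rhs_complex_chi} and Lemma~\ref{lemma:pati} transfer unchanged; the two modifications are (i) in lower-bounding the deviation term I would stop at the \emph{first} inequality of \eqref{eq:rhs_power}, getting $\specnorm{\bSigetasnn^{-1/2}\bSigetasn\bSigetasnn^{-1/2}-\bI_{\tilqsnn}}\ge\specnorm{\bSigmasn-\bSigmasnn}/\{\specnorm{\bDeltann}(1+\specnorm{\bLambdasnn\trans\bDeltann\inv\bLambdasnn})\}\ge\specnorm{\bSigmasn-\bSigmasnn}/\cn^2$ via \ref{ass3}--\ref{ass4}, instead of continuing to $\specnorm{\bSigman-\bSigmann}/\cn^2$; and (ii) using that on $\wt{G}_{j,n}$ the definition of $\ten$ gives $\specnorm{\bSigmasn-\bSigmasnn}>\cn^3 j\taun+\cn^2\epsilonn+\cn^3\tilqsnn/\sqrt{\ns}$, so that $\specnorm{\bSigmasn-\bSigmasnn}/\cn^2-\epsilonn$ has surplus at least $\cn j\taun+\cn\tilqsnn/\sqrt{\ns}$. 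Keeping the factor $\specnorm{\bSigetasnn\inv\bSigetasn}\le\specnorm{\bSigman}(1+\specnorm{\bAsn}^2)$ from \eqref{eq:used_later}--\eqref{eq:lhs_complex}, with $\specnorm{\bAsn}^2\le Cn\taun^2$ on $\Pnon$ and $\specnorm{\bSigman}\le\specnorm{\bSigmasn}\le\specnorm{\bSigmasnn}+\specnorm{\bSigmasn-\bSigmasnn}\le C\cn+\specnorm{\bSigmasn-\bSigmasnn}$, I would reduce to $1-\varphisn\le\mathbbm{1}\{\specnorm{\ns\inv\sum_{i=1}^{\ns}\bz_{s,i}\bz_{s,i}\trans-\bI_{\tilqsnn}}\ge(\wt{C}\sqrt{\tilqsnn}+j\sqrt{n}\taun)/\sqrt{\ns}\}$ exactly as in \eqref{sm eq:testfnnn}, and then Lemma~\ref{sm lemma:testfn_aux} together with the lower-tail sample-covariance bound of Theorem~\ref{th:testfn} gives $\eE_{\bThetan}(1-\varphisn)\le e^{-Kj^2\ns\taun^2}\le e^{-Knj^2\taun^2}$ uniformly over $\wt{G}_{j,n}$, using \ref{ass1}.

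\textbf{Main obstacle.} The only genuinely new work, and the delicate point, is the last reduction: one must check that on $\wt{G}_{j,n}$ the surplus $\specnorm{\bSigmasn-\bSigmasnn}/\cn^2-\epsilonn$, after division by the (potentially large, of order $\cn^3 n\taun^2$) factor $\specnorm{\bSigetasnn\inv\bSigetasn}$, still clears the $\Theta(\sqrt{\tilqsnn/\ns})$ noise floor of the sample-covariance concentration inequality so that the Type-II exponential bound actually emerges. This is precisely what forces the extra $\cn^3\tilqsnn/\sqrt{\ns}$ correction and the normalization $1/\cn^3$ --- rather than $1/(n\taun^2\cn^3)$, as in $\en$ --- in the definition of $\ten$; pinning down these constants, using \ref{ass5} and \ref{condd2} to bound $\specnorm{\bSigmasn-\bSigmasnn}$ and $\specnorm{\bAsn}^2$ on $\wt{G}_{j,n}$, is where essentially all the effort goes, the rest being a transcription of the proof of Theorem~\ref{th:testfn}.
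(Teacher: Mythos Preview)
Your overall plan and Type-I argument are correct and match the paper. The issue is in the Type-II bound, specifically in how you upper-bound $\specnorm{\bSigetasnn\inv\bSigetasn}$.

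You propose to keep the bound $\specnorm{\bSigetasnn\inv\bSigetasn}\le\specnorm{\bSigman}(1+\specnorm{\bAsn}^2)$ from \eqref{eq:lhs_complex}, then control $\specnorm{\bAsn}^2\le Cn\taun^2$ on $\Pnon$. This introduces an extra $n\taun^2$ factor in the denominator of the indicator's threshold. When you then try to run the Lemma~\ref{sm lemma:testfn_aux} argument, the key requirement is that $\omega\cdot\cn^2(\wt C\sqrt{\tilqsnn}+j\sqrt{n}\taun)/\sqrt{\ns}=o(1)$ with $\omega=Cn\taun^2$; the dominant piece is $\asymp j\,\cn^2 n\taun^3$, which is $o(1)$ only under the restrictive range $j=o\bigl[1/(\cn^2 n\taun^3)\bigr]$ used in Theorem~\ref{th:testfn}. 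Here the lemma only assumes $j=o(1/\taun)$, a much larger range, so the Lemma~\ref{sm lemma:testfn_aux}-style reduction breaks down and you cannot invoke it ``exactly as in \eqref{sm eq:testfnnn}''. Your ``Main obstacle'' paragraph correctly notes that $\ten$ carries the normalization $1/\cn^3$ rather than $1/(n\taun^2\cn^3)$, but you do not observe that retaining the $(1+\specnorm{\bAsn}^2)$ factor is incompatible with that simpler normalization.

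The fix, and the paper's approach, is to stop one line earlier: from \eqref{eq:used_later} the bracketed $\max$ factor equals $(1+\chisrnn)^{-1/2}\le 1$, so already $\specnorm{\bSigetasnn\inv\bSigetasn}\le\specnorm{\bSigmasn}$. Since the target object here is $\bSigmasn$ itself, bound $\specnorm{\bSigmasn}\le\specnorm{\bSigmasn-\bSigmasnn}+\cn$ directly, with no $\specnorm{\bAsn}$ term. The resulting threshold is $\dfrac{\specnorm{\bSigmasn-\bSigmasnn}-\epsilonn\cn^2}{\cn^2(\specnorm{\bSigmasn-\bSigmasnn}+\cn)}$, and the Lemma~\ref{sm lemma:testfn_aux}-style algebra goes through for $j=o(1/\taun)$, using the extra $\cn^3\tilqsnn/\sqrt{\ns}$ correction in $\ten$ to absorb the $\sqrt{\tilqsnn/\ns}$ noise floor. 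In short: the passage from $\specnorm{\bSigmasn}$ to $\specnorm{\bSigman}(1+\specnorm{\bAsn}^2)$ was needed in Theorem~\ref{th:testfn} only to reach $\bSigman$; here it is both unnecessary and fatal.
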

\begin{proof}
\begin{description}[leftmargin=7pt]
\item[Type-I error:] Please refer to corresponding section in the proof of Theorem \ref{th:testfn} where it has already been established that $\lim_{n\to \infty} \eE_{H_{0}}(\varphisn) =0$.

\item[Type-II error:] From equation \eqref{eq:used_later}, 
\begin{equation}
\specnorm{\bSigetasnn\inv\bSigetasn   } \leq \specnorm{\bSigmasn }\leq \specnorm{\bSigmasn- \bSigmasnn }+\specnorm{ \bSigmasn }<\specnorm{\bSigmasn- \bSigmasnn }+\cn.    \label{eq:ineq_marg_typ2a}
\end{equation}    
We obtain the last inequality in the previous display using \ref{ass3} and \ref{ass4}.
Using \eqref{eq:rhs_power}
\begin{equation}
\specnorm{\bSigetasnn^{-\half}\bSigetasn \bSigetasnn^{-\half}  - \bI_{\tilqsnn}} \geq
\frac{\specnorm{\bSigmasn-\bSigmasnn}}{\specnorm{\bDeltann} (1+\specnorm{ \bLambdasnn\trans \bDeltann\inv 
\bLambdasnn}) } \geq \frac{\specnorm{\bSigmasn-\bSigmasnn}} {\cn^{2}}.\label{eq:ineq_marg_typ2b}
\end{equation}

Combining \eqref{eq:test_power}, \eqref{eq:ineq_marg_typ2a} and \eqref{eq:ineq_marg_typ2b} we get
\begin{align}
&1-\varphisn\notag\\
\leq &\mathbbm{1}\left\{  \left(\specnorm{\bSigmasn- \bSigmasnn }+\cn\right) \specnorm{\frac{1}{\ns}  \sum_{i=1}^{\ns}  \bz_{s,i} \bz_{s,i}\trans -\bI_{\tilqsnn}  }  \geq  \frac{\specnorm{\bSigmasn-\bSigmasnn}}{\cn^{2}} - \epsilonn \right\}\notag\\
\leq &\mathbbm{1}\left\{   \specnorm{\frac{1}{\ns}  \sum_{i=1}^{\ns}  \bz_{s,i} \bz_{s,i}\trans -\bI_{\tilqsnn}  }  \geq  \frac{\specnorm{\bSigmasn-\bSigmasnn}  - \epsilonn\cn^{2}} {\cn^{2}\left(\specnorm{\bSigmasn-\bSigmasnn}+\cn\right)  }  \right\}. 
\end{align}
Using $j=o\left[ {\sqrt{n}} \left\{ 
\cn^{2} \sn\qnn\log(\dn\qn)\right\}^{-\half}\right]$ and 
similar arguments as in the proof of Lemma \ref{sm lemma:testfn_aux},
it can be straightforwardly seen that $\frac{\specnorm{\bSigmasn-\bSigmasnn}  - \epsilonn\cn^{2}} {\cn^{2}\left(\specnorm{\bSigmasn-\bSigmasnn}+\cn\right)  }>\left( \wt{C}\sqrt{\tilqsnn}+ j\sqrt{n}\taun \right)\frac{1 }{\sqrt{\ns}}$
for some absolute constant $\wt{C}>0$ if $\bThetan\in \wt{G}_{j,n}$.
Further applying \citetlatex[Equation 5.26]{vershynin_book}, for all $\bThetan \in \wt{G}_{j,n}$ we get
\begin{equation*}
\eE_{\bThetan} (1-\varphisn )\leq \Pr \left\{   \specnorm{\frac{1}{\ns}  \sum_{i=1}^{\ns}  \bz_{s,i} \bz_{s,i}\trans -\bI_{\tilqsnn}  }  \geq  \left( \wt{C}\sqrt{\tilqsnn}+ j\sqrt{n}\taun \right)\frac{1} {\sqrt{\ns}} \right\}\leq  e^{-Kj^{2}\ns\taun^{2} }.
\end{equation*}
Using \ref{ass1} we conclude $\sup_{\bThetan \in \wt{G}_{j,n}} \eE_{\bThetan} (1-\varphisn )\leq   e^{-K'j^{2}n\taun^{2} }$ for some $K'>0$.
\end{description}    
\end{proof}

\begin{proof}[Proof of Lemma \ref{lem:numerical_complexity}]
Each HMC step comprises calculating the gradients of the $\log$-posterior followed by evaluating the density at the proposed value.
We separately compute the numerical complexities of each.
\vspace*{-2ex}
\paragraph*{Gradient computation:} As steps \ref{sum_del} and \ref{sum_lam} are simply parallelized sums of objects with size at most $qd$, it suffices to focus attention on steps \ref{mat_inv}-\ref{step5}. 
These are parallelized across studies $s=1,\dots,S$, so we analyze the cost for each study as follows.
\begin{description}[leftmargin=7pt,topsep=.5ex, partopsep=0pt,itemsep=-1pt ]
\item[Step \ref{mat_inv}:] Computing the Cholesky factor $\bCs^{\half}$ is order $\O(q^{3})$ since $\bCs$ has a low-rank plus diagonal decomposition, and  Cholesky decomposition of a $q\times q$ matrix is  $\O(q^{3})$ in the worst case.
Next, the matrix product $\wt{\bLambda}_{s}=\bLambda\bCs^{\half}$ is $\O(q^{2}d)$.
Finally, owing to the low-rank and diagonal decomposition structure, the inversion $\bSigma_{s}\inv$ has complexity $\mathcal{O}(q^{2}d)$.
Therefore the overall complexity of this step is $\mathcal{O}\{\max(q^{2}d, q^{3}) \}=\O(q^{2}d)$ since   in our setting $q\ll d$.
\item[Step \ref{step:bigmats}:] Noting that $\bSigma_{s}\inv=(\wt{\bLambda}_{s} \wt{\bLambda}_{s} \trans+\bDelta)^{-1}=\bDelta\inv$ $- \bDelta\inv \wt{\bLambda}_{s} (\bI_{q}+\wt{\bLambda}_{s}\trans\bDelta^{-1}\wt{\bLambda}_{s})	^{-1} \wt{\bLambda}_{s}\trans\bDelta^{-1}$ reveals that computing $\bGs$ is $\mathcal{O}(qd^{2})$.

\item[Steps \ref{step:gradlambda} and \ref{step5}:] Because the factor $\bGs$ is cached from the preceding step, obtaining the derivative (following the flowchart in Figure \ref{fig:distributed_computing}) only requires simple matrix multiplies, with complexity no more than $\mathcal{O}(qd^{2})$.    
\end{description}
Hence, the dominant complexity of steps \ref{mat_inv}-\ref{step5} is  $\mathcal{O}(qd^{2})$, so that the combined computational complexity of all leapfrog steps is $\O(Lqd^{2})$.
\vspace*{-2ex}
\paragraph*{$\log$-posterior computation:} From equation \eqref{eq:full_posterior} it can be seen that evaluating $\L$ at a proposal is the expensive step involving high-dimensional matrix operations while the prior densities are quite simple to evaluate.
Therefore, we focus on the complexity of calculating $\L$:
owing to the structure of $\bSigma_{s}$, $\abs{\bSigma_{s}}$ can be computed  in $\O(q^{2}d)$.
Using the already cached $\bSigma_{s}\inv \bW_{s}$ from step \ref{step:bigmats}, $\trace(\bSigma_{s}\inv \bW_{s})$ can be computed in $\O(d)$.
Both of the aforementioned operations are done in parallel across $s$, so 
evaluating the $\log$-posterior has complexity no more than $\O(q^{2}d)$.

Hence, the runtime complexity of each HMC step is $\O(Lqd^{2})$.    
\end{proof}

\subsection{Technical Lemmas and Complete Proofs}
\label{sm sec:more_proofs}

\begin{proof}[\bf Proof of Lemma \ref{th:ghosal}]
Lemma \ref{th:ghosal} closely resembles Theorem 8.22 from \citetlatex{ghosal_book}. 
In the discussion following equation (8.22) in the book, the authors note that for the iid case, simpler theorems are obtained by using an absolute lower bound on the prior
mass and by replacing the local entropy by the global entropy. 
In particular, \ref{cond1} implies
Theorem 8.19(i) in the book. 
Similarly, \ref{cond2} is the same condition in Theorem 8.20, and so the result follows.
\end{proof}

\begin{lemma}
\label{lemma:pati}
For any two matrices $\bA$ and $\bB$,
\begin{enumerate}[label=({{\roman*}})]
\item $\smin{\bA} \fnorm{\bB}\leq \fnorm{\bA \bB} \leq \specnorm{\bA} \fnorm{\bB}$.
\item $\smin{\bA} \specnorm{\bB}\leq \specnorm{\bA \bB} \leq \specnorm{\bA} \specnorm{\bB}$.
\item \label{lemma:pati3} $\smin{\bA} \smin{\bB}\leq \smin{\bA \bB} \leq \specnorm{\bA} \smin{\bB}$.
\end{enumerate}
\end{lemma}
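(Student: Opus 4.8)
The plan is to derive everything from a single elementary fact — how left-multiplication by $\bA$ distorts the norm of a vector — and then push it through the three norms appearing in the statement.

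\emph{The engine.} Using a singular value decomposition $\bA=\bU\bS\bV\trans$, for any compatible vector $\bv$ we have $\norm{\bA\bv}=\norm{\bS\bV\trans\bv}$, and since $\bV\trans$ is an isometry while the diagonal entries of $\bS$ lie in $[\smin{\bA},\specnorm{\bA}]$, this gives the two-sided bound $\smin{\bA}\norm{\bv}\le\norm{\bA\bv}\le\specnorm{\bA}\norm{\bv}$ (reading $\smin{\bA}$ as the smallest of the $\min\{\text{rows},\text{cols}\}$ singular values, which is the convention in force in the paper's applications). Every inequality in the lemma is a direct consequence of this.

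\emph{Part (i).} Let $\bb_1,\dots,\bb_k$ be the columns of $\bB$, so that $\bA\bB$ has columns $\bA\bb_1,\dots,\bA\bb_k$ and $\fnorm{\bA\bB}^2=\sum_{j}\norm{\bA\bb_j}^2$. Squaring the engine bound with $\bv=\bb_j$ and summing over $j$ yields $\smin{\bA}^2\fnorm{\bB}^2\le\fnorm{\bA\bB}^2\le\specnorm{\bA}^2\fnorm{\bB}^2$, using $\sum_j\norm{\bb_j}^2=\fnorm{\bB}^2$; now take square roots. \emph{Parts (ii) and (iii).} Here I would invoke the variational formulas $\specnorm{\bM}=\sup_{\norm{\bv}=1}\norm{\bM\bv}$ and $\smin{\bM}=\inf_{\norm{\bv}=1}\norm{\bM\bv}$. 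For every unit vector $\bv$, applying the engine to $\bA$ with argument $\bB\bv$ gives $\smin{\bA}\norm{\bB\bv}\le\norm{\bA\bB\bv}\le\specnorm{\bA}\norm{\bB\bv}$. Taking $\sup_{\norm{\bv}=1}$ across this chain and using $\sup_{\norm{\bv}=1}\norm{\bB\bv}=\specnorm{\bB}$ proves (ii); taking $\inf_{\norm{\bv}=1}$ instead (and additionally using $\norm{\bB\bv}\ge\smin{\bB}$ to handle the leftmost term) together with $\inf_{\norm{\bv}=1}\norm{\bB\bv}=\smin{\bB}$ proves (iii). So (ii) and (iii) are really the same computation run with $\sup$ versus $\inf$.

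\emph{Main obstacle.} None of this is deep; the only genuine subtlety is the convention for $\smin{\cdot}$ on non-square matrices and the resulting validity of the lower engine bound $\smin{\bA}\norm{\bv}\le\norm{\bA\bv}$ — strictly this requires $\bA$ to have at least as many rows as columns, otherwise $\bA$ has a nontrivial kernel and the inequality must be read relative to the smallest \emph{positive} singular value. I would fix this convention once at the outset so that every subsequent invocation of the lemma (for instance in the proof of Theorem~\ref{th:testfn}) is unambiguous.
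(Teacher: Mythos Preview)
Your argument is correct and complete: the SVD-based ``engine'' inequality $\smin{\bA}\norm{\bv}\le\norm{\bA\bv}\le\specnorm{\bA}\norm{\bv}$ applied columnwise gives (i), and the variational characterizations of $\specnorm{\cdot}$ and $\smin{\cdot}$ give (ii) and (iii) by taking $\sup$ and $\inf$ respectively over unit vectors. Your flagging of the tall-matrix convention for $\smin{\cdot}$ is also well placed.

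The paper, however, does not prove this lemma at all --- it simply cites Lemma~1.1 of the supplement to Pati et al.\ (2014) and borrows the result verbatim. So your approach is genuinely different in the sense that you supply a self-contained proof where the paper defers to an external reference; what you gain is that the reader need not chase down the citation, and the shape-convention caveat you raise is made explicit rather than left implicit in the cited source.
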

\begin{proof}
We borrow Lemma \ref{lemma:pati} from \citelatex{pati2014} (Lemma 1.1. from the supplement).
\end{proof}

\begin{lemma}
\label{sm lemma:gaussianbdd}
Let $\bx=(x_{1},\dots,x_{n})\trans$ be a random vector with $x_{i}\simiid\mn(0,\sigma^{2})$ and $\bx_{0}=(x_{01},\dots,x_{0n})\trans$ be a fixed vector.
Then for some absolute constant $C >0$, $\Pr(\norm{\bx-\bx_{0} }\leq \tau) \geq e^{-C \max\{n\log\frac{\sqrt{\sigma} }{\tau}, n\log n, \frac{1}{\sigma^{2}}(\norm{\bx_{0}}+\tau)^{2} \} }$.
\end{lemma}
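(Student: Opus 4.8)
The plan is to bound the small-ball probability below by a product of one–dimensional Gaussian anticoncentration estimates, and then to reconcile the resulting constants with the stated maximum. First I would use the inclusion
$\{\norm{\bx-\bx_{0}}\le\tau\}\supseteq\bigcap_{i=1}^{n}\{|x_{i}-x_{0i}|\le\tau/\sqrt{n}\}$ together with independence of the coordinates to reduce to
\[
\Pr(\norm{\bx-\bx_{0}}\le\tau)\;\ge\;\prod_{i=1}^{n}\Pr\!\left(|x_{i}-x_{0i}|\le\tau/\sqrt{n}\right).
\]

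Next I would lower bound each factor by (length of the interval) times (infimum of the $\mn(0,\sigma^{2})$ density over that interval). Since the Gaussian density is decreasing in $|t|$, its infimum over $[x_{0i}-\tau/\sqrt{n},\,x_{0i}+\tau/\sqrt{n}]$ is attained at the endpoint of largest modulus, which has modulus at most $|x_{0i}|+\tau/\sqrt{n}$, giving
\[
\Pr\!\left(|x_{i}-x_{0i}|\le\tau/\sqrt{n}\right)\;\ge\;\frac{2\tau}{\sqrt{2\pi n}\,\sigma}\,\exp\!\left(-\frac{(|x_{0i}|+\tau/\sqrt{n})^{2}}{2\sigma^{2}}\right).
\]
Multiplying over $i$ and using Cauchy--Schwarz in the form $\sum_{i=1}^{n}|x_{0i}|\le\sqrt{n}\,\norm{\bx_{0}}$, so that $\sum_{i=1}^{n}(|x_{0i}|+\tau/\sqrt{n})^{2}\le\norm{\bx_{0}}^{2}+2\tau\norm{\bx_{0}}+\tau^{2}=(\norm{\bx_{0}}+\tau)^{2}$, this would produce
\[
\Pr(\norm{\bx-\bx_{0}}\le\tau)\;\ge\;\exp\!\left(-n\log\frac{\sqrt{2\pi n}\,\sigma}{2\tau}-\frac{(\norm{\bx_{0}}+\tau)^{2}}{2\sigma^{2}}\right).
\]

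The final step is to absorb the exponent into $C\max\{n\log(\sqrt{\sigma}/\tau),\,n\log n,\,\sigma^{-2}(\norm{\bx_{0}}+\tau)^{2}\}$. The $\sigma^{-2}(\norm{\bx_{0}}+\tau)^{2}$ term is handled by the third member of the maximum; for the packing/normalizing factor, if $\sqrt{2\pi n}\,\sigma\le 2\tau$ then $n\log\frac{\sqrt{2\pi n}\,\sigma}{2\tau}\le 0$ and can be discarded, while otherwise one splits $\log\frac{\sqrt{2\pi n}\,\sigma}{2\tau}\le\frac12\log(2\pi n)+\frac12\log\sigma+\log(\sqrt{\sigma}/\tau)$ and uses $\frac12\log(2\pi n)\le C'(1+\log n)$ (and $n\ge 1$) to bound everything by $C\max\{n\log(\sqrt{\sigma}/\tau),\,n\log n\}$. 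I expect this last bookkeeping — matching $\big(2\tau/\sqrt{2\pi n}\,\sigma\big)^{n}$ to an expression of the form $\exp\{-C\max(\cdots)\}$ over all regimes of $\sigma,\tau,n$ — to be the only mildly delicate point; the genuinely substantive inputs (the product reduction, the density–infimum lower bound, and the Cauchy--Schwarz replacement of $\norm{\bx_{0}}^{2}+\tau^{2}$ by $(\norm{\bx_{0}}+\tau)^{2}$) are routine. One could also shortcut a portion of the case analysis by invoking a standard Gaussian small-ball lemma, but deriving it directly as above keeps the argument self-contained.
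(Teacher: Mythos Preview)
Your approach is correct and takes a genuinely different route from the paper. The paper bounds $\Pr(\norm{\bx-\bx_0}\le\tau)$ directly by (volume of the $n$-ball of radius $\tau$) $\times$ (infimum of the joint $\mn_n(\bzero,\sigma^2\bI_n)$ density over that ball), then invokes the Stirling-type asymptotic $v_n\asymp(2\pi e)^{n/2}n^{-(n+1)/2}$ for the unit-ball volume; since the density is radially decreasing, its infimum is attained at distance $\norm{\bx_0}+\tau$ from the origin, which immediately delivers the $(\norm{\bx_0}+\tau)^2$ term. Your inscribed-cube reduction avoids Stirling's formula entirely and works coordinate by coordinate, at the cost of needing the Cauchy--Schwarz step $\sum_i|x_{0i}|\le\sqrt{n}\,\norm{\bx_0}$ to recover $(\norm{\bx_0}+\tau)^2$ from $\sum_i(|x_{0i}|+\tau/\sqrt{n})^2$. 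Both routes yield the same exponent up to constants, namely $n\log(\sigma/\tau)+\tfrac12 n\log n+(\norm{\bx_0}+\tau)^2/(2\sigma^2)$, so neither buys a sharper bound; yours is arguably more elementary, the paper's slightly more direct.

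One remark on the last bookkeeping step you flag as delicate: your decomposition leaves a residual $\tfrac{n}{2}\log\sigma$ that you do not explicitly absorb into $C\max\{n\log(\sqrt{\sigma}/\tau),\,n\log n\}$. This is not an oversight unique to your argument---the paper's own final inequality passes from $n\log(\tau/\sigma)$ to $n\log(\tau/\sqrt{\sigma})$ without comment, which is the same issue. The term is only absorbable when $\sigma$ is treated as a fixed constant (so that $\tfrac{n}{2}\log\sigma=O(n)\le C\,n\log n$ for large $n$), and indeed both applications of the lemma in the paper have $\sigma$ fixed (equal to $\sqrt{\baA}$ or $\sigman$). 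Your proof sketch is complete under that reading.
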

\begin{proof}
Let us first define $v_{n}(r)$ to be the $n$-dimensional Euclidean ball of radius $r$ centered at zero and $\abs{v_{n}(r)}$ denote its volume.
For the sake of brevity, denote $v_{n}=\abs{v_{n}(1)}$, so that $\abs{v_{n}(r)}=r^{n}v_{n}$. 
Using \citetlatex[Lemma 5.3]{castillo2012needles}, $v_{n}\asymp (2\pi e)^{n/2}n^{-(n+1)/2}$.

Note that $\Pr(\norm{\bx-\bx_{0} }\leq \tau)
\geq \abs{v_{n}(\tau)}   \inf_{\bz: \norm{\bz-\bx_{0} }\leq \tau } \mn_{n}(\bz; \bzero, \sigma^{2}\bI_{n}) $ where $\mn_{n}(\bz; \bmu, \bSigma)$ denotes the density of a $n$ dimensional multivariate normal distribution with mean $\bmu$ and covariance matrix $\bSigma$ at the point $\bz$.
Now the greatest distance between the origin of the $n$-dimensional Euclidean space and the $n$-dimensional sphere of radius $\tau$ centered at $\bx_{0}$ is bounded above by $\norm{\bx_{0}}+\tau$.
Since the density of $\mn_{n}(\bzero, \sigma^{2}\bI_{n} )$ at a point $\bz$ monotonically decreases with $\norm{\bz}$, we have $\inf_{\bz: \norm{\bz-\bx_{0} }\leq \tau } \mn_{n}(\bz; \bzero, \sigma^{2}\bI_{n}) \geq \frac{1}{(\sqrt{2\pi }\sigma )^{n} } e^{- \frac{1}{2\sigma^{2}} (\norm{\bx_{0}}+\tau)^{2} }$.
Hence for absolute constants $C, C', C''>0$
\begin{align*}
&\Pr(\norm{\bx-\bx_{0} }\leq \tau)
\geq\abs{v_{n}(\tau)}   \inf_{\bz: \norm{\bz-\bx_{0} }\leq \tau } \mn_{n}(\bz; \bzero, \sigma^{2}\bI_{n})\\
\geq &C \tau^{n} (2\pi e)^{n/2}n^{-{(n+1)}/2} \times \frac{1}{(\sqrt{2\pi }\sigma )^{n} } e^{- \frac{1}{2\sigma^{2}}(\norm{\bx_{0}}+\tau)^{2} } \geq e^{n\log\frac{\tau}{\sqrt{\sigma}} -C' n(\log n +C'') - \frac{1}{2\sigma^{2}}(\norm{\bx_{0}}+\tau)^{2} } \label{eq:normalbdd}
\end{align*}
which concludes the proof.
\end{proof}

\begin{lemma}
\begin{equation*}
2\kl{\Pnn}{\Pn}= \sum_{s=1}^{S}\ns \left\{\trace(\bSigmasn\inv\bSigmasnn -\bI_{\dn}) -\log \abs{\bSigmasn\inv\bSigmasnn }  \right\} \leq  \sum_{s=1}^{S}  \frac{\ns\fnorm{\bSigmasn - \bSigmasnn}^{2} } {\delmin^{2} \smin{\bSigmasn}}.
\end{equation*}
\label{lemma:kl_bound}
\end{lemma}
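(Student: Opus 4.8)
The plan is to exploit the product structure of the two joint laws and then reduce everything to a one-dimensional inequality on the eigenvalues of $\bSigmasn\inv\bSigmasnn$. Since the $S$ studies are mutually independent and, within study $s$, the $\ns$ observations are i.i.d.\ $\mn_{\dn}(\bzero,\bSigmasnn)$ under $\Pnn$ and i.i.d.\ $\mn_{\dn}(\bzero,\bSigmasn)$ under $\Pn$, additivity of the Kullback--Leibler divergence over independent factors gives $\kl{\Pnn}{\Pn}=\sum_{s=1}^{S}\ns\,\kl{\mn_{\dn}(\bzero,\bSigmasnn)}{\mn_{\dn}(\bzero,\bSigmasn)}$, and plugging in the standard closed form for the divergence between two centered Gaussians yields the stated identity $2\kl{\Pnn}{\Pn}=\sum_{s=1}^{S}\ns\{\trace(\bSigmasn\inv\bSigmasnn-\bI_{\dn})-\log\abs{\bSigmasn\inv\bSigmasnn}\}$. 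So the first equality is immediate and the content is the bound.

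For the inequality I would work study by study. Fix $s$ and let $\psi_{1},\dots,\psi_{\dn}>0$ be the eigenvalues of $\bSigmasn\inv\bSigmasnn$ (equivalently of the symmetric positive-definite matrix $\bSigmasn^{-1/2}\bSigmasnn\bSigmasn^{-1/2}$). Then the $s$-th summand equals $\sum_{j=1}^{\dn}(\psi_{j}-1-\log\psi_{j})$, a sum of nonnegative terms. Using the elementary inequality $\log x\ge 1-x\inv$ for $x>0$ gives $\psi_{j}-1-\log\psi_{j}\le\psi_{j}-2+\psi_{j}\inv=(\psi_{j}-1)^{2}/\psi_{j}$, so the $s$-th summand is bounded by $\sum_{j}(\psi_{j}-1)^{2}/\psi_{j}$.

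The crucial step is to recast $\sum_{j}(\psi_{j}-1)^{2}/\psi_{j}$ in matrix form so that the bound involves $\smin{\bSigmasn}$ rather than $\specnorm{\bSigmasn}$: the naive bound $\psi_{j}\inv\le(\min_{k}\psi_{k})\inv$ brings in $\specnorm{\bSigmasn}$ through $\min_{k}\psi_{k}\ge\delmin^{2}/\specnorm{\bSigmasn}$ and does not match the claim. Writing $\bN=\bSigmasn^{-1/2}\bSigmasnn\bSigmasn^{-1/2}$ and $\bE=\bSigmasnn-\bSigmasn$, the matrices $\bN-\bI=\bSigmasn^{-1/2}\bE\bSigmasn^{-1/2}$ and $\bN\inv$ commute, so $\sum_{j}(\psi_{j}-1)^{2}/\psi_{j}=\trace\{(\bN-\bI)\bN\inv(\bN-\bI)\}$, and a cyclic rearrangement collapses this to $\trace(\bE\bSigmasnn\inv\bE\bSigmasn\inv)$. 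Since $\bE\bSigmasnn\inv\bE$ is positive semidefinite, the inequality $\trace(\bA\bB)\le\specnorm{\bB}\trace(\bA)$ valid for $\bA\succeq\bzero$, applied twice, gives $\trace(\bE\bSigmasnn\inv\bE\bSigmasn\inv)\le\specnorm{\bSigmasn\inv}\,\specnorm{\bSigmasnn\inv}\,\fnorm{\bE}^{2}=\fnorm{\bSigmasn-\bSigmasnn}^{2}/\{\smin{\bSigmasn}\smin{\bSigmasnn}\}$. Finally, $\bSigmasnn=\bLambdann\bLambdann\trans+\bLambdann\bAsnn\bAsnn\trans\bLambdann\trans+\bDeltann\succeq\bDeltann\succeq\delmin^{2}\bI_{\dn}$ by \ref{ass4}, so $\smin{\bSigmasnn}\ge\delmin^{2}$; substituting and summing over $s$ gives the claimed bound.

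I expect the only genuine obstacle to be the matrix manipulation in the last paragraph: one must notice that the scalar sum telescopes into $\trace(\bE\bSigmasnn\inv\bE\bSigmasn\inv)$ and then split the two inverses (rather than crudely bounding the eigenvalues of $\bSigmasn\inv\bSigmasnn$) in order to land exactly on $\delmin^{2}\smin{\bSigmasn}$ in the denominator. Everything else --- the Gaussian KL formula, the scalar inequality $\log x\ge 1-x\inv$, the trace/operator-norm inequality, and the lower bound $\smin{\bSigmasnn}\ge\delmin^{2}$ --- is routine.
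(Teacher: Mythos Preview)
Your proof is correct and follows essentially the same route as the paper: both use additivity of KL over the independent studies, the closed-form Gaussian KL, the eigenvalue rewrite of the $s$-th summand, the scalar bound $\psi-1-\log\psi\le(\psi-1)^{2}/\psi$ from $\log x\ge 1-x^{-1}$, and then a matrix rearrangement that lands on $\fnorm{\bSigmasn-\bSigmasnn}^{2}/\{\smin{\bSigmasnn}\smin{\bSigmasn}\}$ before invoking $\smin{\bSigmasnn}\ge\delmin^{2}$. The only cosmetic difference is in the algebra of that last step: the paper rewrites $(\bH-\bI)^{2}\bH^{-1}$ using $\bSigmasn\bH\bSigmasn=\bSigmasn^{1/2}\bSigmasnn\bSigmasn^{1/2}$ and then bounds by $\specnorm{(\bSigmasn^{1/2}\bSigmasnn\bSigmasn^{1/2})^{-1}}$, whereas you cycle the trace directly to $\trace(\bE\bSigmasnn^{-1}\bE\bSigmasn^{-1})$ and peel off the two inverses via $\trace(\bA\bB)\le\specnorm{\bB}\trace(\bA)$; the resulting bound is identical.
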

\begin{proof}
Recall that $\Pn$ and $\Pnn$ denote the joint distributions of $\Dn$ 
under $\bThetan$ and the true value $\bThetann$, respectively. 
Let $f_{\pP}(\bY_{s,i})$ denote the marginal distribution of $\bY_{s,i}$ under a measure $\pP$. 
Since $f_{\Pn}(\bY_{s,i})\equiv \mn_{\dn}(\bzero, \bSigmasn) $ and
$f_{\Pnn}(\bY_{s,i})\equiv \mn_{\dn}(\bzero, \bSigmasnn) $ for all $i$, and the $\bY_{s,i}$s are independent, we have 
\begin{align}
\kl{\Pnn}{\Pn}&=\sum_{s=1}^{S} \sum_{i=1}^{\ns} \kl{f_{\Pnn}(\bY_{s,i})} {f_{\Pn}(\bY_{s,i})}\notag\\
&= \sum_{s=1}^{S} \frac{\ns}{2} \left\{\trace(\bSigmasn\inv\bSigmasnn -\bI_{\dn}) -\log \abs{\bSigmasn\inv\bSigmasnn}  \right\}.  \label{eq:kl_sum}
\end{align}

Let 
$\bH= \bSigmasn ^{-\half} \bSigmasnn \bSigmasn ^{-\half}$ and $\mathbb{KL}_{s}=-\log \abs{\bSigma_{sn}^{-1}\bSigma_{0sn}} +\trace(\bSigmasnn \bSigmasn^{-1}-\bI_{\dn})$. 
Letting $\psi_{1},\dots,\psi_{\dn}$ be the eigenvalues of $\bH$, we note that $\mathbb{KL}_{s}
=\sum_{j=1}^{\dn} \left\{(\psi_{j}-1)-\log\psi_{j} \right\}$.
Observe that for any $x>-1$, $\log(1+x)\geq \frac{x}{1+x}$;
additionally
$\psi_{j}>0$ for all $j=1,\dots,\dn$ since these are eigenvalues of the positive definite matrix $\bH$.
Hence $\log\psi_{j}\geq 1-\frac{1}{\psi_{j}}$ implying that $(\psi_{j}-1)-\log\psi_{j}\leq \frac{(\psi_{j}-1)^{2} }{\psi_{j} }$.
Therefore,
\begin{equation}
\textstyle{\mathbb{KL}_{s}\leq \sum_{j=1}^{\dn} \frac{(\psi_{j}-1)^{2} }{\psi_{j} }=\fnorm{(\bH-\bI_{\dn})^{2}\bH^{-1} }=\fnorm{(\bSigmasnn - \bSigmasn)^{2} \bSigmasn^{-1} \bH^{-1} \bSigmasn^{-1}}}.\label{eq:klbound01}
\end{equation}
Now $\bSigmasn \bH \bSigmasn= \bSigmasn^{\half} \bSigmasnn \bSigmasn^{\half}$.
Hence, from \eqref{eq:klbound01} and using Lemma \ref{lemma:pati} we arrive at
\begin{equation*}
\mathbb{KL}_{s}\leq \fnorm{\bSigmasn - \bSigmasnn}^{2} \specnorm{(\bSigmasn^{\half} \bSigmasnn \bSigmasn^{\half})^{-1} }\leq \frac{\fnorm{\bSigmasn - \bSigmasnn}^{2} } { \smin{\bSigmasnn} \smin{\bSigmasn} }\leq \frac{\fnorm{\bSigmasn - \bSigmasnn}^{2} } { \delmin^{2} \smin{\bSigmasn}}.
\end{equation*}
Combining the above display with \eqref{eq:kl_sum} we conclude the proof.
\end{proof}

\begin{lemma}
\label{sm lemma:kl_bound}
Assume that 
$\fnorm{\bLambdan-\tLambdann}<\frac{C\taun} {\qnn\sqrt{\cn}}$,
$\max_{s} \fnorm{\bAsn-\tAsnn}<\frac{C\taun}{\cn\sqrt{\qnn}}$, \\
$\fnorm{\bDeltan - \bDeltann} \leq C\taun$ and $\smin{\bDeltan}>\nu$ where $C=\nu\delmin^{2}/5$ and $0<\nu<\delmin^{2}$.
Then $2\kl{\Pnn}{\Pn}\leq n\taun^{2}$.
\end{lemma}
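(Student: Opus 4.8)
The plan is to derive the bound directly from Lemma~\ref{lemma:kl_bound}, which already supplies $2\kl{\Pnn}{\Pn}\le\sum_{s=1}^{S}\ns\fnorm{\bSigmasn-\bSigmasnn}^{2}/\{\delmin^{2}\smin{\bSigmasn}\}$. Since $\bSigmasn=\bLambdan(\bI_{\qn}+\bAsn\bAsn\trans)\bLambdan\trans+\bDeltan\succeq\bDeltan$, the hypothesis $\smin{\bDeltan}>\nu$ gives $\smin{\bSigmasn}>\nu$, so it suffices to show $\fnorm{\bSigmasn-\bSigmasnn}\le\delmin\sqrt{\nu}\,\taun$ for every $s$; summing over $s$ and using $\sum_{s}\ns=n$ then yields $2\kl{\Pnn}{\Pn}\le n\taun^{2}$. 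Thus the entire problem reduces to bounding a single Frobenius distance between the postulated and true study-$s$ marginal covariances.

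For that, I would rewrite both covariances in augmented form, $\bSigmasn=\bB_{sn}\bB_{sn}\trans+\bDeltan$ and $\bSigmasnn=\bB_{0sn}\bB_{0sn}\trans+\bDeltann$, where $\bB_{sn}:=\bLambdan[\bI_{\qn}\;\bAsn]$ and $\bB_{0sn}:=\tLambdann[\bI_{\qn}\;\tAsnn]$; the zero-padding built into $\tLambdann$ and $\tAsnn$ is precisely what makes $\bB_{0sn}$ the same size as $\bB_{sn}$ while keeping $\tLambdann\tLambdann\trans=\bLambdann\bLambdann\trans$ and $\tLambdann\tAsnn\tAsnn\trans\tLambdann\trans=\bLambdann\bAsnn\bAsnn\trans\bLambdann\trans$. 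Then $\bSigmasn-\bSigmasnn=(\bB_{sn}\bB_{sn}\trans-\bB_{0sn}\bB_{0sn}\trans)+(\bDeltan-\bDeltann)$; the diagonal part has Frobenius norm $\le C\taun$ by assumption, and for the other part the telescoping $\bB_{sn}\bB_{sn}\trans-\bB_{0sn}\bB_{0sn}\trans=\bB_{sn}(\bB_{sn}-\bB_{0sn})\trans+(\bB_{sn}-\bB_{0sn})\bB_{0sn}\trans$ together with Lemma~\ref{lemma:pati} gives $\fnorm{\bB_{sn}\bB_{sn}\trans-\bB_{0sn}\bB_{0sn}\trans}\le(\specnorm{\bB_{sn}}+\specnorm{\bB_{0sn}})\fnorm{\bB_{sn}-\bB_{0sn}}$.

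It remains to control the two factors. Writing $\bB_{sn}-\bB_{0sn}=[\bLambdan-\tLambdann\;\;\bLambdan\bAsn-\tLambdann\tAsnn]$ and splitting $\bLambdan\bAsn-\tLambdann\tAsnn=\bLambdan(\bAsn-\tAsnn)+(\bLambdan-\tLambdann)\tAsnn$, Lemma~\ref{lemma:pati}(i) bounds $\fnorm{\bB_{sn}-\bB_{0sn}}$ by a sum of $\fnorm{\bLambdan-\tLambdann}$, $\specnorm{\bLambdan}\fnorm{\bAsn-\tAsnn}$ and $\specnorm{\tAsnn}\fnorm{\bLambdan-\tLambdann}$. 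I would then substitute the three closeness hypotheses together with the spectral-norm bounds implied by \ref{ass3}--\ref{ass4} (and by $\taun=o(1)$, which follows from \ref{condd2}): $\specnorm{\bLambdan}\le\specnorm{\tLambdann}+\fnorm{\bLambdan-\tLambdann}=o(\sqrt{\cn/\qnn})$, $\specnorm{\tAsnn}=\specnorm{\bAsnn}=o(\sqrt{\qnn})$, $\specnorm{\bAsn}=o(\sqrt{\qnn})$, and hence $\specnorm{\bB_{sn}},\specnorm{\bB_{0sn}}\le\specnorm{\bLambdan}\sqrt{1+\specnorm{\bAsn}^{2}}=o(\sqrt{\cn})$. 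Plugging these in gives $\fnorm{\bB_{sn}-\bB_{0sn}}=O(C\taun/\sqrt{\cn})$, so multiplying by the $o(\sqrt{\cn})$ prefactor yields $\fnorm{\bB_{sn}\bB_{sn}\trans-\bB_{0sn}\bB_{0sn}\trans}=o(C\taun)$ and therefore $\fnorm{\bSigmasn-\bSigmasnn}\le C\taun(1+o(1))$.

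The one genuinely delicate point is the constant bookkeeping in the last step: one must check that the $o(1)$ slack from $\specnorm{\bB_{sn}}+\specnorm{\bB_{0sn}}=o(\sqrt{\cn})$ exactly cancels the $1/\sqrt{\cn}$ scaling coming out of the three closeness hypotheses, and that $\fnorm{\bSigmasn-\bSigmasnn}\le C\taun(1+o(1))$ indeed stays below $\delmin\sqrt{\nu}\,\taun$ for $n$ large; this is what forces an explicit small choice such as $C=\nu\delmin^{2}/5$, which is a small enough multiple of $\delmin\sqrt{\nu}$ (here $\nu<\delmin^{2}$ is used). Everything else is routine triangle-inequality and Lemma~\ref{lemma:pati} manipulation.
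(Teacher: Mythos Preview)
Your proposal is correct and takes essentially the same route as the paper: both invoke Lemma~\ref{lemma:kl_bound}, lower-bound $\smin{\bSigmasn}$ by $\smin{\bDeltan}>\nu$, and control $\fnorm{\bSigmasn-\bSigmasnn}$ via the same telescoping identity together with the spectral-norm bounds coming from \ref{ass3}--\ref{ass4}. The only difference is cosmetic: the paper separates $\bLambdan\bLambdan\trans$ and $\bLambdan\bAsn\bAsn\trans\bLambdan\trans$ and bounds each piece to arrive at $\fnorm{\bSigmasn-\bSigmasnn}\le 2C\taun+2C\taun+C\taun=5C\taun$, whereas you package both pieces into the single augmented loading $\bB_{sn}=\bLambdan[\bI_{\qn}\;\bAsn]$ before telescoping, yielding the same order bound and the same constant bookkeeping at the end.
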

\begin{proof}
Using Lemma \ref{lemma:kl_bound} $\kl{\Pnn}{\Pn} \leq  \sum_{s=1}^{S}  \frac{\ns\fnorm{\bSigmasn - \bSigmasnn}^{2} } {2\delmin^{2} \smin{\bSigmasn}}$.
Now $\fnorm{\bSigmasn - \bSigmasnn} \leq 
\fnorm{\bLambdan\bLambdan\trans -\bLambdann\bLambdann\trans}+
\fnorm{\bLambdasn\bLambdasn\trans -\bLambdasnn\bLambdasnn\trans}+ \fnorm{\bDeltann - \bDeltan}$ where $\bLambdasn:=\bLambdan\bAsn$ and $\bLambdasnn:=\bLambdann\bAsnn$.
Recalling the notations defined in the beginning of Section \ref{sec: sm proofs} we have
\begin{align}
&\fnorm{\bLambdasn\bLambdasn\trans -\bLambdasnn\bLambdasnn\trans} = \fnorm{\bLambdasn\bLambdasn\trans -\tLambdasnn\tLambdasnn\trans}\notag\\
=&
\fnorm{(\bLambdasn-\tLambdasnn) \bLambdasn\trans +\tLambdasnn(\bLambdasn-\tLambdasnn)\trans}\notag\\
\leq& \fnorm{\bLambdasn-\tLambdasnn} \times \left( \specnorm{\bLambdasn} +\specnorm{\tLambdasnn} \right). \label{eq:lam_bound}
\end{align}
Then using Lemma \ref{lemma:pati} 
\begin{align}
&\fnorm{\bLambdasn-\tLambdasnn} =\fnorm{\bLambdan\bAsn-\tLambdann\tAsnn}=
\fnorm{\bLambdan(\bAsn-\tAsnn)+(\bLambdan-\tLambdann)\tAsnn}\notag\\
\leq& \specnorm{\bLambdan} \fnorm{\bAsn-\tAsnn}+\fnorm{\bLambdan-\tLambdann} \specnorm{\tAsnn}.
\label{eq:lam_bound2}
\end{align}
Note that $\fnorm{\bLambdan-\tLambdann}<\frac{C\taun}{\qnn\sqrt{\cn} }\Rightarrow \specnorm{\bLambdan}\leq \fnorm{\bLambdan}\leq \fnorm{\tLambdann} + \frac{C\taun}{\qnn\sqrt{\cn} }<\sqrt{\cn}$.
The last display holds since $\ref{ass3}\Rightarrow \specnorm{\bLambdann}=o(\sqrt{\cn/\qnn}) \Rightarrow \fnorm{\bLambdann}=o(\sqrt{\cn})$.
Combining the condition $\max_{1\leq s \leq S} \fnorm{\bAsn-\tAsnn}\leq \frac{C\taun}{\cn\sqrt{\qnn} }$ and \eqref{eq:lam_bound2} we get
\begin{equation}
\max_{1\leq s \leq S}\fnorm{\bLambdasn-\tLambdasnn} 
\leq \sqrt{\cn}\times \frac{C\taun}{\cn\sqrt{\qnn}}+ \frac{C\taun}{\qnn\sqrt{\cn}}\times \sqrt{\qnn}<  \frac{2C\taun}{\sqrt{\cn\qnn}}.\label{eq:eqqq} 
\end{equation}

Now $\fnorm{\bLambdasn-\tLambdasnn} <  \frac{C\taun}{\sqrt{\cn\qnn}} \Rightarrow \specnorm{\bLambdasn}\leq \fnorm{\bLambdasn}< \fnorm{\tLambdasnn}+\frac{C\taun}{\sqrt{\cn\qnn}}$.
Note that $\fnorm{\tLambdasnn}\leq \sqrt{\qsnn}\times \specnorm{\tLambdasnn}\leq \sqrt{\qsnn}\times \specnorm{\bLambdann}\specnorm{\bAsnn}< \sqrt{\cn\qnn}$ implying that
$\specnorm{\bLambdasn} +\specnorm{\tLambdasnn}< 2\sqrt{\cn\qnn}$.
Combining the last display with \eqref{eq:lam_bound} and \eqref{eq:eqqq} we obtain 
$\fnorm{\bLambdasn\bLambdasn\trans -\bLambdasnn\bLambdasnn\trans}\leq 2C\taun$.
Similarly $\fnorm{\bLambdan\bLambdan\trans -\bLambdann\bLambdann\trans}\leq \fnorm{\bLambdan -\tLambdann} (\fnorm{\bLambdan} +\fnorm{\tLambdann} )\leq 2C\taun$.
Using the additional condition $\fnorm{\bDeltan - \bDeltann} \leq C\taun$  and  $\smin{\bDeltan}>\nu$, we get $\fnorm{\bSigmasn - \bSigmasnn}  \leq 5C\taun$.
Furthermore $\smin{\bSigmasn}\geq \smin{\bDeltan}$.
Hence, $2\kl{\Pnn}{\Pn}\leq \sum_{s=1}^{S}  \frac{\ns\fnorm{\bSigmasn - \bSigmasnn}^{2} } { \delmin^{2} \smin{\bSigmasn}}  \leq n\taun^{2}$.
\end{proof}

\begin{lemma}
\label{sm lemma:testfn_aux}
Recall the notations in Theorem \ref{th:testfn}.	
If 
$\bThetan \in G_{j,n}$, then $\frac{\specnorm{\bSigman-\bSigmann}  - \epsilonn\cn^{2}}  {\cn^{2} \left(\specnorm{\bSigman-\bSigmann}+\cn\right) \left(1+ \specnorm{\bAsn}^{2} \right) }> \left( \wt{C}\sqrt{\tilqsnn}+ j\sqrt{n}\taun \right)\frac{\cn^{2} }{\sqrt{\ns}} $ for some absolute constant $\wt{C}>0$.
\end{lemma}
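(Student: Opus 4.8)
The plan is to bound the numerator of the displayed fraction from below and its denominator from above, exploiting the two constraints encoded in $\bThetan\in G_{j,n}$: the annulus condition pins down $\specnorm{\bSigman-\bSigmann}$, while the inclusion $G_{j,n}\subseteq\Pnon$ caps $\specnorm{\bAsn}$. Throughout I would argue up to absolute constants, fixing $\wt{C}$ only at the end.

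First I would unpack the annulus. By the definition of $\en$ in Lemma~\ref{th:ghosal} and $j\taun\le\en(\bThetan,\bThetann)\le 2j\taun$ on $G_{j,n}$,
\[
j\,\cn^{3}n\taun^{3}\;\le\;\specnorm{\bSigman-\bSigmann}-\cn^{2}\epsilonn\;\le\;2j\,\cn^{3}n\taun^{3},
\]
so the left inequality gives the desired lower bound on the numerator. For the denominator I would combine the right inequality with $\epsilonn\asymp\cn n\taun^{3}$ (hence $\cn^{2}\epsilonn\asymp\cn^{3}n\taun^{3}$) and the identity $\cn^{3}n\taun^{3}=\cn^{6}\{\sn\qnn\log(\dn\qn)\}^{3/2}/\sqrt{n}$ to get $\specnorm{\bSigman-\bSigmann}\precsim(1+j)\cn^{3}n\taun^{3}$; then \ref{condd2} forces $\cn^{3}n\taun^{3}=o(1)$ and the stated range $j=o(\sqrt{n}\,\cn^{-5}\{\sn\qnn\log(\dn\qn)\}^{-3/2})$ from Theorem~\ref{th:testfn} forces $j\,\cn^{3}n\taun^{3}=o(\cn)$, whence $\specnorm{\bSigman-\bSigmann}=o(\cn)$ and therefore $\specnorm{\bSigman-\bSigmann}+\cn\le 2\cn$ for all large $n$.

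Next, on $\Pnon$ we have $\specnorm{\bAsn}^{2}\le Cn\taun^{2}$ with $C$ the fixed constant defining $\Pnon$, so $1+\specnorm{\bAsn}^{2}\le(1+C)n\taun^{2}$ eventually. Substituting the three bounds and recalling $n\taun^{2}=\cn^{2}\sn\qnn\log(\dn\qn)$,
\[
\frac{\specnorm{\bSigman-\bSigmann}-\cn^{2}\epsilonn}{\cn^{2}(\specnorm{\bSigman-\bSigmann}+\cn)(1+\specnorm{\bAsn}^{2})}\;\ge\;\frac{j\,\cn^{3}n\taun^{3}}{\cn^{2}\cdot2\cn\cdot(1+C)n\taun^{2}}\;=\;\frac{j\taun}{2(1+C)}.
\]
For the right-hand side of the claim, \ref{ass1} gives $n\precsim\ns$, so $\ns^{-1/2}\precsim n^{-1/2}$; and \ref{ass2} (using $\sn\ge1$ and $\log(\dn\qn)\ge1$ for large $n$) gives $\tilqsnn=\qnn+\qsnn\le 2\qnn\le 2\sn\qnn\le 2n\taun^{2}/\cn^{2}$, hence $\sqrt{\tilqsnn}/\sqrt{n}\le\sqrt{2}\,\taun/\cn\le\sqrt{2}\,\taun$. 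Thus, for $j\ge1$, the quantity $(\wt{C}\sqrt{\tilqsnn}+j\sqrt{n}\taun)\ns^{-1/2}$ is at most a fixed multiple of $j\taun$, and choosing $\wt{C}$ a sufficiently small absolute constant and $n$ large makes $j\taun/\{2(1+C)\}$ strictly exceed it, which is the assertion.

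The only genuinely delicate part is the bookkeeping of absolute constants, hand in hand with the asymptotics that collapse $\specnorm{\bSigman-\bSigmann}+\cn$ to order $\cn$: one has to confirm that $\cn^{3}n\taun^{3}=o(1)$ and $j\,\cn^{3}n\taun^{3}=o(\cn)$ indeed follow from \ref{ass5}, \ref{condd2}, \ref{condd1} and the range of $j$, and that the constant $C$ in $\Pnon$, the constant from \ref{ass1}, and the target $\wt{C}$ can be reconciled; the natural route for the latter is to fix $\wt{C}$ after $C$ and to absorb the lower-order $\wt{C}\sqrt{\tilqsnn}$ term into the dominant $j\sqrt{n}\taun$ term using $\cn\to\infty$. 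Everything else is direct substitution of bounds already available from $G_{j,n}$ and $\Pnon$.
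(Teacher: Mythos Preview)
Your separate numerator/denominator bounding is clean but the constants do not close, and the step you flag as ``delicate'' is in fact a genuine gap. From the annulus and $\Pnon$ you correctly get
\[
\text{LHS}\;\ge\;\frac{j\,\cn^{3}n\taun^{3}}{\cn^{2}\cdot 2\cn\cdot (1+C)n\taun^{2}}\;=\;\frac{j\taun}{2(1+C)}.
\]
But your proposed fix, choosing $\wt{C}$ small, only controls the $\wt{C}\sqrt{\tilqsnn}$ summand on the right-hand side. The other summand satisfies $j\sqrt{n}\taun/\sqrt{\ns}\ge j\taun$ because $n\ge\ns$, so the right-hand side is already at least $j\taun$, which strictly exceeds $j\taun/\{2(1+C)\}$ for every $C>0$. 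Taking $n$ large does not help: both sides scale like $j\taun$ with fixed constants that are ordered the wrong way. Hence the inequality cannot be concluded from your bounds.

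The paper proceeds differently: with $x=\specnorm{\bSigman-\bSigmann}$, $a=\epsilonn\cn^{2}$, $\omega=Cn\taun^{2}$ and $R=(\wt{C}\sqrt{\tilqsnn}+j\sqrt{n}\taun)/\sqrt{\ns}$, it does \emph{not} bound $x+\cn$ separately but rewrites the target as the linear inequality $(1-R\omega\cn^{2})x>a+R\omega\cn^{3}$, shows $R\omega\cn^{2}=o(1)$ from the range of $j$ and \ref{condd2}, and then compares the resulting threshold $\frac{a+R\omega\cn^{3}}{1-R\omega\cn^{2}}$ directly with the annulus lower bound $x-a\ge j\cn^{3}n\taun^{3}$. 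In fairness, this route also has to reconcile the constant $C$ from $\Pnon$ with $\sqrt{\ns/n}$, and the paper glosses over that; the honest source of slack is not $\wt{C}$ but the freedom to rescale the semimetric $\en$ (equivalently, the constant $M$ in Lemma~\ref{th:ghosal}) or the coefficient on $j\sqrt{n}\taun$ in the test threshold, which both arguments implicitly rely on.
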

\begin{proof}	
For all $\bThetan \in\Pnon$, $1+ \specnorm{\bAsn}^{2}\leq C n\taun^{2}$ for some $C>0$ for all $s=1,\dots,S$.
For brevity of notations, we define 
$\omega=Cn \taun^{2} $,
$x=\specnorm{\bSigman-\bSigmann}$ and $a=\epsilonn\cn^{2}$.
Hence observe that for $\bThetan \in G_{j,n}$
\begin{align}
&\frac{\specnorm{\bSigman-\bSigmann}  - \epsilonn\cn^{2}}  {\cn^{2} {\left(\specnorm{\bSigman-\bSigmann}+\cn\right) \left(1+ \specnorm{\bAsn}^{2} \right) }}> \left( \wt{C}\sqrt{\tilqsnn}+ j\sqrt{n}\taun \right)\frac{1 }{\sqrt{\ns}} \label{sm eq:rhs_bdd}\\
\Leftrightarrow& \frac{x-a}{(x+\cn)w}>\left( \wt{C}\sqrt{\tilqsnn}+ j\sqrt{n}\taun \right) \frac{\cn^{2} }{\sqrt{\ns}} \Leftarrow x> \frac{a+\omega\left( \wt{C}\sqrt{\tilqsnn}+ j\sqrt{n}\taun \right)\frac{\cn^{3} }{\sqrt{\ns}}} {1-\omega\left( \wt{C}\sqrt{\tilqsnn}+ j\sqrt{n}\taun \right)\frac{\cn^{2} }{\sqrt{\ns}} }.\notag
\end{align}
The specifics on the postulated model from Section \ref{sec:theory} 
imply
$\tilqsnn\leq \qn+\qsn=o(n\taun^{2}) $.
As $j=o\left[ \frac{\sqrt{n}}{\cn^{5}} \left\{\sn\qnn\log(\dn\qn)\right\}^{-\frac{3}{2}}\right]$,
$ \omega\left( \wt{C}\sqrt{\tilqsnn}+ j\sqrt{n}\taun \right)\frac{\cn^{2} }{\sqrt{\ns}}\precsim \frac{\cn^{5}}{\sqrt{n}} { \left\{\sn\qnn\log(\dn\qn)\right\}^{\frac{3}{2}}} =o(1)$ which can be seen from \ref{condd2}.   
Therefore, $\frac{a+\omega\left( \wt{C}\sqrt{\tilqsnn}+ j\sqrt{n}\taun \right)\frac{\cn^{3} }{\sqrt{\ns}}} {1-\omega\left( \wt{C}\sqrt{\tilqsnn}+ j\sqrt{n}\taun \right)\frac{\cn^{2} }{\sqrt{\ns}} }<a+ j\times C\cn^{3} \sqrt{ \frac{(n\taun^{2})^{3} } {\ns}}$ for some absolute constant $C>0$ and large enough $n$.
Hence, \eqref{sm eq:rhs_bdd} holds if $x>a+ j\times C\cn^{3} \sqrt{ \frac{(n\taun^{2})^{3} } {\ns}}$
or equivalently if $j\taun \leq \en(\bThetan,\bThetann)$, establishing the result.
\end{proof}

\newpage  

\section{Hamiltonian Monte Carlo Algorithm}
\label{subsec:hmc_brief}
Let $\Pi(\bTheta)$ be the target density where $\Pi(\cdot)$ is differentiable with respect to $\bTheta \in \rR^{D}$.
In HMC,  a dynamical system is considered in which auxiliary ``momentum" variables $\bp\in \rR^{D}$ are introduced
and the uncertain parameters $\bTheta$ in the target distribution are treated as the variables for the displacement. 
The total energy (Hamiltonian function) of the dynamical system is defined by $H(\bTheta, \bp)= V(\bTheta) + K(\bp)/2$, where its potential energy $V(\bTheta) = -\log \Pi(\bTheta)$ and its kinetic energy $K(\bp)=\bp\trans \bM^{-1} \bp$ depends only on $\bp$ and some chosen positive definite ``mass" matrix $\bM \in \rR^{D \times D}$.
The Hamiltonian dynamics will preserve the distribution $e^{-H(\bTheta, \bp)}$ and the invariant distribution will have $\Pi(\cdot)$ to be the marginal distribution of $\bTheta$.
Using Hamilton's equations, the evolution of $\bTheta$, $\bp$ through ``time'' $t$ is given by 
\vskip-2ex
\begin{equation}
\dv{ \bp} { t}= -\pdv{ H} { \bTheta}= -\nabla V(\bTheta), \qquad 
\dv{ \bTheta} { t}= -\pdv{ H} { \bp}= -\bM^{-1} \bp.
\label{eq:hailtonial_eq}
\end{equation}
\vskip-1.5ex
If we start with $\bTheta(0)$ and draw a sample $\bp(0)$ from $\mn(\bzero,\bM)$,
the final values $\bTheta(t)$, $\bp(t)$ will provide an independent sample $\bTheta(t)$ from $\Pi$. 
The leapfrog algorithm \citeplatex{DUANE_hmc} is popularly used to approximately solve the differential equations in \eqref{eq:hailtonial_eq}.
For time step $\delta t$, we have
\vskip-7ex
\begin{align}
&\bTheta(t+ \delta t)= 	\bTheta(t) + \delta t \bM\inv \left[ \bp(t)-\frac{\delta t}{2} \nabla V\left\{\bTheta(t)\right\} \right],\label{eq:leap1}\\
&\bp(t+ \delta t)= 	\bp(t) - \frac{\delta t }{2}  \left[  \nabla V\left\{\bTheta(t)\right\} + \nabla V\left\{\bTheta(t + \delta t)\right\}\right].\label{eq:leap2}
\end{align}
\vskip-2ex
\noindent The complete HMC sampler is summarized as follows for some $\bM$, $\delta t$, and $L$ leapfrog steps.

\begin{algorithm}[!h]
\footnotesize
\caption{Hamiltonian Monte Carlo}
\label{algo:generic_hmc}

\SetAlgoLined
\DontPrintSemicolon
Initialize $\bTheta$ and simulate $\bp(0)\sim \mn(\bzero, \bM)$.\;
\For{$i=1,\dots,N$}{In iteration $i$, let the most recent sample be $(\bTheta_{i-1} ,\bp_{i-1})$, then
do the following to simulate a new sample $(\bTheta_{i} ,\bp_{i})$:\;
\label{hmcstep}
\begin{enumerate}[label=({\roman*}),leftmargin=10pt]
\item Randomly draw a new momentum vector $\bp'$ from $\mn(\bzero, \bM)$.
\item Initiate the leapfrog algorithm with $\{\bTheta(0), \bp(0)\}=(\bTheta_{i-1},\bp')$ and run the algorithm \eqref{eq:leap1}-\eqref{eq:leap2}\\ for $L$ time steps to obtain a new candidate sample $(\bTheta'', \bp'')=(\bTheta(t+ L\delta t),\bp(t+ L\delta t))$.
\item Set $(\bTheta_{i} ,\bp_{i})=(\bTheta'', \bp'')$ with probability $\min \left[1, e^{-\{ H\left(\bTheta'', \bp''\right)- H\left(\bTheta_{i-1},\bp'\right)\}} \right]$.
\end{enumerate}\vskip-3ex}
\end{algorithm}
\noindent Thus we obtain the MCMC samples $\bTheta_{1},\dots, \bTheta_{N}$.

\subsection{Choice of Tuning Parameters}
\label{sm subsec:HMC_tuning}
Practical performances of HMC samplers can be sensitive to the choice of hyperparameters $\delta t$, $L$ and $\bM$.
To reduce sensitivity on particular choices of the hyperparameters, in each MCMC iteration we randomly sample $\delta t \sim \Unif(0, 0.01)$, $L \sim \Poisson (5)$ truncated in the range $[1,10]$ and following usual practices set $\bM$ to be the identity matrix.

\newpage

\section{Extended Simulation Studies}
\label{sm sec:simstudy}
\subsection{True Simulation Settings}
\label{sm subsec:sim_truths}

\paragraph*{Details on shared $\bLambda$:}
\begin{description}[leftmargin=7pt,topsep=3pt, partopsep=0pt,itemsep=2pt]
\item[Scenario 1:] We fill a consecutive 25\% of the elements in each column of $\bLambda$ with independent samples from $\Unif(-2,2)$ and set the rest as zero.
We choose the starting position of the consecutive non-zero elements randomly.    
To avoid having rows with all zero elements, we fill a randomly chosen 5 elements from each null row with iid samples from $\Unif(-2,2)$.
A typical example for a $200\times 20$ $\bLambda$ is shown in the leftmost panel of Figure \ref{sm fig:typical_lambda}.

\item[Scenario 2:] We separately fill the first and second consecutive $d/2$ values in each column of $\bLambda$.
For each half we follow the same strategy as scenario 1 and eliminate having entirely null rows using the aforementioned scheme.    
A typical example for a $200\times 20$ $\bLambda$ is shown in the middle panel of Figure \ref{sm fig:typical_lambda}.
From the correlation structures shown under the ``\textit{True}" panel of Figure \ref{fig:cormats} in the main paper, it can be seen that the above two scenarios induce very different correlation structures in the marginal distribution. 

\item[Scenario 3:] We randomly choose and fill 25\% of elements in each column of $\bLambda$ with independent samples from $\Unif(-2,2)$ and set the rest as zero.
We eliminate null rows using the same strategy of scenario 1.
A typical example for a $200\times 20$ $\bLambda$ is shown in the rightmost panel of Figure \ref{sm fig:typical_lambda}.
\end{description}
\begin{figure}[h]
\centering
\includegraphics[width=\linewidth]{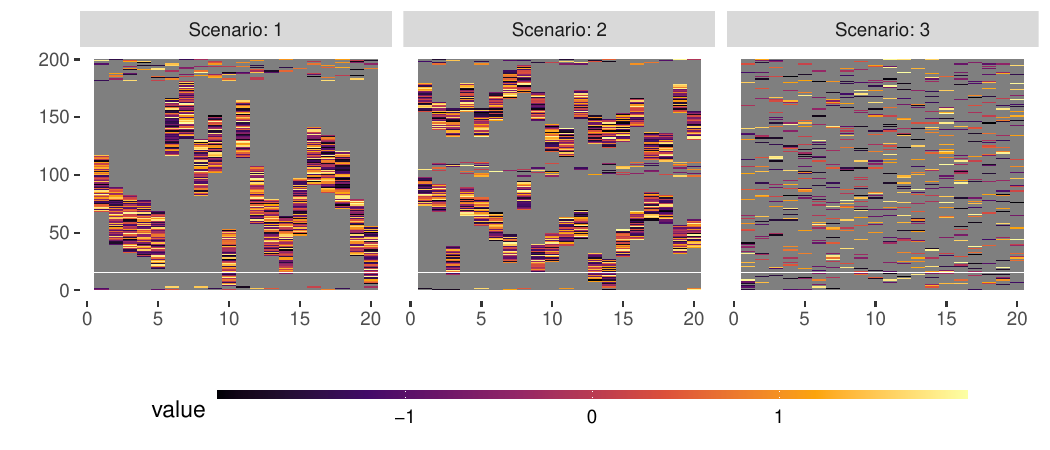}
\vskip-1ex
\caption{Simulation truths of $\bLambda$ across different scenarios under consideration.
The gray color encodes exact 0 values.}
\label{sm fig:typical_lambda}
\end{figure}

\vspace*{-1ex}
\paragraph*{Details on study-specific loadings:} 
\begin{description}[leftmargin=7.5pt,itemsep=-2pt]
\item[Slight misspecification:] We set the study-specific loading $\bPhi_{s}=\bLambda\bAs+\bE_{s}$ where we generate $\bAs=((a_{s,j,h}))$ as $a_{s,j,h}\simiid \mn(0,0.25^{2})$ and $\bE_{s}=((e_{s,j,h}))$ as $e_{s,j,h}\simiid \mn(0, 0.10^{2})$.
{Note that the SUFA model in \eqref{eq:ourmodel} assumes that study-specific loading matrices exactly admit the form $\bLambda\bAs$, i.e., they are in $\cspace{\bLambda}$.
Since we \textit{slightly} deviate from that assumption in this simulation setting, we refer to this as the  ``slight misspecification" case.}

\item[Complete misspecification:] 
{We generate $\bPhi_{s}$'s from the null-space of $\bLambda$ for all $s$ using the \texttt{MASS} package in \texttt{R}.
We additionally take $\bPhi_{s}\trans \bPhi_{s'}=\bzero$ for all $s\neq s'$ to ensure that the column spaces of study-specific loading matrices are non-overlapping.
We refer to this as the  ``complete misspecification" case since $\cspace{\bLambda} \cap \cspace{\bPhi_{s}}=\phi$ for all $s$.}
\end{description}

Note that the shared and study-specific signals are characterized by the magnitudes of the elements in the respective factor loading matrices. 
To investigate the signal ratios in shared versus study-specific covariance structures in the simulation truths, 
for each loading matrix we compute the \textit{mean of squares} for each column separately and
compare the medians of these mean squares across shared and study-specific
loading matrices. 
In Figure \ref{sm fig:colmeans_loadings} we plot the histogram
of these medians across independent replicates.
The figure indicates that the elements in the shared and study-specific loading matrices are comparable in magnitude.
This implies that despite having similarities via the shared loading matrix $\bLambda$, the marginal covariance structures across studies have substantial differences.
\begin{figure}[H]
    \centering
    \includegraphics[width=\linewidth]{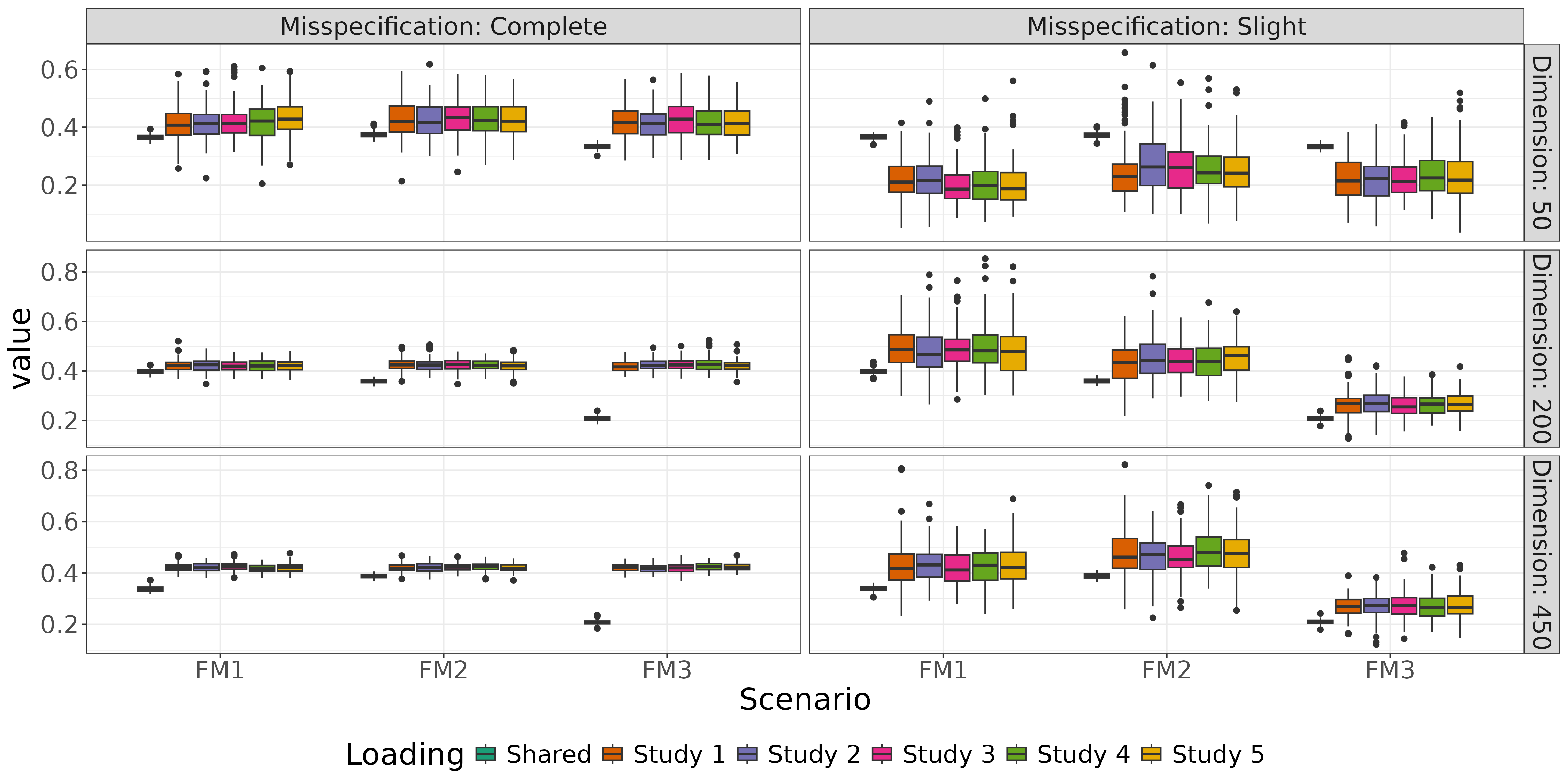}
    \caption{Investigation of the signal-to-noise ratios in shared versus study-specific loadings in the simulation truths}
    \label{sm fig:colmeans_loadings}
\end{figure}

\paragraph*{Details on idiosyncratic variance $\bDelta$:} We set the diagonals of $\bDelta$ as $0.50$.

\subsection{Recovering Shared Loading Matrix}
\label{sm sec:additional_sim}
Completing the discussion in Section \ref{sec:simstudies} in the main paper, here we include details on an extended simulation study showcasing recovery of the shared factor loading matrix $\bLambda$.
We apply the post-processing strategy discussed in Section \ref{subsec:rot_ambig} on the MCMC samples to align them with respect to a common orthogonal rotation and then use the mean of the post-processed MCMC samples as a point estimate.
The uncertainty quantification via the posterior samples guide inference on whether entries of $\bLambda=((\lambda_{j,h}))$ are $0$ or not;
following common practice in Bayesian inferences on sparsity patterns in matrices \citeplatex{KsheeraSagar2021precision}, we set $\lambda_{j,h}=0$ if its 95\% posterior credible interval includes 0.
\begin{figure}[H]
\centering
\includegraphics[width=.9\linewidth]{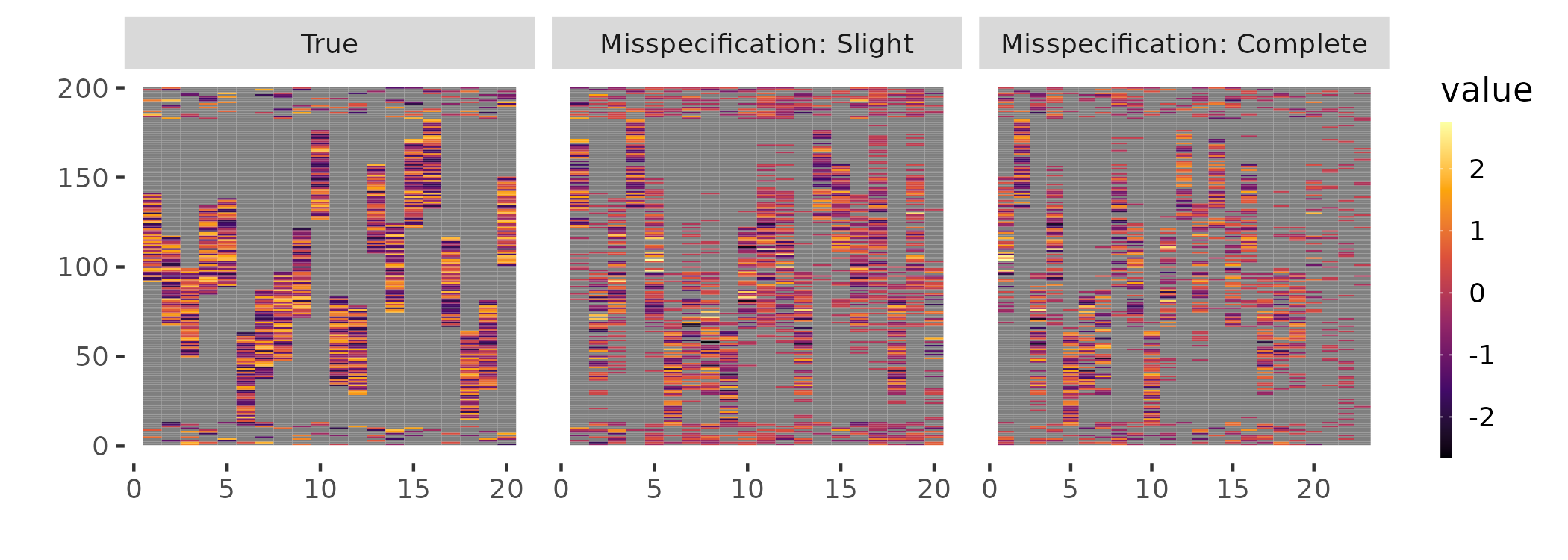}
\vskip-3.25ex
\includegraphics[width=.9\linewidth]{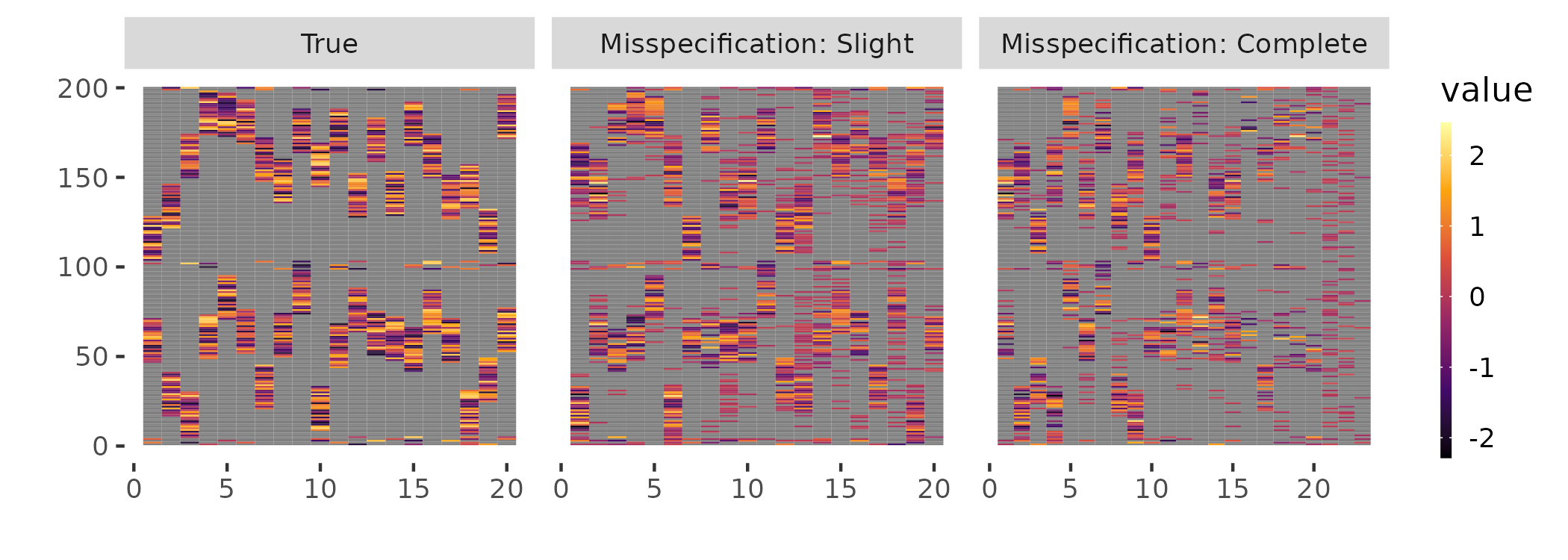}
\vskip-3.25ex
\includegraphics[width=.9\linewidth]{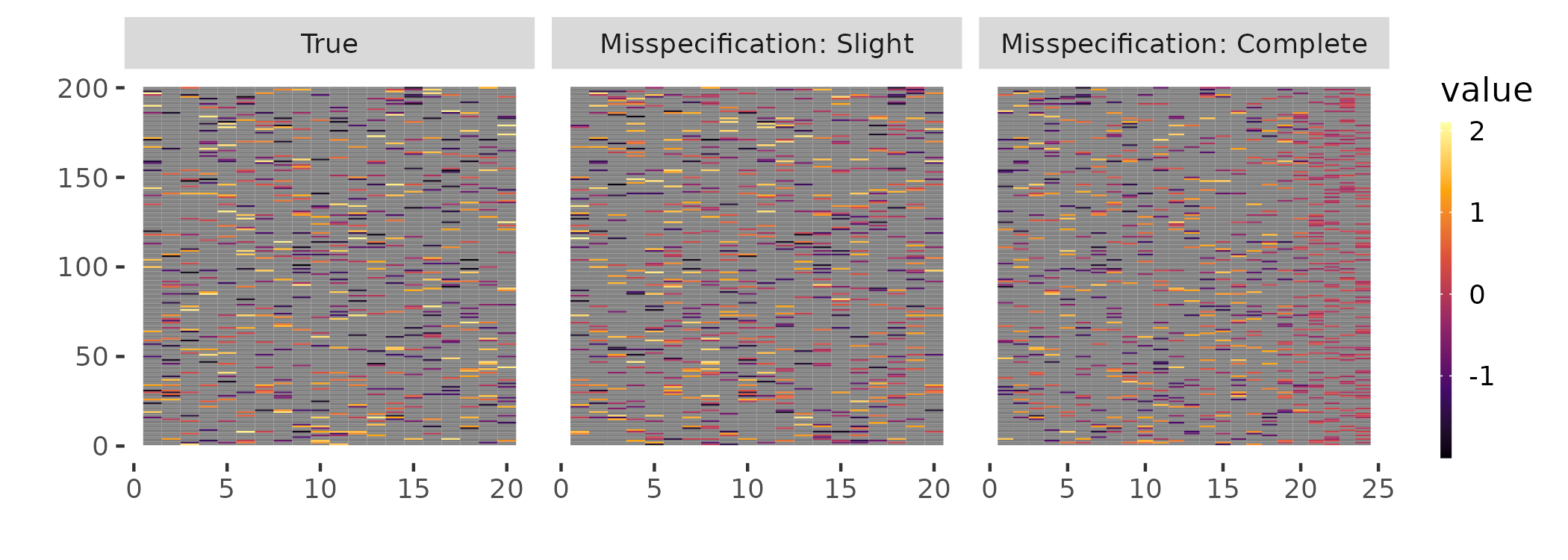}
\vskip-3.75ex
\caption{The true and estimated shared factor loading matrices: The rows correspond to the three different scenarios, viz., FM1, FM2 and FM3 under consideration.
In each row, the true factor loading matrix is shown in the leftmost plot; the middle and right plots are the point estimates in the slight and complete misspecified cases, respectively.}
\label{sm fig:loadmats}
\end{figure}

{Figure \ref{sm fig:loadmats} plots the shared factor loading matrices $\bLambda$ recovered by SUFA. 
In each setting, as a representative of the 100 independent replicates, we choose the one with the median Frobenius norm from the truth with respect to the shared covariance matrix $\bSigma=\bLambda\bLambda\trans+\bDelta$.
We use the approach elaborated in Section \ref{subsec:rot_ambig} of the main paper on the MCMC samples of $\bLambda$ to obtain a point estimate.
For sake of brevity, we only show results for $d=200$ in Figure \ref{sm fig:loadmats}.
The heatplots under the ``\textit{True}" panel of Figure \ref{sm fig:loadmats} show the simulation truths of the three different correlation structures;  
the ``\textit{Slight}" and ``\textit{Complete}" \textit{misspecification} panels show the recovered $\bLambda$ by the SUFA model for the two model misspecification types under consideration.}


Although the post-processing scheme aligns the MCMC samples of $\bLambda$ with respect to a common orthogonal rotation, {from Figure \ref{sm fig:loadmats}} it is not immediately obvious whether the estimate $\wh{\bLambda}$ is well-aligned with the simulation truth.
This can be better assessed by further regressing the columns of the true $\bLambda$ on $\wh{\bLambda}$. The accuracy can then be quantified in terms of the coefficient of determination $R^{2}$; see Figure \ref{fig:adj_R} in the main text. 

\subsection{Recovering Study-specific Loading Matrices}
\label{sm subsec:additional_sim_studyspecific}
In this section we study how well the study-specific loading matrices are recovered by the different methods.
As BFR and PFA do not have study-specific factor components we compare B-MSFA and SUFA only.
From the posterior MCMC samples of the study-specific loading matrices we obtain a point estimate in the same manner described in Section \ref{sm sec:additional_sim} and
use the coefficient of determination as the performance-metric.

\begin{figure}[H]
\includegraphics[trim={0cm .1cm .1cm 0cm}, clip,width=\linewidth] {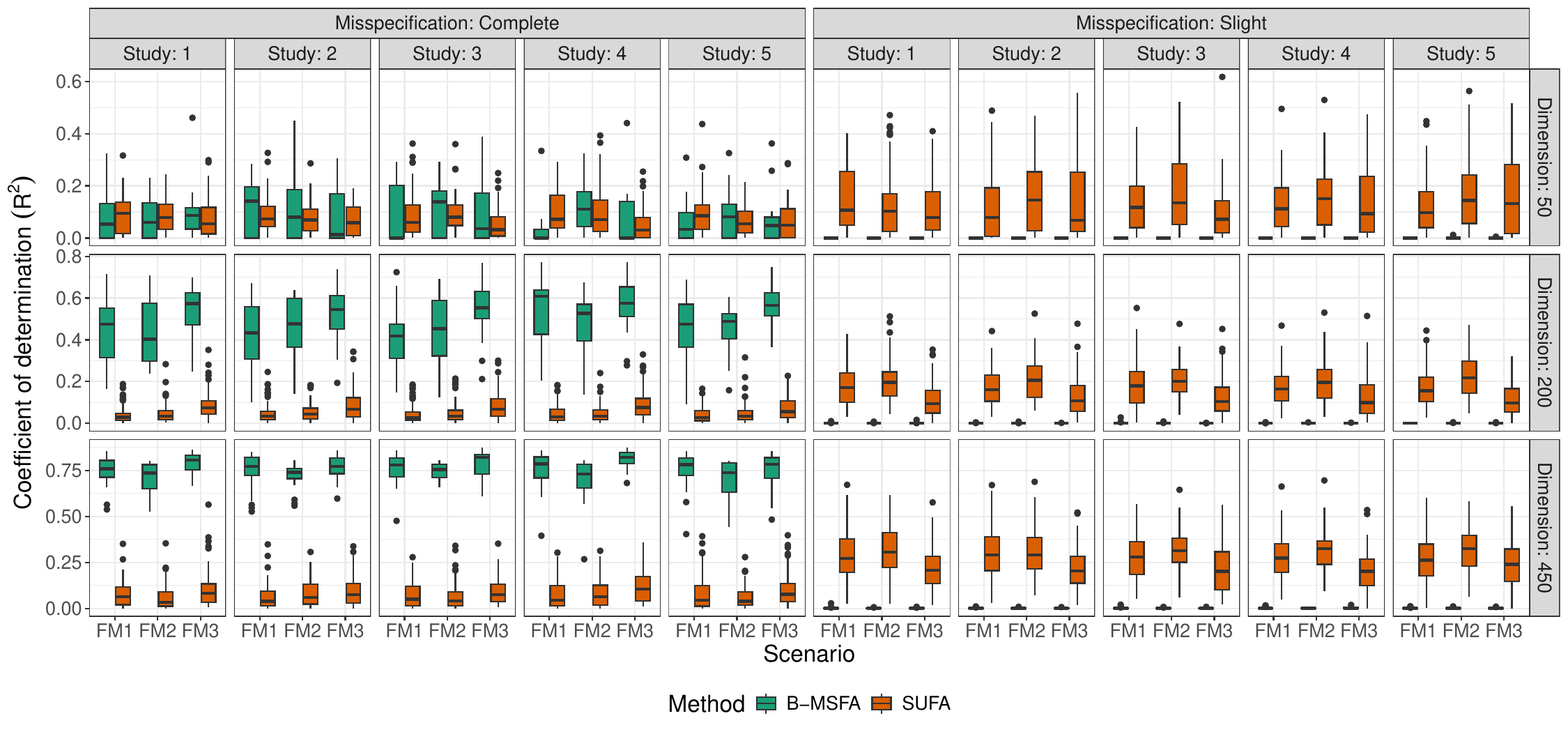}
\vskip-1ex
\caption{Coefficient of determination $R^{2}$ from regressing the columns of the true study-specific loading matrices on their estimated counterparts across all simulations settings for B-MSFA and SUFA.}
\label{fig:adj_R_studyspecific}
\end{figure}
B-MSFA recovers the study-specific loading matrices uniformly better than SUFA in the complete misspecification case.
This is expected as the simulation truths of the study-specific loading matrices are from the null space of $\bLambda$, see Section \ref{sm subsec:sim_truths} for details.
This is effectively a worst case for our proposed methodology as
SUFA assumes that the study-specific loadings are in the column space of the shared loading matrix $\bLambda$. 
Hence, SUFA can not recover the study-specific loadings in this setup.

On the contrary, there are overlaps between the column spaces of $\bLambda$ and the study-specific loading matrices in the slight misspecification case.
Performance of SUFA is more favorable in this case as B-MSFA is susceptible to the information switching issue.

\subsection{Scalability Analysis}
\label{sm subsec:scalability_sims}
In this section we study the scalability of the SUFA model with increasing sample sizes.
We generate data from the ``complete misspecification" case  in ``scenario 1", see Section \ref{sm subsec:sim_truths} for details.
We vary $d=50, 200$ and $450$.
For each $d$ we set $\ns=\Poisson(m\times \frac{d}{S})$, 
for each $s=1,\dots,S$ and repeat the experiments for $m\in \{1,10,25\}$.
We generate $\qs\sim \Poisson(\frac{q}{S})$ where $q=10$ for the $d=50$ and $q=20$ otherwise.
We fit SUFA using the HMC-within-Gibbs sampler in Algorithm \ref{algo:hmc_msfa} and report the boxplots of execution times (in minutes) across independent replicates in Figure \ref{sm fig:scalability_sims}.
We used a system with $13\th$ Gen Intel(R) Core(TM) i9-13900K CPU and 128GB RAM.
\begin{figure}[H]
\centering
\includegraphics[width=\linewidth]{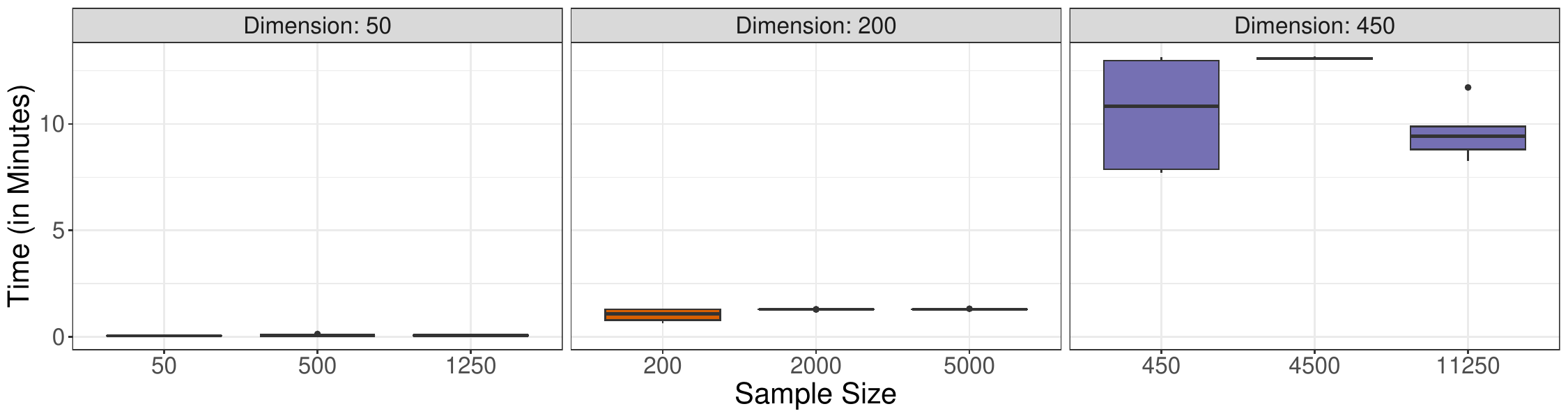}
\caption{Execution times (in minutes) of SUFA implementation with increasing sample sizes and dimensions.} 
\label{sm fig:scalability_sims}
\end{figure}
As expected, we see that the execution time is increasing with dimension. 
However, it remains stable for increasing sample sizes for fixed dimensions.
This is because our proposed sampler uses the sufficient statistics calculated before the MCMC iterations and becomes free of the sample size.
This indicates that our proposed approach is scalable to massive sample sizes.

\subsection{\changen{Varying Idiosyncratic Variances across Studies}}
\label{sm subsec:varying_resid}
Our proposed SUFA model in equation \eqref{eq:ourmodel_marginal} assumes fixed idiosyncratic variance $\bDelta=\diag(\delta_{1}^{2},\dots, \delta_{d}^{2})$ across studies unlike several approaches in the literature that allow the idiosyncratic variance component to vary across studies as well \citeplatex{devito2019multi, devito2018bayesian, alejandra2022, grabski2020bayesian}.
In this section, we study the effect of model misspecification in our SUFA model where idiosyncratic variances vary across studies.
We thus generate $\bY_{s,i}$'s from the MSFA model \eqref{eq:MSFA_model} as described in the main paper.
First, we sample $\delta_{s,j}^{2}\simiid \Unif(0.2, U_{\delta})$ where $\bDelta_{s}=\diag(\delta_{s,1}^{2},\dots,\delta_{s,d}^{2})$ is the diagonal matrix of idiosyncratic variances in study $s$.
We repeat the experiment for $U_{\delta} \in \{0.8, 1.5, 3, 5\}$.
Notably a higher value of $U_{\delta}$ implies greater misspecification from the SUFA model.
The $\bPhi_{s}$'s sampled from the null space of $\cspace{\bLambda}$ as described in the \textit{Complete misspecification} part in Section \ref{sm subsec:sim_truths}
for all $s=1,\dots,S$.
We do these simulation studies for $d=50$.
\begin{figure}[h]
    \centering
    \includegraphics[width=\linewidth]{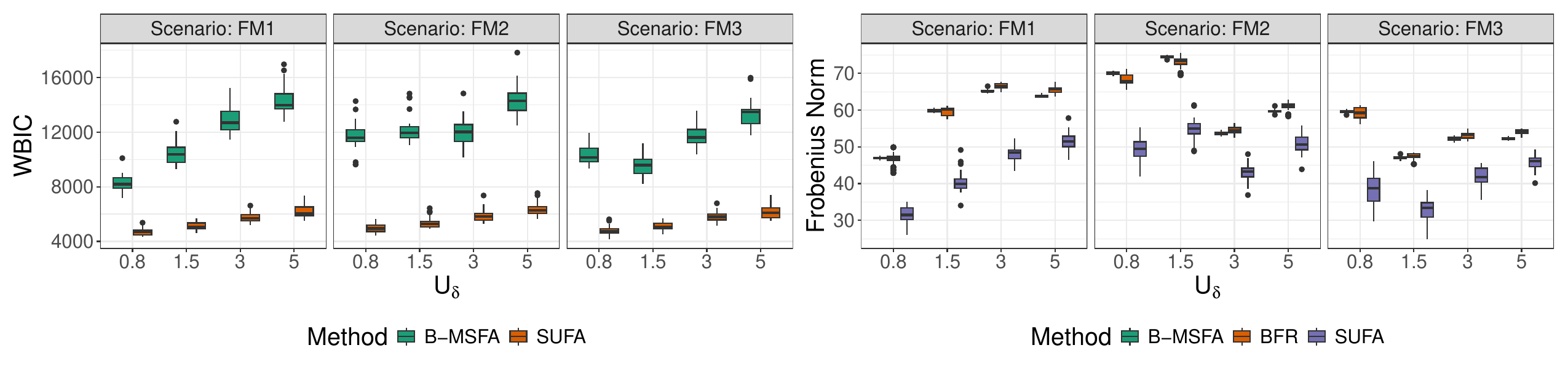}
    \caption{Comparing SUFA with B-MSFA and BFR across different simulation scenarios (FM1, FM2 and FM3): The left and right panels show the boxplots of the WBIC values and the boxplots of Frobenius norms between the true and estimated shared covariance $\bSigma$, respectively, for increasing misspecification from the SUFA model.}
    \label{sm fig:varying_resid}
\end{figure}

Figure \ref{sm fig:varying_resid} clearly shows that overall model fit and recovery of the shared covariance structure is evidently better for SUFA compared to B-MSFA and BFR.
Since the current BFR implementation provided by the authors only provides a MAP estimate and not MCMC samples, we could not estimate WBIC values for BFR.


\subsection{\changen{SUbspace Factor Analysis with Many Studies}}
\label{sm subsec:many_studies}
Note that any $d\times d$ positive definite matrix $\bSigma$ admits the decomposition   $\bSigma=\bLambda\bLambda\trans+\bDelta$ for some $d\times q $ matrix $\bLambda$ with $0\leq q \leq d$.
Letting $\bSigmas$ be the marginal covariance matrix in study $s$, we further assume that  the $\bSigmas$'s admit the low-rank plus diagonal decomposition $\bSigmas=\bLambdas\bLambdas\trans+\bDelta$ for $d\times \tilqs$ matrix $\bLambdas$ with $0\leq \tilqs \leq d$ for all $s=1,\dots,S$.
In the multi-study setting, most of the existing literature borrows information across studies by assuming $\cspace{\bLambdas}=\cspace{\bLambda}\cup \cspace{\tLambdas}$ such that $\bigcap_{s=1}^{S} \cspace{\tLambdas}$.
Heuristically,  $\bLambda$ explains the shared variation while $\tLambdas$'s explain study-specific variations in the sense  $\tLambdas$'s do not have shared column space.
Our SUFA framework assumes $\tLambdas=\bLambda\bAs$.
Theorem \ref{th:no_prior_support} ensures that under the condition $q\geq \sum_{s=1}^{S}\qs>0$, $\bigcap_{s=1}^{S} \cspace{\tLambdas}=\phi$.

When $S$ is large, it may not possible  to  meet the condition from Theorem \ref{th:no_prior_support}.
In that case, identifiability of shared versus study-specific components is not guaranteed.
However, estimation of marginal covariance matrices is possible. 
The SUFA framework allows borrowing of information across studies via the shared subspace.
\begin{figure}[h]
	\includegraphics[width=\linewidth]{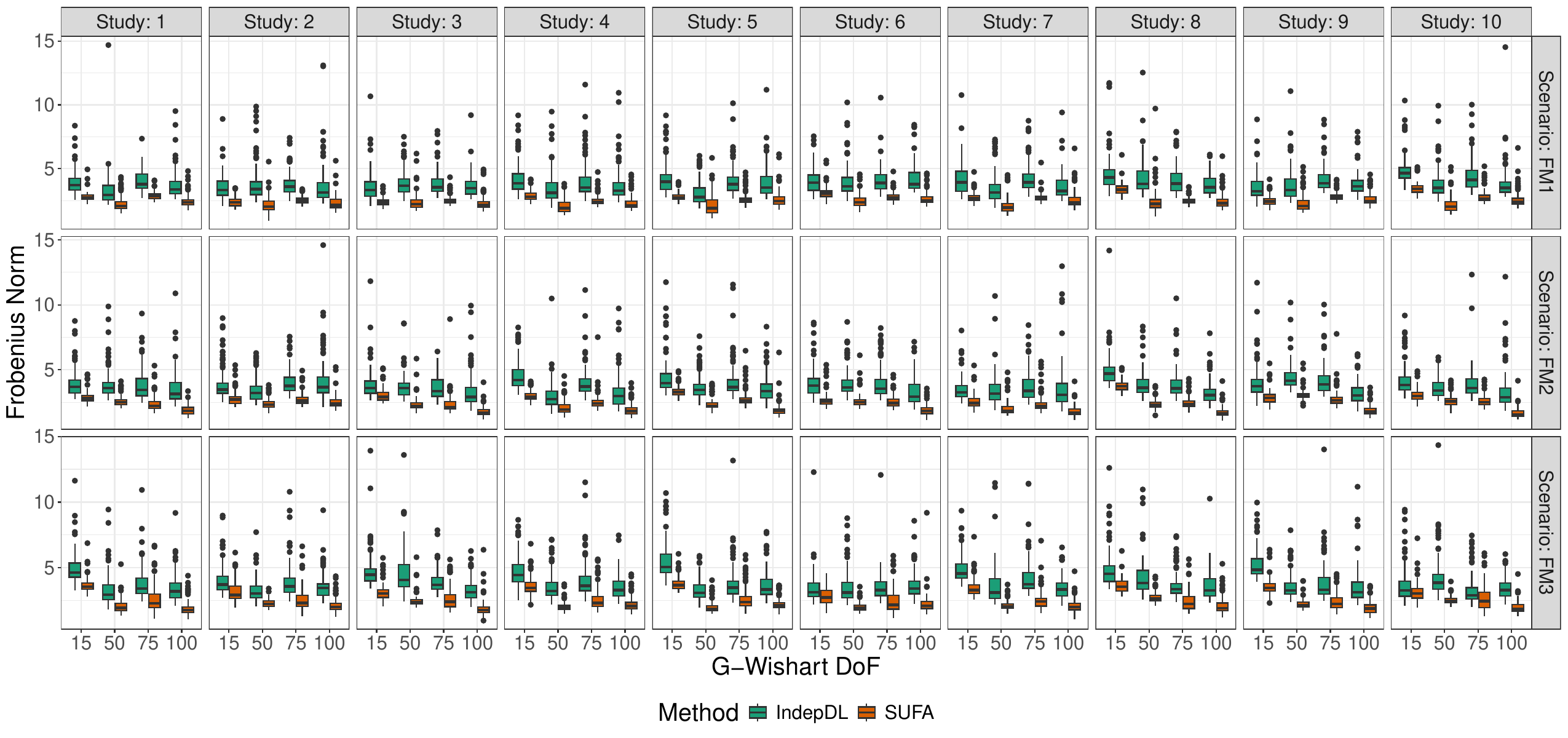}
	\caption{Comparing SUFA with a Bayesian sparse factor analysis approach in estimating marginal covariance matrices of each study: The Frobenius norm between the true and estimated marginal covariance matrices of each study is used as the metric of comparison.
		Increasing DoF implies more similarity across studies in the simulation truth.}
	\label{sm fig:many_studies}
\end{figure}

To see this, we let $\bSigma=\bLambda\bLambda\trans+\bDelta$ and sample $\bSigmas\simiid   \text{G-Wishart}_{\nu}(\bG, \frac{1}{\nu}\bSigma) $ for $s=1,\dots,S$ where  $\nu$, $\bG$, and $\frac{1}{\nu}\bSigma$ are the degrees of freedom (DoF), adjacency and scale matrices of  the G-Wishart distribution, respectively.
We set $\bG$ as the sparsity structure of $\bLambda$.
Then we sample $\bY_{s,1:\ns}\simiid \mn_{d}(\bzero,\bSigmas)$.
Thus the covariance structures of data across studies are not exactly the same but are similar.
We set $d=10$, $q=4$ and $S=10$ so that the condition of Theorem \ref{th:no_prior_support} can not be met.
To induce high-dimensional small sample size scenarios, we set $\ns=5$ in all settings.
We repeat the experiments for the three different structures $\bLambda$ shown in Figure \ref{sm fig:typical_lambda}.
Note that as $\nu$ increases, $\bSigmas$'s will concentrate around $\bSigma$. 
We thus vary $\nu\in \{15,50,75,100\}$ to regulate the extent of similarity across studies.

We use the SUFA model to jointly analyze data across studies and estimate the marginal covariance matrices.
Notably the marginal covariance matrices do not exactly satisfy the SUFA structure in equation \eqref{eq:ourmodel_marginal}.
While fitting the SUFA model on the data, we use the strategy discussed in Section \ref{subsec:specify_q} to obtain $\wh{q}$ but set $\wh{q}_{s}=2$ for all $s$ which violates the condition $\sum_{s}\qs\leq q$ of Theorem \ref{th:no_prior_support}.
We compare SUFA with a standard sparse Bayesian factor analysis model that separately estimates the marginal covariance matrices in each study.
We use the Frobenius norm between the true and estimated marginal covariance matrices of each study as the metric of comparison and in Figure \ref{sm fig:many_studies} report the boxplots of the norms across 100 repeat simulations in each setting.

Although, the identifiability between shared versus study-specific components is not guaranteed, SUFA provides more accurate estimates by borrowing information across related studies in every setting.
This is also coherent with Theorem \ref{sm thm:margibal_var} of the supplementary materials where we showed that the contraction rate $\varepsilon_{n}$ shrinks with increasing number of studies.
We also observe that as the DoF increases, the SUFA estimates tend to get more accurate.

\section{Supplementary to Section \ref{sec:application}}
\subsection{Heatmaps of the Gene Expressions}
\begin{figure}[H]
\centering
\includegraphics[width=\linewidth]{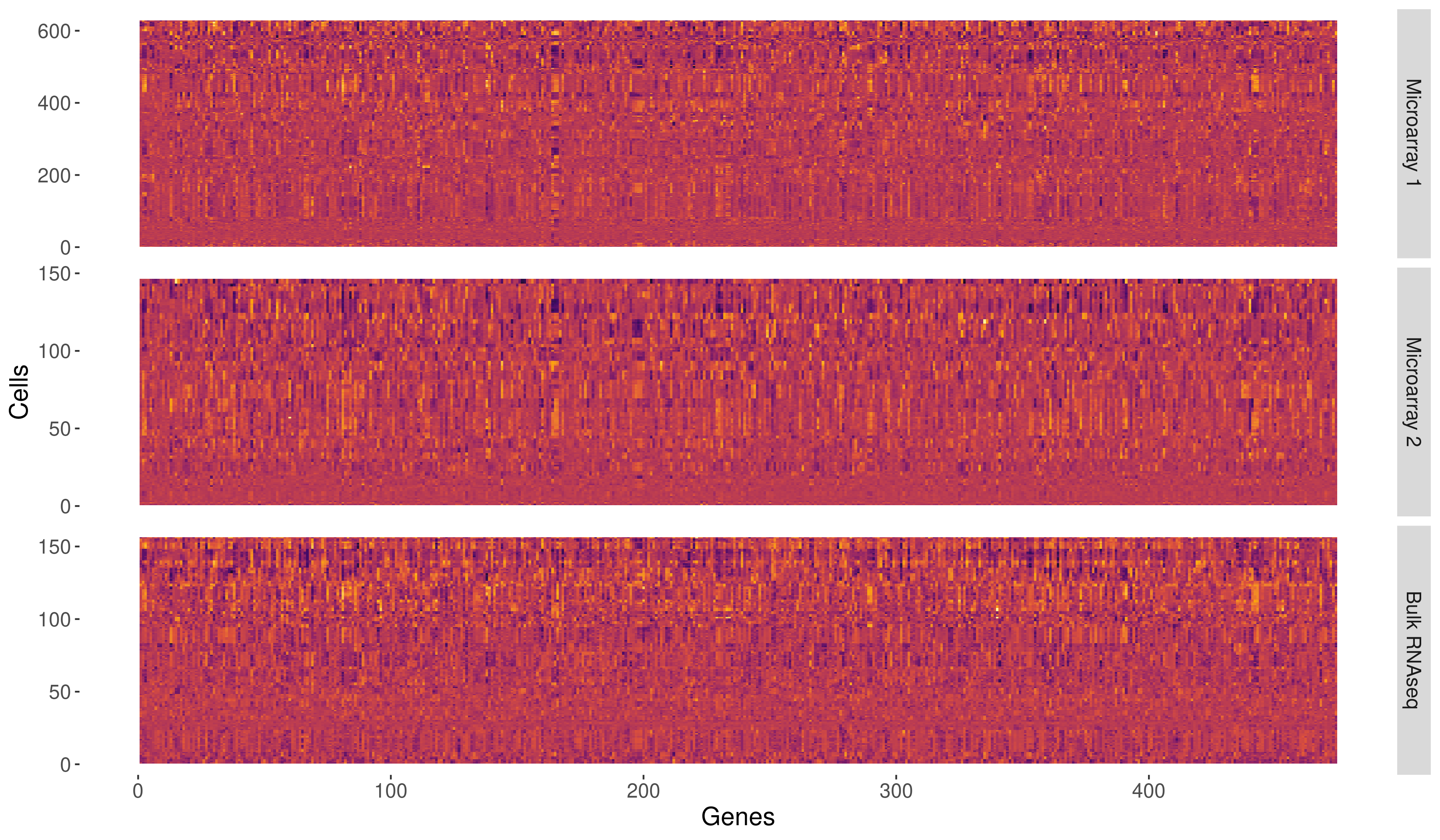}
\caption{Heatmaps of two microarray and a bulk RNAseq dataset on a common set of genes. 
The X and Y axes represent genes and cells across studies, respectively. } 
\label{fig:heatmap_originaldata}
\end{figure}

\subsection{Pre-processing the Data}
\label{sm subsec:data_preprocess}
The normalized dataset has more than $21,000$ gene expressions from $628$ and $146$ immune cells in the two microarray datasets, respectively, and more than $49,000$ genes from $156$ cells in the bulk RNAseq dataset. 
We made a $\log_{2}$ transformation of the data and filtered the top $5\%$ genes with highest variances using the \texttt{genefilter R} package \citeplatex{genefilter} from each of the datasets and considered the intersection of the filtered genes in our analysis, 
resulting in a $d=474$ dimensional problem.  
Since different cell-types exhibit very different gene expression profiles, we centered the gene-expressions separately within each cell type.

\subsection{Discussion on the Gene Network}
\label{sm subsec:networks}
In addition to corroborating qualitative results in the prior literature with a principled statistical analysis, our study also reveals new insights via integrating the data. 
Bcl2a1b and Bcl2a1d genes--two functional isoforms of the B cell leukemia 2 family member \citeplatex{bcl2} --  show strong positive correlations.
We find strong positive associations between genes Bub1, Ccna2, Ccnb2, Top2a, which corroborates findings from a study by \citetlatex{tcellleukemia} on adult human T cell leukemia. 
Interestingly, we find this group of genes strongly negatively correlated with those in the Gimap family, which are involved in lymphocyte development and play important roles in immune system homeostasis.
On the other hand, the analysis reveals that the Gimap class features positive within-group associations, supporting
recent studies that suggest  Gimap proteins may interact with each other in roles such as moving cellular cargo along the cytoskeletal network \citeplatex{gimap}.
Up-regulation of the Ccr2 gene has been found to be associated with cancer advancement, metastasis and relapse \citeplatex{ccr2}, 
whereas Ccr5 inhibitors exhibit negative association with lung metastasis of human breast cancer cell lines \citeplatex{ccr5}.
Coherent with these studies, we find strong positive correlation between Ccr2 and Ccr5.
We also observe Ccr2 to be highly positively correlated with Il18rap, supported by  genome-wide association studies identifying their susceptibility with coeliac diseases \citeplatex{il18rap}.
Additionally we see strong positive association between the structurally related genes Il18rap and Il18r1 within the Il18 receptor complex \citeplatex{il18}.
Interestingly, strong negative correlation is observed between the Il18 family of genes and the gene Cd81, which is required for multiple normal physiological functions \citeplatex{cd81}.

\subsection{Interpretation of Factor Loading Matrices}
\label{sm subsec:loading_interpretation}
Note that genes Il18r1,
Il18rap,
Klrk1,
Ccl5,
Slamf7,
Tbx21,
Ccr2,
Ccr5, etc.
located in the bottom of Figure \ref{subfig:all_loads} have negative loadings corresponding to the first latent factor in the shared space as well as in the bulkRNASeq space but negative loadings for Microarray 2.
To interpret the loading values we primarily focus on the two bottom-most genes Il18rap and Il18r1 with 
$\lambda_{g,h}$ and $\lambda_{g',h}$ being their respective shared loadings corresponding to factor $h$.
From Figure \ref{subfig:all_loads} we see that $\lambda_{g,h},\lambda_{g',h}\approx 0$ for $h>1$.
As covariances in factor analysis models are entirely characterized by the dot-product between the row-vectors of $\bLambda$,
 for the sake of argument we assume that the shared covariance between the two aforementioned genes is $\cov_{g,g'}\approx \lambda_{g,1} \lambda_{g',1}$.
For characterizing $\cov_{g,g'}$ the relative signs of $\lambda_{g,1}$ and $\lambda_{g',1}$ are meaningful, i.e.,  $\sgn(\lambda_{g,1})=\sgn(\lambda_{g',1})$ implies $\cov_{g,g'}>0$.
Accordingly we observe strong positive correlation between Il18rap and Il18r1 in the shared covariance matrix.

Now let $\lambda'_{g,1}$ and $\lambda'_{g',1}$ be the loadings of  Il18rap and Il18r1 respectively, corresponding to bulkRNASeq-specific factor $1$.
From Figure \ref{subfig:all_loads} we see that $\sgn(\lambda'_{g,1})=-\sgn(\lambda_{g,1})$ and $\sgn(\lambda'_{g',1})=-\sgn(\lambda_{g',1})$, i.e., 
opposite signs in the shared and study-specific loadings.
However, we have $\sgn(\lambda_{g,1}\lambda_{g',1})=\sgn(\lambda'_{g,1}\lambda'_{g',1})$ implying that the two genes exhibit positive correlation in the bulkRNASeq-specific covariace part as well.

Similar phenomena can also be observed between genes Fosb,
Nr4a1,
Egr1,
Dusp1,
Hspa1a, etc. located towards the top of Figure \ref{subfig:all_loads} that exhibit positive loadings corresponding to latent factor 5 in the shared space but negative loadings corresponding to the bulkRNASeq-specific latent factor 2.
However, the relative signs of the loadings within the shared space are the same implying that they are positively correlated.
The same is observed for the bulkRNASeq-specific covariance as well.
Therefore, we conclude that the signs of the loadings must be interpreted together and not individually. 

The magnitudes of the loadings indicate the absolute value of the respective variances and covariances.
For example the genes  Klrk1 and Ccl5 (third and fourth from the bottom in Figure \ref{subfig:all_loads}) have higher shared loadings compared to most other genes and therefore they are expected to exhibit higher shared covariance and variances.

\baselineskip=14pt
\bibliographystylelatex{natbib}
\bibliographylatex{{main}}
\end{document}